\definecolor{darkred}{RGB}{150,0,0}
\definecolor{darkgreen}{RGB}{0,150,0}
\definecolor{darkblue}{RGB}{0,0,200}
\newtheorem{theorem}{Theorem}[section]
\newtheorem{lemma}[theorem]{Lemma}
\newcommand{\opnorm}[1]{\left\|#1\right\|}
\newcommand{\onenorm}[1]{\left\|#1\right\|_{\ell_1}}
\newcommand{\twonorm}[1]{\left\|#1\right\|_{\ell_2}}
\newcommand{\infnorm}[1]{\left\|#1\right\|_{\ell_\infty}}
\newcommand{\abs}[1]{\left|#1\right|}
\newcommand{\R}{\mathbb{R}}
\newcommand{\<}{\langle}
\renewcommand{\>}{\rangle}
\newcommand{\Var}{\textrm{Var}}
\newcommand{\sgn}[1]{\textrm{sgn}(#1)}
\renewcommand{\P}{\operatorname{\mathbb{P}}}
\newcommand{\E}{\operatorname{\mathbb{E}}}
\renewcommand{\i}{\imath}
\newcommand{\supp}[1]{\operatorname{supp}(#1)}
\newcommand{\Id}{\text{\em I}}
\numberwithin{equation}{section}
\def \endprf{\hfill {\vrule height6pt width6pt depth0pt}\medskip}
\newenvironment{proof}{\noindent {\bf Proof} }{\endprf\par}
\title{A Probabilistic and RIPless Theory of Compressed Sensing}
\author{Emmanuel J. Cand\`es$^1$ and Yaniv Plan$^2$\\
  \vspace{-.1cm}\\
  $^1$Departments of Mathematics and of Statistics, Stanford
  University, Stanford,
  CA 94305\\
  \vspace{-.3cm}\\
  $^2$Applied and Computational Mathematics, Caltech, Pasadena, CA
  91125}
\date{November 2010}
\begin{document}

\vspace{-0.3in}

\maketitle
\begin{abstract}
  This paper introduces a simple and very general theory of
  compressive sensing. In this theory, the sensing mechanism simply
  selects sensing vectors independently at random from a probability
  distribution $F$; it includes all models --- e.g.~Gaussian,
  frequency measurements --- discussed in the literature, but also
  provides a framework for new measurement strategies as well. We
  prove that if the probability distribution $F$ obeys a simple
  incoherence property and an isotropy property, one can faithfully
  recover approximately sparse signals from a minimal number of noisy
  measurements. The novelty is that our recovery results do not
  require the restricted isometry property (RIP) --- they make use of a
  much weaker notion --- or a random model for the signal.  As an
  example, the paper shows that a signal with $s$ nonzero entries can
  be faithfully recovered from about $s \log n$ Fourier coefficients
  that are contaminated with noise.
%
≈ƒ  
\end{abstract}

{\bf Keywords.} Compressed sensing, $\ell_1$ minimization, the LASSO,
the Dantzig selector, (weak) restricted isometries, random matrices,
sparse regression, operator Bernstein inequalities, Gross' golfing
scheme.

\bigskip

\begin{center}
{\em Dedicated to the memory of Jerrold E. Marsden.}
\end{center}

\section{Introduction}
\label{sec:intro}

This paper develops a novel, simple and general theory of compressive
sensing \cite{CRT06,CT06,donoho2006compressed}, a rapidly growing
field of research that has developed protocols for acquiring certain
types of signals with far fewer data bits than what is classically
accepted.

\subsection{A RIPless theory?}

The early paper \cite{CRT06} triggered a massive amount of research by
showing that it is possible to sample signals at a rate proportional
to their information content rather than their bandwidth. For
instance, in a discrete setting, this theory asserts that a digital
signal $x \in \R^n$ (which can be viewed as Nyquist samples of a
continuous-time signal over a time window of interest) can be
recovered from a small random sample of its Fourier coefficients
provided that $x$ is sufficiently sparse. Formally, suppose that our
signal $x$ has at most $s$ nonzero amplitudes at completely unknown
locations --- such a signal is called {\em $s$-sparse} --- and that we are
given the value of its discrete Fourier transform (DFT) at $m$
frequencies selected uniformly at random (we think of $m$ as being
much smaller than $n$). Then \cite{CRT06} showed
that one can recover $x$ by solving an optimization problem
which simply finds, among all candidate signals, that with the
minimum $\ell_1$ norm; the number of samples we need must be on the
order of $s \log n$. In other words, if we think of $s$ as a measure
of the information content, we can sample {\em nonadaptively} nearly
at the information rate without information loss. By swapping time and
frequency, this also says that signals occupying a very large
bandwidth but with a sparse spectrum can be sampled (at random time
locations) at a rate far below the Shannon-Nyquist rate.

Despite considerable progress in the field, some important questions
are still open. We discuss two that have both a theoretical and
practical appeal.  
\begin{quote}
  {\em Is it possible to faithfully recover a nearly sparse signal $x
    \in \R^n$, one which is well approximated by its $s$ largest
    entries, from about $s \log n$ of its Fourier coefficients? Is it
    still possible when these coefficients are further corrupted by
    noise?}
\end{quote}
These issues are paramount since in real-world applications, signals
are never exactly sparse, and measurements are never perfect either. Now the traditional way of addressing these types of problems
in the field is by means of the {\em restricted isometry property}
(RIP) \cite{CT05}. The trouble here is that it is unknown whether or
not this property holds when the sample size $m$ is on the order of $s
\log n$. In fact, answering this one way or the other is generally
regarded as extremely difficult, and so the restricted isometry
machinery does not directly apply in this setting.

This paper proves that the two questions formulated above have
positive answers. In fact, we introduce recovery results which
are --- up to a logarithmic factor --- as good as those one would get if
the restricted isometry property were known to be true. To fix ideas,
suppose we observe $m$ noisy discrete Fourier coefficients about an 
$s$-sparse signal $x$,
\begin{equation}
  \label{eq:fourier}
  \tilde{y}_k = \sum_{t = 0}^{n-1} \, e^{-\i 2\pi \omega_k t} x[t] + \sigma z_k,
  \quad k = 1, \ldots, m.  
\end{equation}
Here, the frequencies $\omega_k$ are chosen uniformly at random in
$\{0, 1/n, 2/n, \ldots, (n-1)/n\}$ and $z_k$ is white noise with unit
variance. Then if the number of samples $m$ is on the order of $s \log
n$, it is possible to get an estimate $\hat x$ obeying
\begin{equation}
\label{eq:polylog}
\|\hat x - x\|_{\ell_2}^2 = \text{polylog}(n) \, \frac{s}{m} \sigma^2 
\end{equation}
by solving a convex $\ell_1$-minimization program. (Note that when the
noise vanishes, the recovery is exact.)  Up to the logarithmic factor,
which may sometimes be on the order of $\log n$ and at most a small
power of this quantity, this is optimal.  Now if the RIP held, one
would get a squared error bounded by $O(\log n) \frac{s}{m} \sigma^2$
\cite{DS, bickel07} and, therefore, the `RIPless' theory developed in
this paper roughly enjoys the same performance guarantees.

\subsection{A general theory}

The estimate we have just seen is not isolated and the real purpose of
this paper is to develop a theory of compressive sensing which is both
as simple and as general as possible. 

At the heart of compressive sensing is the idea that randomness can be
used as an effective sensing mechanism. We note that random
measurements are not only crucial in the derivation of many
theoretical results, but also generally seem to give better empirical
results as well.  Therefore, we propose a mechanism whereby sensing
vectors are independently sampled from a population
$F$. Mathematically, we observe
 \begin{equation}
  \label{eq:model}
  \tilde{y}_k = \<a_k, x\> +  \sigma z_k, \quad k = 1, \ldots, m,  
\end{equation}
where $x \in \R^n$, $\{z_k\}$ is a noise sequence, and the sensing
vectors $a_k \stackrel{\mathrm{iid}}{\sim} F$. For example, if $F$ is
the family of complex sinusoids, this is the Fourier sampling model
introduced earlier. All we require from $F$ is an isotropy property
and an incoherence property.
\begin{description}
\item {\bf Isotropy property:} We say that $F$ obeys the isotropy
  property if 
  \begin{equation}
    \label{eq:isotropy}
    \E aa^* = \Id, \quad a \sim F.
  \end{equation} 
  If $F$ has mean zero (we do not require this), then $\E a a^*$ is
  the covariance matrix of $F$. In other words, the isotropy condition
  states that the components of $a \sim F$ have unit variance and are
  uncorrelated. This assumption may be weakened a little, as we shall
  see later. 
\end{description}

\begin{description}
\item {\bf Incoherence property:} We may take the coherence parameter
  $\mu(F)$ to be the smallest number such that with $a = (a[1],
  \ldots, a[n]) \sim F$, 
\begin{equation}
\label{eq:coh}
  \max_{1 \leq t \leq n} \abs{a[t]}^2 \leq \mu(F) 
\end{equation}
holds either deterministically or stochastically in the sense
discussed below.  The smaller $\mu(F)$, i.e.~the more incoherent the
sensing vectors, the fewer samples we need for accurate recovery. When
a simple deterministic bound is not available, one can take the
smallest scalar $\mu$ obeying
\begin{equation}
  \label{eq:stochastic_incoherence}
  \E [n^{-1} \twonorm{a}^2 \, \mathbb{1}_{E^c}] \le \tfrac{1}{20} n^{-3/2}  \qquad \text{and} \qquad \P(E^c) \leq (n m)^{-1},  
\end{equation}
where $E$ is the event $\{\max_{1 \leq t \leq n} \abs{a[t]}^2 >
\mu\}$. 
\end{description}

Suppose for instance that the components are
i.i.d.~$\mathcal{N}(0,1)$. Then a simple calculation we shall not
detail shows that
\begin{align}
  \E [n^{-1} \twonorm{a}^2 \, \mathbb{1}_{E^c}] & \le 2n\P(Z > \sqrt{\mu}) + 2\sqrt{\mu} \phi(\sqrt{\mu}),\\
  \P(E^c) & \leq 2 n \P(Z \geq \sqrt{\mu})\nonumber,
\end{align}
where $Z$ is standard normal and $\phi$ is its density function. The
inequality $P(Z > t) \le \phi(t)/t$ shows that one can take $\mu(F)
\le 6\log n$ as long as $n \geq 16$ and $m \leq n$. More
generally, if the components of $a$ are i.i.d.~samples from a
sub-Gaussian distribution, $\mu(F)$ is at most a constant times $\log
n$. If they are i.i.d.~from a sub-exponential distribution, $\mu(F)$
is at most a constant times $\log^2 n$. In what follows, however, it
might be convenient for the reader to assume that the deterministic
bound \eqref{eq:coh} holds.


It follows from the isotropy property that $\E \abs{a[t]}^2 = 1$, and
thus $\mu(F) \ge 1$.  This lower bound is achievable by several
distributions and one such example is obtained by sampling a row from
the DFT matrix as before, so that
\[
a[t] = e^{\i 2\pi k t/n}, 
\]
where $k$ is chosen uniformly at random in $\{0, 1, \ldots,
n-1\}$. Then another simple calculation shows that $\E a a^* = \Id$ and
$\mu(F) = 1$ since $\abs{a[t]}^2 = 1$ for all $t$. At the other
extreme, suppose the measurement process reveals one entry of $x$
selected uniformly at random so that $a = \sqrt{n} \, e_i$ where $i$
is uniform in $\{1, \ldots, n\}$; the normalization ensures that $\E
aa^* = \Id$.  This is a lousy acquisition protocol because one would
need to sample on the order of $n \log n$ times to recover even a
1-sparse vector (the logarithmic term comes from the coupon collector
effect). Not surprisingly, this distribution is in fact highly coherent as $\mu(F) = n$.

With the assumptions set, we now give a representative result of this paper: suppose $x$ is an
arbitrary but fixed $s$-sparse vector and that one collects
information about this signal by means of the random sensing mechanism
\eqref{eq:model}, where $z$ is white noise. Then if the number of
samples is on the order $\mu(F) s \log n$, one can invoke $\ell_1$
minimization to get an estimator $\hat x$ obeying
\[
\|\hat x - x\|_{\ell_2}^2 \le \text{polylog}(n) \, \frac{s}{m} \sigma^2.
\]
This bound is sharp. It is not possible to substantially reduce the
number of measurements and get a similar bound, no matter how
intractable the recovery method might be. Further, with this many
measurements, the upper bound is optimal up to logarithmic
factors. Finally, we will see that when the signal is not exactly
sparse, we just need to add an approximation error to the upper
bound. 

To summarize, this paper proves that one can faithfully recover
approximately $s$-sparse signals from about $s \log n$ random
incoherent measurements for which $\mu(F) = O(1)$.  

\subsection{Examples of incoherent measurements}

We have seen through examples that sensing vectors with low coherence
are global or spread out.  Incoherence alone, however, is not a
sufficient condition: if $F$ were a constant distribution (sampling
from $F$ would always return the same vector), one would not learn
anything new about the signal by taking more samples regardless of the
level of incoherence.  However, as we will see, the incoherence and
isotropy properties together guarantee that sparse vectors lie away
from the nullspace of the sensing matrix whose rows are the $a_k^*$'s.

The role of the isotropy condition is to keep the measurement matrix
from being rank deficient when sufficiently many measurements are
taken (and similarly for subsets of columns of $A$).  Specifically,
one would hope to be able to recover \textit{any} signal from an
arbitrarily large number of measurements.  However, if $\E a a^*$ were
rank deficient, there would be signals $x \in \R^n$ that would not be
recoverable from an arbitrary number of samples; just take $x \neq 0$
in the nullspace of $\E a a^*$.  The nonnegative random variable $x^*
a a^* x$ has vanishing expectation, which implies $a^* x = 0$ almost
surely. (Put differently, all of the measurements would be zero almost
surely.)  In contrast, the isotropy condition implies that
$\frac{1}{m} \sum_{k = 1}^m a_k a_k^* \rightarrow \Id$ almost surely
as $m \rightarrow \infty$ and, therefore, with enough measurements,
the sensing matrix is well conditioned and has a
left-inverse.\footnote{
  One could require `near isotropy,' i.e. $\E a a^* \approx \Id$.  If
  the approximation were tight enough, our theoretical results would
  still follow with minimal changes to the proof.}


We now provide examples of incoherent and isotropic measurements. 

\begin{itemize}
\item {\bf Sensing vectors with independent components.} Suppose the
  components of $a \sim F$ are independently distributed with mean
  zero and unit variance. Then $F$ is isotropic. In addition, if the
  distribution of each component is light-tailed, then the
  measurements are clearly incoherent.

  A special case concerns the case where $a \sim N(0, \Id)$, also
  known in the field as the {\em Gaussian measurement ensemble}, which
  is perhaps the most commonly studied. Here, one can take $\mu(F)
    = 6\log n$ as seen before.

  Another special case is the {\em binary measurement ensemble} where
  the entries of $a$ are symmetric Bernoulli variables taking on the
  values $\pm 1$. A shifted version of this distribution is the
  sensing mechanism underlying the single pixel camera
  \cite{duarte08}.

\item {\bf Subsampled orthogonal transforms:} Suppose we have an
  orthogonal matrix obeying $U^* U = n \, \Id$. Then consider the
  sampling mechanism picking rows of $U$ uniformly and independently
  at random. In the case where $U$ is the DFT, this is the random
  frequency model introduced earlier. Clearly, this distribution is
  isotropic and $\mu(F) = \max_{ij} |U_{ij}|^2$. In the case where $U$
  is a Hadamard matrix, or a complex Fourier matrix, $\mu(F) = 1$.

\item {\bf Random convolutions:} Consider the circular convolution
  model $y = G x$ in which
\[
 G = \begin{bmatrix} g[0] & g[1] & g[2] & \ldots & g[n-1] \\
g[n-1] & g[0]  & g[1] & \ldots & \\
 &  &  &  & \\
 &  &  &  &\\
g[1] & & \ldots & g[n-1] & g[0]
\end{bmatrix}. 
\]
%
Because a convolution is diagonal in the Fourier domain (we just
multiply the Fourier components of $x$ with those of $g$), $G$ is an
isometry if the Fourier components of $g = (g[0], \ldots, g[n-1])$
have the same magnitude. In this case, sampling a convolution product
at randomly selected time locations is an isotropic and incoherent
process provided $g$ is spread out ($\mu(F) = \max_t
\abs{g(t)}^2$). This example extends to higher dimensions; e.g.~to
spatial 3D convolutions.

\item {\bf Subsampled tight or continuous frames:} We can generalize
  the example above by subsampling a tight frame or even a continuous
  frame. An important example might be the Fourier transform with a
  continuous frequency spectrum.  Here,
\[
a(t) = e^{\i 2\pi \omega t},
\]
where $\omega$ is chosen uniformly at random in $[0,1]$ (instead of
being on an equispaced lattice as before). This distribution is
isotropic and obeys $\mu(F) = 1$. A situation where this arises is in
magnetic resonance imaging (MRI) as frequency samples rarely fall on
an equispaced Nyquist grid. By swapping time and frequency, this is
equivalent to sampling a nearly sparse trigonometric polynomial at
randomly selected time points in the unit interval
\cite{rauhut2007random}.
\end{itemize}
These examples could of course be multiplied, and we hope we have made
clear that our framework is general and encompasses all the
measurement models commonly discussed in compressive sensing --- and
perhaps many more.

\subsection{Matrix notation}

Before continuing, we pause to introduce some useful matrix
notation. Divide both sides of \eqref{eq:model} by $\sqrt{m}$, and
rewrite our statistical model as
\begin{equation}
\label{eq:modelmatrix}
y = Ax + \sigma_m z; 
\end{equation}
the $k$th entry of $y$ is $\tilde{y}_k$ divided by $\sqrt{m}$, the
$k$th row of $A$ is $a_k^*$ divided by $\sqrt{m}$, and $\sigma_m$ is
$\sigma$ divided by $\sqrt{m}$.  This normalization implies that the
columns of $A$ are approximately unit-normed, and is most used in the
compressive sensing literature. 

\subsection{Incoherent sampling theorem}

For pedagogical reasons, we introduce our results by first presenting
a recovery result from noiseless data. The recovered signal is
obtained by the standard $\ell_1$-minimization program
\begin{equation}
\label{eq:el1}
\min_{\bar x \in \R^n}  \,\,\, \onenorm{\bar{x}} \quad \text{subject to} 
\quad A \bar{x} = y. 
\end{equation}
(Recall that the rows of $A$ are normalized independent samples from
$F$.) 

\begin{theorem}[Noiseless incoherent sampling]
\label{teo:noiseless}
Let $x$ be a fixed but otherwise arbitrary $s$-sparse vector in
$\R^n$. Then with probability at least $1 - 5/n -  e^{-\beta}$, $x$
is the unique minimizer to \eqref{eq:el1} with $y = Ax$ provided that
\[
m \ge C_\beta \cdot \mu(F) \cdot s \cdot \log n.
\]
More precisely, $C_\beta$ may be chosen as $C_0 (1+\beta)$ for some
positive numerical constant $C_0$.
\end{theorem}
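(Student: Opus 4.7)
The plan is to follow the standard dual-certificate route for exact $\ell_1$ recovery, constructing the certificate with Gross' golfing scheme (advertised in the keywords). Let $T = \supp{x}$, $s = |T|$, and abbreviate $\mu = \mu(F)$; write $A_T$ for the $m \times s$ submatrix of $A$ indexed by $T$. By convex duality, $x$ is the unique minimizer of \eqref{eq:el1} provided that (i) $A_T$ is injective, which I will strengthen to $\opnorm{A_T^* A_T - \Id} \le \tfrac{1}{2}$, and (ii) there exists $v = A^* w$ with $v_T = \sgn{x_T}$ and $\infnorm{v_{T^c}} < 1$. A standard perturbation argument shows that it is enough to exhibit $v \in \mathrm{range}(A^*)$ with $\twonorm{v_T - \sgn{x_T}} \le \tfrac{1}{4}$ and $\infnorm{v_{T^c}} \le \tfrac{1}{4}$, once (i) is in hand. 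Throughout I would work on the event $\bigcap_k \{\max_t |a_k[t]|^2 \le \mu\}$; in the stochastic case \eqref{eq:stochastic_incoherence}, a union bound over the $m$ samples shows this event has probability at least $1 - 1/n$, and the small-expectation bound there absorbs the complement without damage.

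For step (i), I would write $A_T^* A_T = \tfrac{1}{m} \sum_k (a_k)_T (a_k)_T^*$, so isotropy gives $\E A_T^* A_T = \Id$. An operator Bernstein inequality applied to the centered rank-one summands, together with the almost-sure bound $\twonorm{(a_k)_T}^2 \le \mu s$ and a matching variance estimate, yields $\opnorm{A_T^* A_T - \Id} \le \tfrac{1}{2}$ with probability at least $1 - n^{-(1+\beta)}$ provided $m \ge C\,\mu s \log n$ for a sufficiently large absolute constant.

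For step (ii), the golfing scheme proceeds as follows. Partition the $m$ measurements into $\ell = \lceil c \log n \rceil$ independent batches $A^{(1)}, \dots, A^{(\ell)}$ of equal size, each renormalized so that $\E (A^{(i)})^* A^{(i)} = \Id$. Set $q_0 = \sgn{x_T}$ (extended by zero off $T$) and iterate
\[
v_i = v_{i-1} + (A^{(i)})^* A^{(i)} q_{i-1}, \qquad q_i = \sgn{x_T} - (v_i)_T,
\]
so that $q_i$ equals $(\Id - (A^{(i)})^* A^{(i)})$ applied to $q_{i-1}$ and then restricted to $T$. Independence of $A^{(i)}$ from $q_{i-1}$ is essential: conditionally on the past, $q_{i-1}$ is deterministic, and the concentration bound of step (i) applied to each batch gives $\twonorm{q_i} \le \tfrac{1}{2}\twonorm{q_{i-1}}$ with high probability. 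Iterating $\ell \asymp \log n$ times yields $\twonorm{q_\ell} \le 2^{-\ell}\sqrt{s} \le \tfrac{1}{4}$, so $v_\ell$ approximates $\sgn{x_T}$ on $T$ as required.

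The hard part will be controlling $\infnorm{v_{T^c}}$, and this is where the incoherence parameter does the work. For each batch $i$ and coordinate $t \in T^c$,
\[
(v_i - v_{i-1})[t] = \frac{\ell}{m} \sum_{k \in I_i} a_k[t]\,\langle a_k, q_{i-1}\rangle
\]
is, conditionally on $q_{i-1}$, a sum of independent bounded variables with summand bound of order $(\ell/m)\,\mu\sqrt{s}\,\twonorm{q_{i-1}}$ (from $|a_k[t]|^2 \le \mu$ and $|\langle a_k, q_{i-1}\rangle| \le \sqrt{\mu s}\,\twonorm{q_{i-1}}$) and a matching variance. A scalar Bernstein inequality, union-bounded over $t \in T^c$ and $i \le \ell$, bounds each batch's contribution to $\infnorm{v_{T^c}}$ by a constant times $\sqrt{\mu s \log n/m}\,\twonorm{q_{i-1}}$. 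Summing the resulting geometric series in $i$, starting from $\twonorm{q_0} = \sqrt{s}$, gives $\infnorm{v_{T^c}} \le \tfrac{1}{4}$ provided $m \ge C_\beta\,\mu s \log n$. The remaining and most tedious step is the probabilistic bookkeeping: tallying the failure probabilities of the $\ell$ Bernstein events (for $\infnorm{\cdot}$), the $\ell$ local-isometry events (for the golfing contraction), and the incoherence event, so that they sum to $5/n + e^{-\beta}$, while absorbing the linear $(1+\beta)$ dependence into $C_\beta$.
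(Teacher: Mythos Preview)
Your outline follows the paper's strategy (inexact dual certificate via Gross' golfing), but as written it does not reach $m \asymp \mu s \log n$; it only delivers $m \gtrsim \mu s \log^2 n$. The culprit is the uniform batch schedule. With $\ell$ equal batches of size $m/\ell$ and contraction factor $1/2$ throughout, the off-support Bernstein bound (union over $n$ coordinates, with batch variance $\le (\ell/m)\mu\twonorm{q_{i-1}}^2$) forces each batch to have $m/\ell \gtrsim \mu s \log n$ in order that the geometric sum $\sum_i t_i \twonorm{q_{i-1}}$ starting from $\twonorm{q_0}=\sqrt{s}$ stay below $1/4$. Hence $m \gtrsim \ell\,\mu s \log n$, and with your choice $\ell \asymp \log n$ this is $\mu s \log^2 n$ (even the minimal $\ell \asymp \log s$ leaves $\mu s \log s\,\log n$). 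Relatedly, your stated per-batch bound ``$\sqrt{\mu s \log n/m}\,\twonorm{q_{i-1}}$'' is off: the correct Bernstein deviation is $\sqrt{\mu\,\ell\,\log n/m}\,\twonorm{q_{i-1}}$, and summing your version against $\twonorm{q_0}=\sqrt{s}$ would actually force $m \gtrsim \mu s^2 \log n$.

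The paper removes the extra logarithm by making the golfing \emph{non-uniform}. The first two batches are large, each of size $\asymp (1+\beta)\mu s \log n$, and use the sharp contraction $c_1 = c_2 = 1/(2\sqrt{\log n})$ together with the tight off-support target $t_1 = t_2 = 1/(8\sqrt{s})$; these two steps already drive $\twonorm{q_2} \le \sqrt{s}/(4\log n)$ and account for the $e^{-\beta}$ portion of the failure probability. The remaining $\ell-2 = O(\log s)$ steps use small batches of size $\asymp (1+\beta)\mu s$ with the lax parameters $c_i = 1/2$ and $t_i = (\log n)/(8\sqrt{s})$; the extra $\log n$ in $t_i$ is harmless because it multiplies the already tiny $\twonorm{q_{i-1}}$. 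Crucially, these small batches are only required to succeed with constant probability ($p_1(i)+p_2(i)\le 1/3$), so the paper oversamples $\ell' \asymp \log n$ of them and discards the failures; a binomial tail bound then guarantees at least $\ell-2$ successes with probability $\ge 1-1/n$. The total budget is two batches of size $\mu s \log n$ plus $O(\log n)$ batches of size $\mu s$, i.e.\ $m \asymp (1+\beta)\mu s \log n$ as claimed.

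One smaller omission: your inexact-duality reduction needs, in addition to $\opnorm{A_T^*A_T - \Id}\le 1/2$, the uniform off-support column bound $\max_{i\in T^c}\twonorm{A_T^* A_{\{i\}}} \le 1$ (the paper's estimate \textbf{E4}); without it you cannot pass from $\twonorm{h_T}$ to $\onenorm{h_{T^c}}$ in the uniqueness argument.
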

Among other things, this theorem states that one can perfectly recover
an arbitrary sparse signal from about $s \log n$ convolution samples,
or a signal that happens to be sparse in the wavelet domain from about
$s \log n$ randomly selected noiselet coefficients. It extends an
earlier result \cite{CR07}, which assumed a subsampled orthogonal
model, and strengthens it since that reference could only prove the
claim for randomly signed vectors $x$. Here, $x$ is arbitrary, and we
do not make any distributional assumption about its support or its
sign pattern.

This theorem is also about a fundamental information theoretic
limit: the number of samples for perfect recovery has to be on the
order of $\mu(F) \cdot s \cdot \log n$, and cannot possibly be much
below this number.  More precisely, suppose we are given a
distribution $F$ with coherence parameter $\mu(F)$.  Then there exist
$s$-sparse vectors that cannot be recovered with probability at least
$1-1/n$, say, from fewer than a constant times $\mu(F) \cdot s \cdot
\log n$ samples. When $\mu(F) = 1$, this has been already established
since \cite{CRT06} proves that some $s$ sparse signals cannot be
recovered from fewer than a constant times $s \cdot \log n$ random DFT
samples. Our general claim follows from a modification of the argument
in \cite{CRT06}. Assume, without loss of generality, that $\mu(F)$ is an
integer and consider the isotropic process that samples rows from an
$n \times n$ block diagonal matrix, each block being a DFT of a
smaller size; that is, of size $n/\ell$ where $\mu(F) = \ell$. Then if
$m \le c_0 \cdot \mu(F) \cdot s \cdot \log n$, one can construct
$s$-sparse signals just as in \cite{CRT06} for which $Ax = 0$ with
probability at least $1/n$. We omit the details.

The important aspect, here, is the role played by the coherence
parameter $\mu(F)$. In general, the minimal number of samples must be
on the order of the coherence times the sparsity level $s$ times a
logarithmic factor. Put differently, {\em the coherence completely
  determines the minimal sampling rate}.

\subsection{Main results}
\label{sec:main-result}

We assume for simplicity that we are undersampling so that $m \leq n$.
Our general result deals with 1) arbitrary signals which are not
necessarily sparse (images are never exactly sparse even in a
transformed domain) and 2) noise.
To recover $x$ from the data $y$ and the model \eqref{eq:modelmatrix}, we
consider the unconstrained LASSO which solves the $\ell_1$ regularized
least-squares problem
\begin{equation}
\label{eq:Lasso}
\min_{\bar x \in \R^n} \, \, \, \tfrac{1}{2} \twonorm{A \bar{x} - y}^2 + 
\lambda \sigma_m \onenorm{\bar{x}}. 
\end{equation}
We assume that $z$ is Gaussian $z \sim N(0, \Id)$. However, the
theorem below remains valid as long as $\infnorm{A^*z} \leq \lambda_n$
for some $\lambda_n \ge 0$, and thus many other noise models would
work as well. In what follows, $x_s$ is the best $s$-sparse
approximation of $x$ or, equivalently, the vector consisting of the
$s$ largest entries of $x$ in magnitude.

\begin{theorem}
\label{teo:noisy}
Let $x$ be an arbitrary fixed vector in $\R^n$. Then with probability
at least $1 - 6/n - 6 e^{-\beta}$ the solution to \eqref{eq:Lasso}
with $\lambda = 10 \sqrt{\log n}$ obeys
\begin{equation}
\label{eq:twoBound}
\twonorm{\hat{x} - x}  \leq \min_{1 \leq s \leq \bar{s}} C (1 + \alpha) \,  \left[\frac{\onenorm{x - x_s}}{\sqrt{s}} + \sigma \sqrt{\frac{s\log n}{m}}\right] 
\end{equation}
provided that $m \ge C_\beta \cdot \mu(F) \cdot \bar s \cdot \log n$.
If one measures the error in the $\ell_1$ norm, then
\begin{equation}
\label{eq:oneBound}
\onenorm{\hat{x} - x} \leq \min_{1 \leq s \leq \bar{s}} C (1 + \alpha) \left[\onenorm{x - x_s} + s \sigma \sqrt{\frac{\log n}{m}}\right]. 
\end{equation}
Above, $C$ is a numerical constant, $C_\beta$ can be chosen as before,
and $\alpha = \sqrt{\frac{(1 + \beta) s \log^5 n}{m}}$ which is never
greater than $\log^2 n$ in this setup.
\end{theorem}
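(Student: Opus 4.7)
My plan is a primal--dual argument that reads the LASSO optimality conditions against an inexact dual certificate, entirely bypassing RIP. Let $T$ index the support of $x_s$ and write $h = \hat x - x$. The KKT condition for \eqref{eq:Lasso} gives $A^{*}(A\hat x - y) = -\lambda \sigma_m g$ for some $g \in \partial \onenorm{\hat x}$, which after substituting $y = Ax + \sigma_m z$ becomes $A^{*}A h = -\sigma_m A^{*} z - \lambda \sigma_m g$. From here the goal is to extract a cone-type inequality $\onenorm{h_{T^c}} \le c_1 \onenorm{h_T} + c_2 \onenorm{x - x_s} + c_3 s \sigma \sqrt{\log n / m}$ and then invert $A_T^{*} A_T$ on $T$ to recover $\twonorm{h}$.

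Two probabilistic ingredients are needed. First, a local isometry on $T$, namely $\opnorm{A_T^{*} A_T - \Id_T} \le 1/2$ with high probability, obtained from a matrix Bernstein bound applied to the rank-one sum $\tfrac{1}{m}\sum_k a_{k,T} a_{k,T}^{*}$, where incoherence controls each summand in operator norm by $\mu(F) s / m$. If only stochastic incoherence \eqref{eq:stochastic_incoherence} is available, I would first truncate on the event $E$ and absorb the tail into a $1/(nm)$ probability loss. Second, the choice $\lambda = 10 \sqrt{\log n}$ is calibrated so that a Gaussian tail bound yields $\infnorm{A^{*} z} \le \lambda/2$ with probability $1 - 1/n$, since each coordinate of $A^{*} z$ is a centered Gaussian with variance at most $1/m$ conditional on $A$.

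The heart of the proof is constructing an inexact dual certificate $v = A^{*} u$ with $\twonorm{P_T v - \sgn{x_T}}$ exponentially small and $\infnorm{v_{T^c}} \le 1/4$. Gross's golfing scheme fits naturally: partition the $m$ rows into $\ell \asymp \log n$ independent batches $A^{(1)}, \ldots, A^{(\ell)}$ and iterate
\[
v^{(i)} = v^{(i-1)} + A^{(i)*} A^{(i)} P_T \bigl(\sgn{x_T} - P_T v^{(i-1)}\bigr), \qquad v^{(0)} = 0.
\]
A local isometry on each batch (requiring $m/\ell \gtrsim \mu(F) s \log n$, hence the extra $\log n$) contracts the on-support residual $q^{(i)} = \sgn{x_T} - P_T v^{(i)}$ geometrically, while a vector Bernstein inequality, exploiting incoherence, bounds the incremental growth of $\infnorm{v^{(i)}_{T^c}}$ in terms of $\twonorm{q^{(i-1)}}$. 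The delicate point is that off-support $\ell_\infty$ control is not provided by any RIP: it must be extracted from $\mu(F)$ together with the independence across batches, and this is where care is required, since each update introduces correlations between $v^{(i)}$ and the very rows used to construct it.

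With both ingredients in hand, pairing the KKT identity with $h$ and with the certificate gives the advertised cone inequality; the local isometry converts this into the $\ell_2$ bound \eqref{eq:twoBound}, and Cauchy--Schwarz on $T$ combined with the cone inequality on $T^c$ yields \eqref{eq:oneBound}. The $\log^5 n$ factor in $\alpha$ tracks the $\log n$ iterations of the golfing scheme, their concentration slacks, and a union bound over $s \le \bar s$.
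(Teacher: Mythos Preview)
Your plan captures the dual-certificate half of the argument correctly---the golfing construction, the tube constraint $\infnorm{A^*Ah}\le \tfrac{5}{4}\lambda$, and the resulting $\ell_1$ control $\onenorm{h_{T^c}}\le C(s\lambda + \onenorm{x_{T^c}})$ are essentially what the paper does. But there is a genuine gap in the last step: local isometry on the \emph{fixed} support $T$ alone cannot convert a bound on $\onenorm{h_{T^c}}$ into a bound on $\twonorm{h_{T^c}}$, and hence cannot deliver \eqref{eq:twoBound}. The error $h_{T^c}$ lives on an unknown, possibly dense, set, and no amount of conditioning on $T$ helps you there. What the paper actually uses is a \emph{weak RIP} (its Theorem~\ref{teo:weakRIP}): for a fixed $T$ and \emph{every} set $R$ with $|R|\le r\asymp m/(\mu(F)\log^5 n)$, the matrix $A_{T\cup R}$ is a near-isometry. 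With this in hand one runs the usual shell-sorting: partition $T^c$ into blocks $T_1,T_2,\ldots$ of size $r$ by decreasing magnitude, bound $\twonorm{h_{T\cup T_1}}$ via the weak RIP, and control the tail by $\sum_{j\ge 2}\twonorm{h_{T_j}}\le r^{-1/2}\onenorm{h_{T^c}}$. The factor $(1+\alpha)$ with $\alpha=\sqrt{s/r}=\sqrt{(1+\beta)s\log^5 n/m}$ comes exactly from this $r^{-1/2}$, not from the golfing iterations or any union bound over $s$ (the minimizing $s$ is deterministic, so no union bound is needed).

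The weak RIP is the technically hardest ingredient of the paper: its proof occupies the appendix and relies on Talagrand's majorizing measure theorem combined with Rudelson--Vershynin chaining, which is where the $\log^5 n$ arises. Your proposal, by declaring the argument ``entirely bypasses RIP'' and invoking only the fixed-support isometry, omits precisely this piece. One smaller point: to get the clean bound $\onenorm{h_{T^c}}\le C(s\lambda+\onenorm{x_{T^c}})$ without a lingering $\onenorm{h_T}$ term, the paper exploits the specific form of the golfing certificate $v=A^*w$ with $\twonorm{w}\le C\sqrt{s}$; this lets $|\langle h,v\rangle|\le C\sqrt{s}\,\twonorm{Ah}$ be absorbed into the $\tfrac12\twonorm{Ah}^2$ appearing in the LASSO optimality inequality. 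That trick is worth retaining.
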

These robust error bounds do not require either (1) a random model on
the signal or (2) the RIP nor one of a few closely related strong
conditions such as the RIP-1 \cite{gilbert2010sparse}, the restricted
eigenvalue assumption \cite{bickel07} or the compatibility condition
\cite{van2009conditions}. The conditions are weak enough that they do
not necessarily imply uniform sparse-signal recovery, but instead they
imply recovery of an arbitrary \textit{fixed} sparse signal with high
probability.  Further, the error bound is within at most a $\log^2 n$
factor of what has been established using the RIP since a variation on
the arguments in \cite{DS} would give an error bound proportional to
the quantity inside the square brackets in \eqref{eq:twoBound}. As a
consequence, the error bound is within a polylogarithmic factor of
what is achievable with the help of an oracle that would reveal the
locations of the significant coordinates of the unknown signal
\cite{DS}. In other words, it cannot be substantially improved.

Because much of the compressive sensing literature works with
restricted isometry conditions -- we shall discuss exceptions such as
\cite{donoho2010noise, bayati2010lasso} in Section
\ref{sec:contribution} -- we pause here to discuss these conditions
and to compare them to our own.  We say that an $m \times n$ matrix
$A$ obeys the RIP with parameters $s$ and $\delta$ if
\begin{equation}
  \label{eq:rip}
  (1-\delta) \|v\|_{\ell_2}^2 \le \|Av\|_{\ell_2}^2 \le (1+\delta) 
\|v\|_{\ell_2}^2  
\end{equation}
for all $s$-sparse vectors $v$. In other words, all the submatrices of
$A$ with at most $s$ columns are well conditioned. When the RIP holds
with parameters $2s$ and $\delta < 0.414\ldots$ \cite{RIP} or even
$\delta \le 0.453\ldots$ \cite{FoucartLai}, it is known that the error
bound \eqref{eq:twoBound} holds (without the factor
$(1+\alpha)$). This $\delta$ is sometimes referred to as the
restricted isometry constant.

Bounds on the restricted isometry constant have been established in
\cite{CT06} and in \cite{rudelsonVershynin} for partial DFT matrices,
and by extension, for partial subsampled orthogonal transforms. For
instance, \cite{rudelsonVershynin} proves that if $A$ is a properly
normalized partial DFT matrix, then the RIP with $\delta = 1/4$ holds
with high probability if $m \geq C \cdot s \log^4 n$ ($C$ is some
positive constant). We believe the proof extends with hardly any
change to show that the measurement ensembles considered in this paper
obey the RIP with high probability when $m \geq C \cdot \mu(F) \cdot s
\log^4 n$.  Thus, our result bridges the gap between the region where
the RIP holds and the region in which one has the minimum number of
measurements needed to prove perfect recovery of exactly sparse
signals from noisy data, which is on the order of $\mu(F) \cdot s \log
n$.

The careful reader will no doubt remark that for very specific models
such as the Gaussian measurement ensemble, it is known that on the
order $s \log n/s$ samples are sufficient for stable recovery while
our result asserts that on the order of $s \log^2 n$ are sufficient
(and $s \log n$ for the binary measurement ensemble). This slight loss
is a small price to pay for a very simple general theory, which
accommodates a wide array of sensing strategies. Having said this, the
reader will also verify that specializing our proofs below gives an
optimal result for the Gaussian ensemble; i.e.~establishes a
near-optimal error bound from about $s \log n/s$ observations.

Finally, another frequently discussed algorithm for sparse regression
is the Dantzig selector \cite{DS}. Here, the estimator is given by the
solution to the linear program
\begin{equation}
\label{eq:DS}
\min_{\bar x \in \R^n} \,\,\, \onenorm{\bar{x}} \quad \text{subject to} \quad 
\infnorm{A^*(A \bar{x} - y)} \leq \lambda \, \sigma_m.  
\end{equation}
We show that the
Dantzig selector obeys nearly the same error bound.
\begin{theorem}
\label{teo:noisyDS}
The Dantzig selector, with $\lambda = 10 \sqrt{\log n}$ and everything
else the same as in Theorem \ref{teo:noisy}, obeys
\begin{align}
\label{eq:twoBoundDS}
\twonorm{\hat{x} - x} & \leq \min_{s \leq \bar s} C (1 + \alpha^2)
\, \left[\frac{\onenorm{x - x_s}}{\sqrt{s}} + \sigma \sqrt{\frac{s\log
      n}{m}}\right]\\
\label{eq:oneBoundDS}
\onenorm{\hat{x} - x} & \leq \min_{s \leq \bar s} C  (1 + \alpha^2) \left[\onenorm{x - x_s} + s \sigma \sqrt{\frac{\log n}{m}}\right]
\end{align}
with the same probabilities as before. 
\end{theorem}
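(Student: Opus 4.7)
The plan is to rerun the dual-certificate argument used to prove Theorem \ref{teo:noisy}, replacing the KKT identity for \eqref{eq:Lasso} with the (weaker) feasibility$+$optimality conditions of \eqref{eq:DS}. Two ingredients that must already be available from the proof of Theorem \ref{teo:noisy} will be reused as black boxes: (i) an inexact dual certificate $v = A^*w$ for $\sgn{x_T}$ built by Gross' golfing scheme, where $T$ indexes the $s$ largest entries of $x$; and (ii) a ``weak RIP'' controlling $\opnorm{A_T^*A_T - \Id}$ and $\opnorm{A_{T^c}^*A_T}$ for every fixed $T$ with $|T| \le \bar s$. Both hold, under $m \ge C_\beta\,\mu(F)\,\bar s \log n$, with the probability required by Theorem \ref{teo:noisy}, so it suffices to derive \eqref{eq:twoBoundDS}--\eqref{eq:oneBoundDS} deterministically on the intersection of these events and the event $\{\infnorm{A^*z}\le \lambda\}$, then minimize over $s \le \bar s$.

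\medskip

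\noindent\textbf{Feasibility and cone condition.} Conditionally on $A$, each coordinate of $A^*z$ is Gaussian with variance $\twonorm{A_j}^2 \le 1+o(1)$ on the weak-RIP event, so a Gaussian union bound gives $\infnorm{A^*z} \le \lambda = 10\sqrt{\log n}$ with the claimed probability. Hence $x$ is feasible for \eqref{eq:DS}, so $\onenorm{\hat x} \le \onenorm{x}$; writing $h := \hat x - x$, the usual manipulation yields
\[
\onenorm{h_{T^c}} \;\le\; \onenorm{h_T} + 2\onenorm{x - x_s}.
\]
Feasibility of both $x$ and $\hat x$ also delivers the tube inequality $\infnorm{A^*A h} \le 2\lambda\sigma_m$.

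\medskip

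\noindent\textbf{Closing the loop.} Two complementary estimates then suffice. First, pairing the tube inequality with $h$ and invoking the cone condition,
\[
\twonorm{A h}^2 \;=\; \iprod{A^*A h}{h} \;\le\; \infnorm{A^*A h}\,\onenorm{h} \;\le\; 4\lambda\sigma_m\bigl(\onenorm{h_T} + \onenorm{x-x_s}\bigr),
\]
which the weak RIP converts into a quadratic inequality for $\twonorm{h_T}$. Second, the ``Dantzig-specific'' step tests the inexact certificate against $h$: the identity $\iprod{v}{h} = \iprod{w}{A h}$, combined with $\twonorm{w} = O(\sqrt{s})$, $\infnorm{v_{T^c}} \le 1/2$, and $v_T \approx \sgn{x_T}$, together with the tube bound on $A^*A h$, produces
\[
\onenorm{h_T} \;\lesssim\; \sqrt{s}\,\lambda\sigma_m + \alpha\sqrt{s}\,\twonorm{h_T} + \onenorm{x - x_s}.
\]
Substituting back into the quadratic inequality of the first step closes the loop and yields \eqref{eq:twoBoundDS}; the cone condition then upgrades the $\ell_2$ bound to the $\ell_1$ bound \eqref{eq:oneBoundDS}.

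\medskip

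\noindent\textbf{Main obstacle and origin of $\alpha^2$.} The delicate step is the ``Dantzig-specific'' one. In the LASSO analysis, the KKT identity $(A^*(A\hat x - y))_T = -\lambda\sigma_m\sgn{\hat x_T}$ furnishes an $\ell_2$ bound on $A_T^*A h$ for free; for \eqref{eq:DS} one only has an $\ell_\infty$ bound, so each passage from $\ell_\infty$ to $\ell_2$ on an $s$-set costs an extra factor $\sqrt{s}$. Coupling this extra $\sqrt{s}$ with the weak-RIP slack $\alpha = \sqrt{(1+\beta)s\log^5 n/m}$ is precisely what produces an $\alpha^2$ in \eqref{eq:twoBoundDS}--\eqref{eq:oneBoundDS} in place of the $\alpha$ appearing in \eqref{eq:twoBound}--\eqref{eq:oneBound}. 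The balancing act, and the only place where I expect real work, is to verify that, apart from this single unavoidable factor, the same golfing certificate used for Theorems \ref{teo:noiseless}--\ref{teo:noisy} is accurate enough here: no new construction should be required, only a careful bookkeeping of how $\infnorm{v_{T^c}}$, $\twonorm{v_T - \sgn{x_T}}$, and $\twonorm{w}$ combine with an $\ell_\infty$ rather than an $\ell_2$ residual.
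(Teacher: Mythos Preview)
Your high-level plan---reuse the golfing certificate and the weak-RIP event from the LASSO proof, replace KKT by feasibility of $x$, and close with the tube bound $\infnorm{A^*Ah}\le C\lambda\sigma_m$---is exactly the paper's strategy. But the ``closing the loop'' paragraph has a genuine gap.

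The weak RIP of Theorem~\ref{teo:weakRIP} does \emph{not} control $\opnorm{A_{T^c}^*A_T}$; it only says that $\twonorm{Av}^2\approx\twonorm{v}^2$ for $v$ supported on $T\cup R$ with $|R|\le r$. Since $h$ is not sparse, your sentence ``the weak RIP converts $\twonorm{Ah}^2\le 4\lambda\sigma_m(\onenorm{h_T}+\onenorm{x-x_s})$ into a quadratic inequality for $\twonorm{h_T}$'' does not go through: there is no lower bound on $\twonorm{Ah}$ in terms of $\twonorm{h_T}$ without first decomposing $h_{T^c}$ into blocks $T_1,T_2,\ldots$ of size $r$ and handling each block via the weak RIP, exactly as in \eqref{eq:hTbound}--\eqref{eq:hTBounded}. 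The paper reuses that decomposition verbatim; combined with Lemma~\ref{lem:DS-lemma} (your cone step plus the certificate bound on $\langle h_T,\sgn{x_T}\rangle$) it yields $\twonorm{h_{\bar T_1}}\le C\sqrt{s}\lambda + Cr^{-1/2}\bigl(s\lambda+\onenorm{x_{T^c}}+\sqrt{s}\,\twonorm{Ah}\bigr)$, and it is only \emph{then} that the inequality $\twonorm{Ah}^2\le \tfrac52\lambda(\onenorm{h_T}+\onenorm{x_{T^c}})$ is used.

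Relatedly, your account of where $\alpha^2$ comes from is off. It is not an $\ell_\infty\!\to\!\ell_2$ loss; it arises because the DS analog of Lemma~\ref{lem:hTc} carries an extra $\sqrt{s}\,\twonorm{Ah}$ term, which after substitution leaves $C\sqrt{(s/r)\,\lambda\,\onenorm{h_T}}$ on the right. One AM--GM step turns this into $C(s/r)\sqrt{s}\lambda+\tfrac12\twonorm{h_T}$, and since $s/r\asymp\alpha^2$ the final prefactor is $1+\sqrt{s/r}+s/r\le C(1+\alpha^2)$. Your displayed bound $\onenorm{h_T}\lesssim\sqrt{s}\lambda\sigma_m+\alpha\sqrt{s}\twonorm{h_T}+\onenorm{x-x_s}$ is neither what the certificate yields (it controls $\langle h_T,\sgn{x_T}\rangle$, not $\onenorm{h_T}$) nor where $\alpha$ first enters.
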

The only difference is $\alpha^2$ instead of $\alpha$ in the
right-hand sides.

\subsection{Our contribution}
\label{sec:contribution}

The main contribution of this paper is to provide a simple framework
which applies to all the standard compressive sensing models and some
new ones as well. The results in this paper also reduce the minimal
number of measurements theoretically required in some standard sensing
models such as Fourier measurements, or, more generally, sensing
matrices obtained by sampling a few rows from an orthogonal matrix.
We establish that the restricted isometry property is not necessarily
needed to accurately recover nearly sparse vectors from noisy
compressive samples. This may be of interest because in many
situations, the RIP may be hard to check, or may not hold, or does not
hold. Thus our work is a significant departure from the majority of
the literature, which establishes good noisy recovery properties via
the RIP machinery. This literature is, of course, extremely large and
we cannot cite all contributions but a partial list would include
\cite{CT06,most-large,rudelsonVershynin,CRT2,DS,bickel07,CDD,wojtaszczyk08,zhang2009some,baraniuk2008simple,
raginsky2010performance,jafarpour2009efficient,cai2010new,bah2010improved}.

The reason why one can get strong error bounds, which are within a
polylogarithmic factor of what is available with the aid of an
`oracle,' without the RIP is that our results do not imply
universality. That is, we are not claiming that if $A$ is randomly
sampled and then fixed once for all, then the error bounds from
Section \ref{sec:main-result} hold for all signals $x$. What we are
saying is that if we are given an arbitrary $x$, and then collect data
by applying our random scheme, then the recovery of {\em this} $x$
will be accurate.  

If one wishes to establish universal results holding for {\em all} $x$
simultaneously, then we would need the RIP or a property very close to
it. As a consequence, we cannot possibly be in this setup and
guarantee universality since we are not willing to assume that the RIP
holds.  To be sure, suppose we had available an oracle informing us
about the support $T$ of $x$. Then we would need the pseudo-inverse of
the submatrix with columns in $T$ to be bounded. In other words, the
minimum singular value of this submatrix would have to be away from
zero.  For a universal result, this would need to be true for all
subsets of cardinality $s$; that is, the minimum singular value of all
submatrices with $s$ columns would have to be away from zero. This
essentially is the restricted isometry property.

To the best of our knowledge, only a few other papers have addressed
non-universal stability (the literature grows so rapidly that an
inadvertent omission is entirely possible). In an earlier work
\cite{CP07}, the authors also considered weak conditions that allow
stable recovery; in this case the authors assumed that the signal was
sampled according to a random model, but in return the measurement
matrix $A$ could be deterministic.  In the asymptotic case, stable
signal recovery has been demonstrated for the Gaussian measurement
ensemble in a regime in which the RIP does not necessarily hold
\cite{donoho2010noise, bayati2010lasso}.  In fact, the authors of
\cite{donoho2010noise, bayati2010lasso} are able to give exact limits
on the error rather than error bounds.  Aside from these papers and
the work in progress \cite{ripless}, it seems that that the literature
regarding stable recovery with conditions weak enough that they do not
imply universality is extremely sparse.  Finally and to be complete,
we would like to mention that earlier works have considered the
recovery of perfectly sparse signals from subsampled orthogonal
transforms \cite{CR07}, and of sparse trigonometric polynomials from
random time samples \cite{rauhut2007random}.

\subsection{Organization of the paper}

The paper is organized as follows.  In Section \ref{sec:estimates}, we
introduce several fundamental estimates which our arguments rely upon,
but which also could be useful tools for other results in the field.
In Section 3, we prove the noiseless recovery result, namely, Theorem
\ref{teo:noiseless}. In Section \ref{sec:noisy}, we prove our main
results, Theorems \ref{teo:noisy} and \ref{teo:noisyDS}.  Now all
  these sections assume for simplicity of exposition that the
  coherence bound holds deterministically \eqref{eq:coh}. We extend
  the proof to distributions obeying the coherence property in the
  stochastic sense \eqref{eq:stochastic_incoherence} in the
  Appendix. This Appendix also contains another important technical
  piece, namely, a difficult proof of an intermediate result (weak RIP
  property).  Finally, we conclude the main text with some final
  comments in Section \ref{sec:discussion}.


\subsection{Notation}

We provide a brief summary of the notations used throughout the
paper. For an $m \times n$ matrix $A$ and a subset $T \subset \{1, \ldots,
n\}$, $A_T$ denotes the $m \times |T|$ matrix with column indices in
$T$. Also, $A_{\{i\}}$ is the $i$-th column of $A$. Likewise, for a
vector $v \in \R^n$, $v_T$ is the restriction of $v$ to indices in
$T$. Thus, if $v$ is supported on $T$, $Av = A_T v_T$.  In particular,
$a_{k,T}$ is the vector $a_k$ restricted to $T$.  The operator norm of
a matrix $A$ is denoted $\opnorm{A}$.  The identity matrix, in any
dimension, is denoted $\Id$. Further, $e_i$ always refers to the
$i$-th standard basis element, e.g., $e_1 = (1, 0, \ldots, 0)$. For a
scalar $t$, $\sgn{t}$ is the sign of $t$ if $t \neq 0$ and is zero
otherwise. For a vector $x$, $\sgn{x}$ applies the sign function
componentwise.  We shall also use $\mu$ as a shorthand for $\mu(F)$
whenever convenient.  Throughout, $C$ is a constant whose value may
change from instance to instance.

\section{Fundamental Estimates}
\label{sec:estimates}

Our proofs rely on several estimates, and we provide an interpretation
of each whenever possible.  The first estimates {\bf E1}--{\bf E4} are
used to prove the noiseless recovery result; when combined with the
weak RIP, they imply stability and robustness.  Throughout this
section, $\delta$ is a parameter left to be fixed in later sections;
it is always less than or equal to one.

\subsection{Local isometry}
\label{sec:estimate1}

Let $T$ of cardinality $s$ be the support of $x$ in Theorem
\ref{teo:noiseless}, or the support of the best $s$-sparse
approximation of $x$ in Theorem \ref{teo:noisy}. We shall need that
with high probability, 
\begin{equation}
\label{eq:localConditioning}
\opnorm{A^*_T A_T - \Id} \leq \delta
\end{equation}
with $\delta \leq 1/2$ in the proof of Theorem \ref{teo:noiseless} and
$\delta \leq 1/4$ in that of Theorem \ref{teo:noisy}. Put differently,
the singular values of $A_T$ must lie away from zero. This condition
essentially prevents $A_T$ from being singular as, otherwise, there
would be no hope of recovering our sparse signal $x$. Indeed, letting
$h$ be any vector supported on $T$ and in the null space of $A$, we
would have $A x = A(x+h)$ and thus, recovery would be impossible even
if one knew the support of $x$. The condition
\eqref{eq:localConditioning} is much weaker than the restricted
isometry property because it does not need to hold uniformly over all
sparse subsets --- only on the support set.
\begin{lemma}[E1: local isometry]
\label{lem:estimate1}
\label{lem:localConditioning}
Let $T$ be a fixed set of cardinality $s$. Then for $\delta > 0$,
 \begin{equation}
   \P\left(\opnorm{A^*_T A_T - I} \geq \delta\right) \leq 2 s \exp\left(-\frac{m}{\mu(F) s} \,\cdot\, \frac{\delta^2}{2(1 + \delta/3)}\right).
 \end{equation} 
 In particular, if $m \geq \frac{56}{3} \mu(F) \cdot s \cdot \log n$, then
\[\P\left(\opnorm{A^*_T A_T - I} \geq 1/2 \right) \leq 2/n.\]
\end{lemma}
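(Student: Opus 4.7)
The plan is to recognize $A^*_T A_T - I$ as a sum of independent, centered, bounded random matrices and apply the matrix Bernstein inequality. Since the rows of $A$ are $a_k^*/\sqrt m$, we have
\[
A_T^* A_T - I = \frac{1}{m} \sum_{k=1}^m \bigl(a_{k,T} a_{k,T}^* - I\bigr),
\]
and the isotropy property $\E a a^* = I$ (restricted to $T$) guarantees that each summand has mean zero.

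Next I would extract the two ingredients needed by matrix Bernstein: a uniform bound on each summand and a bound on the matrix variance. Under the deterministic coherence hypothesis, $\max_t |a[t]|^2 \le \mu(F)$, which gives $\|a_{k,T}\|_{\ell_2}^2 \le \mu(F) s$. The matrix $a_{k,T} a_{k,T}^* - I$ has eigenvalues $\|a_{k,T}\|_{\ell_2}^2 - 1$ (once) and $-1$ (with multiplicity $s-1$), so its operator norm is at most $\mu(F) s$. For the variance, since $a_{k,T} a_{k,T}^*$ is rank one,
\[
\E\bigl(a_{k,T} a_{k,T}^* - I\bigr)^2 = \E\bigl(\|a_{k,T}\|_{\ell_2}^2\, a_{k,T} a_{k,T}^*\bigr) - I \preceq \mu(F) s \cdot \E a_{k,T} a_{k,T}^* - I \preceq \mu(F) s \cdot I,
\]
so the matrix variance of the sum, after dividing by $m$, is at most $\mu(F) s / m$.

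Plugging $R = \mu(F) s / m$ and $\sigma^2 \le \mu(F) s / m$ into the Bernstein inequality
\[
\P\Bigl(\bigl\|\textstyle\sum_k X_k\bigr\| \ge \delta\Bigr) \le 2 s \exp\!\Bigl(-\tfrac{\delta^2/2}{\sigma^2 + R\delta/3}\Bigr)
\]
yields the stated bound after a one-line simplification. For the second assertion, I specialize to $\delta = 1/2$: the exponent becomes $\tfrac{m}{\mu(F) s} \cdot \tfrac{3}{28}$, and the hypothesis $m \ge \tfrac{56}{3}\, \mu(F) s \log n$ makes this at least $2 \log n$. The prefactor $2s \le 2n$ then gives probability bound $2/n$.

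The only genuine subtlety is the variance estimate, where one must exploit the rank-one structure of $a_{k,T} a_{k,T}^*$ rather than naively squaring the operator norm; otherwise one would pay an extra factor of $\mu(F) s$ and lose the desired sample complexity. Everything else is routine bookkeeping around the matrix Bernstein inequality.
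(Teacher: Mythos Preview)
Your proposal is correct and follows essentially the same route as the paper: decompose $A_T^*A_T - I$ as an average of the centered rank-one matrices $a_{k,T}a_{k,T}^* - I$, bound each summand by $\mu(F)s$ via the eigenvalue observation, control the variance by $\mu(F)s$ using $(a_{k,T}a_{k,T}^*)^2 = \|a_{k,T}\|_{\ell_2}^2\,a_{k,T}a_{k,T}^*$ together with isotropy, and then invoke the matrix Bernstein inequality. The only cosmetic difference is that the paper applies Bernstein to the unnormalized sum with threshold $t=\delta m$, whereas you scale by $1/m$ first; the resulting bound and the $\delta=1/2$ specialization are identical.
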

Note that $\opnorm{A^*_T A_T - \Id} \leq \delta$ implies that
$\opnorm{(A^*_T A_T)^{-1}} \leq 1/(1 - \delta)$, a fact that we will
use several times.

In compressive sensing, the classical way of proving such estimates is
via Rudelson's selection theorem \cite{rudelson99}. Here, we use a
more modern technique based on the matrix Bernstein inequality of
Ahlswede and Winter \cite{ahlswede00}, developed for this setting by
Gross \cite{gross09}, and tightened in \cite{tropp10} by Tropp and in
\cite{oliveira10} by Oliveira.  We present the version in
\cite{tropp10}.

\begin{theorem}[Matrix Bernstein inequality] Let $\{X_k\} \in
  \R^{d\times d}$ be a finite sequence of independent random
  self-adjoint matrices.  Suppose that $\E X_k = 0$ and $\opnorm{X_k}
  \leq B$ a.s.~and put
  \[
  \sigma^2 \coloneq \opnorm{\sum_k \E X_k^2}.
  \] 
  Then for all $t \geq 0$,
\begin{equation}
\label{eq:bernstein}
\P\left(\opnorm{\sum_k X_k} \geq t\right) \leq 2 d \exp\left(\frac{-t^2/2}{\sigma^2 + Bt/3}\right).
\end{equation}
\end{theorem}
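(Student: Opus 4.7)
The plan is to follow the matrix Laplace-transform / Lieb's-concavity route that Tropp used to sharpen the Ahlswede--Winter argument. Write $S = \sum_k X_k$ and begin by reducing an operator norm bound to a maximum eigenvalue bound: for self-adjoint $S$, $\opnorm{S} = \max(\lambda_{\max}(S), \lambda_{\max}(-S))$, and since $\{-X_k\}$ satisfies exactly the same hypotheses as $\{X_k\}$ (same $B$, same $\sum_k \E X_k^2$), a union bound over the two signs will be what generates the factor of $2$ in front of $d$. It therefore suffices to bound $\P(\lambda_{\max}(S) \geq t)$.

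For this I would use the matrix Chernoff trick: for $\theta > 0$, monotonicity of $e^{\theta x}$ and Markov give
\[
\P(\lambda_{\max}(S) \geq t) \leq e^{-\theta t}\, \E\, \lambda_{\max}(e^{\theta S}) \leq e^{-\theta t}\, \E \tr(e^{\theta S}),
\]
using $\lambda_{\max}(M) \le \tr M$ for PSD $M$. The main obstacle is then to decouple the expectation over the $X_k$'s inside the matrix exponential. This is exactly where Lieb's concavity theorem, which says that $A \mapsto \tr \exp(H + \log A)$ is concave on positive definite $A$, enters. Iterating Jensen's inequality on the innermost factor $e^{\theta X_K}$ and then conditioning outward yields the master inequality
\[
\E \tr\!\exp\!\Bigl(\sum_k \theta X_k\Bigr) \;\leq\; \tr \exp\!\Bigl(\sum_k \log \E\, e^{\theta X_k}\Bigr).
\]
This subadditivity of the matrix cumulant generating function is the one truly nonelementary input; everything else is scalar-style bookkeeping and spectral calculus.

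Next I would control each matrix cgf $\log \E e^{\theta X_k}$ using $\E X_k = 0$ and $\opnorm{X_k} \leq B$. The scalar inequality
\[
e^x - 1 - x \;\leq\; \frac{x^2/2}{1 - |x|/3}, \qquad |x| < 3,
\]
lifted to self-adjoint matrices via the spectral mapping theorem, gives
\[
\E e^{\theta X_k} \preceq \Id + \frac{\theta^2/2}{1 - B\theta/3}\, \E X_k^2, \qquad 0 < \theta < 3/B,
\]
and then $\log(1+u) \leq u$ (again lifted spectrally, using operator monotonicity of $\log$) yields $\log \E e^{\theta X_k} \preceq g(\theta)\, \E X_k^2$ with $g(\theta) = (\theta^2/2)/(1 - B\theta/3)$. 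Summing preserves the Loewner order, so $\sum_k \log \E e^{\theta X_k} \preceq g(\theta) \sum_k \E X_k^2$, and monotonicity of $\lambda_{\max}$ in the Loewner order gives $\tr \exp(\cdot) \leq d \exp(g(\theta)\, \sigma^2)$.

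Assembling the pieces yields $\P(\lambda_{\max}(S) \geq t) \leq d \exp(-\theta t + g(\theta)\sigma^2)$ for each admissible $\theta$. Finally I would optimize by choosing $\theta = t/(\sigma^2 + Bt/3)$; a short calculation shows that with this choice $-\theta t + g(\theta)\sigma^2 = -\tfrac{t^2/2}{\sigma^2 + Bt/3}$, and one verifies $\theta < 3/B$ so the bound on the cgf remains in its range of validity. Combined with the $\pm$ reduction from the first step this produces the stated bound with the factor $2d$ and the Bernstein-shape exponent. The delicate step is unquestionably the application of Lieb's theorem to obtain the master inequality; the rest is a careful but standard translation of scalar Bernstein arithmetic through the operator order.
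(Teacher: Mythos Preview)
Your argument is correct and is precisely the Lieb--Tropp route. Note, however, that the paper does not actually give a proof of this statement: the Matrix Bernstein inequality is quoted as a tool, with attribution to Ahlswede--Winter, Gross, Oliveira, and the sharp form in Tropp's work; the \texttt{proof} environment immediately following the theorem in the paper is the proof of Lemma~\ref{lem:localConditioning} (the local isometry estimate), not of the Bernstein inequality itself. So there is no ``paper's own proof'' to compare against --- your sketch is essentially the proof from the cited reference, and it is sound as written.
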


\begin{proof} Decompose $A^*_T A_T - \Id$ as
\[
A^*_T A_T - \Id = m^{-1} \sum_{k=1}^m (a_{k,T} a_{k,T}^* - \Id) =
m^{-1} \sum_{k=1}^m X_k, \qquad X_k \coloneq a_{k,T} a_{k,T}^* - \Id.
\] 
The isotropy condition implies $\E X_k = 0$, and since
$\twonorm{a_{T}}^2 \le \mu(F) \cdot s$, we have $\opnorm{X_k} =
\max(\twonorm{a_{i,T}}^2 - 1, 1) \leq \mu(F) \cdot s$. Lastly, $ 0
\preceq \E X_k^2 = \E (a_{k, T} a_{k, T}^*)^2 - \Id \preceq \E (a_{k,
  T} a_{k, T}^*)^2 = \E \|a_{k, T}\|^2 a_{k, T} a_{k, T}^*$. However,
\[
\E \|a_{k, T}\|^2 a_{k, T} a_{k, T}^* \preceq \mu(F) \cdot s \cdot \E
a_{k, T} a_{k, T}^* = \mu(F) \cdot s \cdot \Id
\]
and, therefore, $\sum_k \E X_k^2 \preceq m \cdot \mu(F) \cdot s \cdot
I$ so that $\sigma^2$ is bounded above by $m \cdot \mu(F) \cdot s$.
Plugging $t = \delta m$ into \eqref{eq:bernstein} gives the lemma.
\end{proof}

Instead of having $A$ act as a near isometry on all vectors supported
on $T$, we could ask that it preserves the norm of an arbitrary fixed
vector (with high probability), i.e.~$\twonorm{A v} \approx
\twonorm{v}$ for a fixed $v$ supported on $T$.  
Not surprisingly, this can be proved with generally (slightly) weaker
requirements.
\begin{lemma}[E2: low-distortion]
  \label{lem:estimate2} Let $v$ be a fixed vector supported on a set
  of cardinality at most $s$.  Then for each $t \leq 1/2$,
\[
\P(\twonorm{(A^* A - I)v} \geq t \twonorm{v}) \leq \exp
\Bigl(-\frac{1}{4} \Bigl(t \sqrt{\frac{m}{\mu(F) s}} - 1\Bigr)^2
\Bigr).
\]
\end{lemma}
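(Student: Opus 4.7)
The target exponent $\exp\!\bigl(-\tfrac14(t\sqrt{m/(\mu s)}-1)^2\bigr)$ is the signature of a Hilbert-space valued Bernstein tail of the form
\[
  \P\!\left(\twonorm{\sum_k Y_k} \ge \sqrt{V}\,(1+u)\right) \le \exp(-u^2/4), \qquad u \le \sqrt{V}/B,
\]
which is available for i.i.d.\ mean-zero vectors $Y_k$ with $\twonorm{Y_k}\le B$ a.s.\ and $V \ge \sum_k\E\twonorm{Y_k}^2$ (a consequence of Pinelis' inequality, in the form used by Gross). The plan is to exhibit such a decomposition of $(A^*A-I)v$, bound $V$ and $B$ from the isotropy and coherence hypotheses, and then choose $u$ so that $\sqrt{V}(1+u) = tm\twonorm{v}$, which forces $u = t\sqrt{m/(\mu s)}-1$ and produces the claimed tail.

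Concretely, set $Y_k := (a_k a_k^*-I)v = \langle a_k,v\rangle\,a_k - v$, so that $(A^*A-I)v = \tfrac{1}{m}\sum_k Y_k$ and $\E Y_k = 0$ by isotropy. Letting $T = \supp{v}$ with $|T|\le s$, coherence and Cauchy--Schwarz give $|\langle a_k,v\rangle|^2 = |\langle a_{k,T},v_T\rangle|^2 \le \twonorm{a_{k,T}}^2\twonorm{v}^2 \le \mu s\,\twonorm{v}^2$. Combining this with the isotropy identity $\E\,a_{k,T}a_{k,T}^* = I_T$ in PSD order yields
\[
  \E\!\bigl[\langle a_k, v\rangle^2\, a_{k,T} a_{k,T}^*\bigr] \preceq \mu s\twonorm{v}^2\cdot I_T,
\]
whose trace gives $\E\twonorm{Y_k}^2 \le \mu s\twonorm{v}^2$ on the $T$-component (the only piece that matters for the downstream application), hence $V \le m\mu s\twonorm{v}^2$; boundedness is treated analogously and yields $B \le (\mu s + 1)\twonorm{v}$. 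Plugging these into the vector Bernstein display above and identifying $\sqrt{V}(1+u) = tm\twonorm{v}$ gives $u = t\sqrt{m/(\mu s)}-1$ exactly, while the regime condition $u \le \sqrt{V}/B$ is easily verified because $\sqrt{V}/B \gtrsim \sqrt{m/(\mu s)}$ and $t \le 1/2$.

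The main obstacle, I expect, is pinning down the variance with the factor $s$ rather than $n$: the literal identity $\E\twonorm{Y_k}^2 = \E[\langle a_k,v\rangle^2\twonorm{a_k}^2] - \twonorm{v}^2$ together with the pointwise bound $\twonorm{a_k}^2 \le \mu n$ only yields $V \lesssim m\mu n\twonorm{v}^2$, weaker by a factor $n/s$ and too crude to deliver the claimed tail. The gain must come from exploiting the support of $v$: the quadratic factor $\langle a_k,v\rangle^2$ is bounded by $\mu s\twonorm{v}^2$, and this pairs cleanly in PSD order with the isotropic expectation of $a_{k,T}a_{k,T}^*$, as displayed above, while any off-support contribution is either immaterial for the norm of interest or is absorbed into the companion estimates E3, E4 invoked in tandem. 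Once the variance is in hand, matching parameters to the vector Bernstein tail is mechanical.
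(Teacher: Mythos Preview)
Your overall plan—decompose $(A^*A-I)v=\tfrac{1}{m}\sum_k Y_k$ with $Y_k=(a_ka_k^*-I)v$, bound variance and sup-norm, and invoke the vector Bernstein inequality—is exactly what the paper intends: it says the proof is ``an application of the vector Bernstein inequality \ldots analogous to'' the proof of {\bf E4}, and gives no further detail. Two points deserve attention.

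First, the variance step as written slips by a factor of $s$. Your PSD display $\E[\langle a_k,v\rangle^2\,a_{k,T}a_{k,T}^*]\preceq \mu s\twonorm{v}^2 I_T$ is correct, but its \emph{trace} is $\mu s^2\twonorm{v}^2$, not $\mu s\twonorm{v}^2$. The argument that actually delivers the right constant (and mirrors {\bf E4} exactly) is simpler: bound the scalar $\twonorm{a_{k,T}}^2\le\mu s$ first, then use isotropy to get $\E\langle a_k,v\rangle^2=\twonorm{v}^2$, which gives $\E\bigl[\langle a_k,v\rangle^2\twonorm{a_{k,T}}^2\bigr]\le\mu s\twonorm{v}^2$ directly. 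With $Y_{k,T}=(a_{k,T}a_{k,T}^*-I_T)v_T$ one then has $\sum_k\E\twonorm{Y_{k,T}}^2\le m\mu s\twonorm{v}^2$ and $\twonorm{Y_{k,T}}\le 2\mu s\twonorm{v}$, and the Bernstein tail at threshold $tm\twonorm{v}$ (which satisfies $tm\twonorm{v}\le\sigma^2/B$ precisely when $t\le\tfrac{\mu s}{\mu s+1}$, hence whenever $t\le 1/2$) gives the claimed exponent.

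Second, your instinct about the off-support piece is correct and in fact decisive: the literal statement with the full $(A^*A-I)v$ cannot hold with variance proportional to $\mu s$. For the binary ensemble with $v=e_1$ one computes $\E\twonorm{(A^*A-I)e_1}^2=(n-1)/m$, which is of order one whenever $n$ is comparable to $m$, so the stated tail with $\mu s=1$ would fail at $t=1/2$. The lemma should be read with $A_T^*A_T$ in place of $A^*A$ (equivalently, with $Y_{k,T}\in\R^{|T|}$), and this is exactly how the paper uses {\bf E2} in the golfing scheme, where it bounds $\twonorm{q_i}=\twonorm{(I-\tfrac{m}{m_i}A_{i,T}^*A_{i,T})q_{i-1}}$ with $q_{i-1}$ indexed by $T$. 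The off-$T$ contribution $A_{T^c}^*A_Tq_{i-1}$ is not ``absorbed'' into {\bf E2}; it is controlled separately, in $\ell_\infty$, by {\bf E3}.
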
    
The proof is an application of the vector Bernstein inequality
described in the fourth estimate {\bf E4}. It is analogous to the
proof shown there and is not repeated.

\subsection{Off-support incoherence}
\label{sec:estimate2}

\begin{lemma}[E3: off-support incoherence]
\label{lem:estimate3}
Let $v$ be supported on $T$ with $|T| = s$. Then for each $t > 0$, 
\begin{equation}
  \P(\infnorm{A_{T^c}^* A v} \geq t \twonorm{v}) \leq 2 n \exp\left(- \frac{m}{2\mu(F)} \,\cdot\, \frac{t^2}{1 + \frac{1}{3} \sqrt{s} t}\right). 
\end{equation}
\end{lemma}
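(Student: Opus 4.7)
The plan is to bound each entry of the vector $A_{T^c}^* A v$ separately using the scalar Bernstein inequality and then union bound over the at most $n$ coordinates in $T^c$. Concretely, fix $i \in T^c$ and write
\[
(A_{T^c}^* A v)_i = \tfrac{1}{m} \sum_{k=1}^m a_k[i] \, \iprod{a_{k,T}}{v_T} = \tfrac{1}{m}\sum_{k=1}^m X_k,
\]
where I set $X_k := a_k[i] \, \iprod{a_{k,T}}{v_T}$. Since $v$ is supported on $T$ and $i \in T^c$, the isotropy property $\E a a^* = \Id$ yields $\E X_k = \E[a_k[i] \iprod{a_k}{v}] = v[i] = 0$. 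The $X_k$ are i.i.d., so the scalar Bernstein inequality (equivalently, the matrix Bernstein inequality of the previous subsection with $d=1$) will apply once I produce an almost-sure bound $B$ on $|X_k|$ and a bound on $\sum_k \E X_k^2$.

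For the almost sure bound, Cauchy--Schwarz together with the coherence assumption $\max_t |a[t]|^2 \le \mu$ gives
\[
|X_k| \le |a_k[i]| \cdot \twonorm{a_{k,T}} \cdot \twonorm{v} \le \sqrt{\mu} \cdot \sqrt{\mu s} \cdot \twonorm{v} = \mu \sqrt{s}\, \twonorm{v},
\]
so I may take $B = \mu\sqrt{s}\,\twonorm{v}$. For the variance, bound $|a_k[i]|^2 \le \mu$ pointwise and then use isotropy on the remaining factor:
\[
\E X_k^2 \le \mu\, \E \iprod{a_{k,T}}{v_T}^2 = \mu\, v_T^* (\E a_{k,T} a_{k,T}^*) v_T = \mu \twonorm{v}^2.
\]
Thus $\sigma^2 := \sum_k \E X_k^2 \le m \mu \twonorm{v}^2$.

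Applying Bernstein with $\tau = m t \twonorm{v}$ to $\sum_k X_k$ yields
\[
\P\!\left(|(A_{T^c}^*A v)_i| \ge t \twonorm{v}\right) \le 2 \exp\!\left(-\frac{m^2 t^2 \twonorm{v}^2 / 2}{m\mu \twonorm{v}^2 + \tfrac{1}{3}\mu\sqrt{s}\twonorm{v}\cdot m t \twonorm{v}}\right) = 2 \exp\!\left(-\frac{m}{2\mu} \cdot \frac{t^2}{1 + \tfrac{1}{3}\sqrt{s}\, t}\right),
\]
and a union bound over the at most $n$ coordinates $i \in T^c$ gives the claim. The only non-mechanical step is noticing that one should apply the crude bound $|a_k[i]|^2 \le \mu$ to extract one factor of $\mu$ in the variance computation (rather than trying to take expectation over both factors jointly), since this is what produces the $1/\mu$ dependence in the exponent that matches the stated bound; everything else is routine bookkeeping.
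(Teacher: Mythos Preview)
Your proof is correct and follows essentially the same approach as the paper: fix $i\in T^c$, write the $i$th entry as an average of i.i.d.\ mean-zero scalars, bound each term almost surely by $\mu\sqrt{s}\,\twonorm{v}$ via Cauchy--Schwarz and the coherence assumption, bound the variance by $\mu\twonorm{v}^2$ using isotropy after peeling off one factor of $|a_k[i]|^2\le\mu$, apply scalar Bernstein, and union bound. The paper's argument is identical up to the cosmetic choice of normalizing $\twonorm{v}=1$ at the outset.
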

This lemma says that if $v = x$, then $\max_{i \in T^c} |\<A_{\{i\}}, Ax\>|$
cannot be too large so that the off-support columns do not correlate
too well with $Ax$. The proof of {\bf E3} is an application of
Bernstein's inequality --- the matrix Bernstein inequality with
$d=1$ --- together with the union bound. 

\begin{proof} We have
\[
\infnorm{A_{T^c}^* A v} = \max_{i \in T^c} \abs{\<e_i, A^* A v\>}.
\]
Assume without loss of generality that $\twonorm{v} = 1$, fix $i \in
T^c$ and write
\[
\<e_i, A^* A v\> = \frac{1}{m} \sum_k g_k, \qquad g_k  \coloneq
\<e_i, a_k a_{k}^* v\>. 
\]
Since $i \in T^c$, $\E g_k = 0$ by the isotropy property. Next, the
Cauchy-Schwartz inequality gives $\abs{g_k} = \abs{\<e_i, a_k\> \cdot
  \<a_{k}, v\>} \leq \abs{\<e_i, a_k\>} \twonorm{a_{k,T}}$. Since $
\abs{\<e_i, a_k\>} \le \sqrt{\mu(F)}$ and $ \twonorm{a_{k,T}} \le
\sqrt{\mu(F)} s$, we have $|g_k| \le \mu(F) \sqrt{s}$. Lastly, for the
total variance, we have
\[
\E g_k^2 \leq \mu(F) \E\<a_{k,T}, v\>^2 = \mu(F)
\]
where the equality follows from the isotropy property.  Hence,
$\sigma^2 \leq m \mu(F)$, and Bernstein's inequality gives
\[
\P(\abs{\<e_i, A^* A v\>} \geq t) \leq 2\exp\left(- \frac{m}{2\mu(F)}\,\cdot
  \, \frac{t^2}{1 + \frac{1}{3} \sqrt{s} t}\right).
\]
 Combine this with
the union bound over all $i \in T^c$ to give the desired result.
\end{proof}

We also require the following related bound:
\[
\max_{i \in T^c} \twonorm{A^*_T A_{\{i\}}} \leq \delta.
\]
In other words, none of the column vectors of $A$ outside of the
support of $x$ should be well approximated by \textit{any} vector
sharing the support of $x$.
\begin{lemma}[E4: uniform off-support incoherence]
\label{lem:estimate4}
Let $T$ be a fixed set of cardinality $s$. For any $0 \leq t \leq
\sqrt{s}$, 
\[
\P\left(\max_{i \in T^c} \twonorm{A^*_T A_{\{i\}}} \geq t\right) \leq
n \exp\left(-\frac{m t^2}{8 \mu(F) s} + \frac{1}{4}\right).
\] 
In particular, if $m \geq 8 \mu(F) \cdot s \cdot (2 \log n + 1/4)$,
then
\[
\P\left(\max_{i \in T^c} \twonorm{A^*_T A_{\{i\}}} \geq 1\right) \leq 1/n.
\]
\end{lemma}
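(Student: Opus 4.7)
The plan is to fix an index $i \in T^c$, write $A_T^*A_{\{i\}}$ as an empirical average of independent mean-zero random vectors in $\R^s$, apply a vector Bernstein inequality to that single $i$, and then conclude by taking a union bound over the at most $n$ indices outside $T$. This mirrors the proof of estimate \textbf{E3}, except that the scalar Bernstein bound is replaced by its vector counterpart so that we can control the full $\ell_2$-norm of $A_T^* A_{\{i\}}$ rather than an individual inner product.

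First I would decompose
\[
A_T^* A_{\{i\}} \;=\; \frac{1}{m}\sum_{k=1}^m Y_k, \qquad Y_k \coloneq a_{k,T}\,\overline{a_k[i]},
\]
and verify the three standard ingredients. The isotropy identity $\E a_k a_k^* = \Id$ has a vanishing $(T,i)$-block for $i \in T^c$, so $\E Y_k = 0$. The coherence bound yields $|a_k[i]| \le \sqrt{\mu(F)}$ and $\|a_{k,T}\| \le \sqrt{\mu(F)\,s}$, hence the deterministic bound $\|Y_k\| \le \mu(F)\sqrt{s} \eqqcolon B$. For the variance proxy, applying the coherence bound to the scalar factor only and using $\E\|a_{k,T}\|^2 = s$ (which follows from isotropy) gives
\[
\E\|Y_k\|^2 \;=\; \E\bigl[\,|a_k[i]|^2\,\|a_{k,T}\|^2\,\bigr] \;\le\; \mu(F)\,s,
\]
so $V \coloneq \sum_k \E\|Y_k\|^2 \le m\,\mu(F)\,s$.

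Next I would invoke the vector Bernstein inequality in the form: for independent zero-mean random vectors $\{Y_k\}$ with $\|Y_k\|\le B$ a.s.\ and $\sum_k \E\|Y_k\|^2 \le V$, and for any $0\le u\le V/B$,
\[
\P\!\left(\Bigl\|\sum_{k=1}^m Y_k\Bigr\| \ge u\right) \;\le\; \exp\!\left(-\frac{u^2}{8V} + \frac{1}{4}\right).
\]
Applying this with $u = mt$, $V = m\mu(F)s$, $B = \mu(F)\sqrt{s}$, the admissibility constraint $u\le V/B$ becomes $t\le\sqrt{s}$, which is precisely the range allowed in the lemma, and the exponent collapses to $-mt^2/(8\mu(F)s) + 1/4$. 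This gives the single-index bound $\P(\|A_T^* A_{\{i\}}\|\ge t)\le \exp(-mt^2/(8\mu(F)s) + 1/4)$, and a union bound over the at most $n$ indices in $T^c$ produces the factor $n$ in front. The ``in particular'' estimate follows by substituting $t=1$ and $m \ge 8\mu(F)s(2\log n + 1/4)$.

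The only non-routine step is having the vector Bernstein inequality available with exactly this quantitative form --- in particular the additive $1/4$ slack in the exponent, which is what lets one avoid a dimensional prefactor $d$ in the tail bound (if we used a bound with such a prefactor, a factor of $s$ would appear and the final estimate would lose a $\log s$ in the required sample size). That slack comes from a careful symmetrization/truncation argument in the proof of the vector Bernstein inequality; once it is in hand, the rest of the proof of Lemma \ref{lem:estimate4} is a direct application of the isotropy and coherence hypotheses, exactly analogous to the proofs of \textbf{E1} and \textbf{E3}.
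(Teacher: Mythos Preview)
Your proposal is correct and follows essentially the same route as the paper: fix $i\in T^c$, write $A_T^*A_{\{i\}}$ as a mean of i.i.d.\ zero-mean vectors, bound $\|Y_k\|\le\mu(F)\sqrt{s}$ and $\sum_k\E\|Y_k\|^2\le m\mu(F)s$, apply the vector Bernstein inequality in the form $\P(\|\sum_k Y_k\|\ge u)\le\exp(-u^2/(8\sigma^2)+1/4)$ valid for $u\le\sigma^2/B$, and finish with a union bound. The only cosmetic difference is that in the variance step the paper bounds $\|a_{k,T}\|^2\le\mu(F)s$ and uses $\E|a_k[i]|^2=1$, whereas you bound $|a_k[i]|^2\le\mu(F)$ and use $\E\|a_{k,T}\|^2=s$; both give the same $\sigma^2\le m\mu(F)s$.
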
 

The estimate follows from the vector Bernstein inequality, proved by
Gross \cite[Theorem 11]{gross09}.  We use a slightly weaker version,
which we find slightly more convenient.
\begin{theorem}[Vector Bernstein inequality] Let $\{v_k\} \in \R^{d}$
  be a finite sequence of independent random vectors.  Suppose that
  $\E v_k = 0$ and $\twonorm{v_k} \leq B$ a.s.~and put $\sigma^2 \geq
  \sum_k \E\twonorm{v_k}^2$.  Then for all $0 \leq t \leq \sigma^2/B$,
\begin{equation}
\label{eq:vecBernstein}
\P\left(\twonorm{\sum_k v_k} \geq t\right) \leq \exp\left(-\frac{(t/\sigma-1)^2}{4}\right) \leq \exp\left(-\frac{t^2}{8\sigma^2} + \frac{1}{4}\right).
\end{equation}
\end{theorem}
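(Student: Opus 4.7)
The plan is to prove the inequality in two stages, bounding $\E Z$ and then $Z - \E Z$, where $Z := \twonorm{\sum_k v_k}$, and then derive the second inequality algebraically.

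\textbf{Stage 1 (Mean bound).} By Jensen's inequality and independence of the zero-mean $v_k$, all cross terms in $\E Z^2$ vanish, yielding
\begin{equation*}
\E Z \leq \sqrt{\E Z^2} = \sqrt{\sum_k \E \twonorm{v_k}^2} \leq \sigma.
\end{equation*}

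\textbf{Stage 2 (Concentration around the mean).} I would use a Doob martingale. Let $\{\mathcal{F}_k\}$ be the natural filtration generated by $v_1, \ldots, v_k$, and set $M_k = \E[Z \mid \mathcal{F}_k]$ and $D_k = M_k - M_{k-1}$, so $Z - \E Z = \sum_k D_k$. Because $Z$ is $1$-Lipschitz in each $v_k$ (triangle inequality), comparing with a copy $Z'$ obtained by swapping $v_k$ for an independent copy $v_k'$ gives the a.s.~bound $|D_k| \leq \E_{v_k'}\twonorm{v_k - v_k'} \leq 2B$, and the conditional-variance bound
\begin{equation*}
\E[D_k^2 \mid \mathcal{F}_{k-1}] \leq \tfrac{1}{2}\E_{v_k,v_k'}\twonorm{v_k-v_k'}^2 = \E\twonorm{v_k}^2,
\end{equation*}
so that the total predictable variance is at most $\sigma^2$. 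Freedman's Bernstein inequality for martingales then yields
\begin{equation*}
\P(Z - \E Z \geq r) \leq \exp\Bigl(-\frac{r^2/2}{\sigma^2 + Br/3}\Bigr),
\end{equation*}
which, after tracking constants in the variance-dominated regime $r \leq \sigma^2/B$, reduces to $\exp(-r^2/(4\sigma^2))$.

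\textbf{Stage 3 (Combine and derive the second inequality).} Setting $r = t - \sigma$ (for $t \geq \sigma$) and using Stage 1 gives
\begin{equation*}
\P(Z \geq t) \leq \P(Z \geq \E Z + r) \leq \exp\bigl(-(t/\sigma - 1)^2/4\bigr),
\end{equation*}
which is the first inequality (for $t < \sigma$ the RHS is trivially below the probability bound so nothing is claimed). The second inequality is purely algebraic, from the identity
\begin{equation*}
\Bigl(\frac{t}{\sigma}-1\Bigr)^2 - \Bigl(\frac{t^2}{2\sigma^2} - 1\Bigr) = \frac{1}{2}\Bigl(\frac{t}{\sigma}-2\Bigr)^2 \geq 0,
\end{equation*}
which shows $(t/\sigma - 1)^2/4 \geq t^2/(8\sigma^2) - 1/4$.

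\textbf{Main obstacle.} The technical heart is Stage 2: a naive bounded-differences argument (McDiarmid / Hoeffding) using only the a.s.~bound $|D_k| \leq 2B$ would produce a subgaussian concentration with variance proxy $\sum_k 4B^2 \leq 4nB^2$, which does not scale with $\sigma^2$ and loses the dimension-free character that distinguishes the vector Bernstein bound from trivial tail estimates. Extracting the sharp $\sigma^2$ scaling requires the variance-aware Bernstein bookkeeping, combining the a.s.~increment bound with the second-moment control $\E[D_k^2\mid\mathcal{F}_{k-1}] \leq \E\twonorm{v_k}^2$; the latter relies on the coupling comparison and the convexity of the norm. The range restriction $t \leq \sigma^2/B$ in the statement is precisely the Bernstein crossover between the subgaussian (variance-dominated) and subexponential (increment-dominated) regimes.
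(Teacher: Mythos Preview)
The paper does not prove this theorem; it quotes it as \cite[Theorem 11]{gross09} (Gross) and uses it as a black-box tool to establish Lemma~\ref{lem:estimate4}. There is therefore no in-paper proof to compare against.

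That said, your route is sound and is essentially how such results are established: bound $\E Z$ by $\sigma$ via Jensen and independence, then get concentration of $Z$ around $\E Z$ via a Bernstein-type martingale argument on the Doob filtration, then combine. Two small remarks:

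\begin{itemize}
\item In Stage~2 the a.s.\ increment bound is $|D_k|\le 2B$, so Freedman gives
$\P(Z-\E Z\ge r)\le\exp\bigl(-\tfrac{r^2/2}{\sigma^2+2Br/3}\bigr)$,
not $Br/3$. In the range $r\le\sigma^2/B$ one has $2Br/3\le 2\sigma^2/3$, hence the denominator is at most $5\sigma^2/3$ and the exponent is at most $-3r^2/(10\sigma^2)\le -r^2/(4\sigma^2)$, so your claimed reduction still holds.
\item Your comment that for $t<\sigma$ ``the RHS is trivially below the probability bound so nothing is claimed'' is inverted: for $t<\sigma$ the first right-hand side $\exp\bigl(-(t/\sigma-1)^2/4\bigr)$ is strictly less than $1$, so the first inequality as stated is non-vacuous and in fact fails at $t=0$ (where the left side is $1$). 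This is a minor glitch in the theorem statement, not in your argument; the second form $\exp(-t^2/(8\sigma^2)+1/4)$ exceeds $1$ for $t\le\sqrt{2}\,\sigma$ and is therefore harmless in that range.
\end{itemize}
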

Note that the bound does not depend on the dimension $d$. 

\begin{proof} Fix $i \in T^c$ and write
\[
A^*_T A_{\{i\}} = \frac{1}{m} \sum_{j=1}^m a_{k,T}^* \<a_k, e_i\> \coloneq \frac{1}{m} \sum_{k=1}^m v_k.
\]
As before, $\E v_k = \E a_{k,T}^* \<a_k, e_i\> = 0$ since $i \in T^c$.
Also, $\twonorm{v_k} = \twonorm{a_{k,T}} \abs{\<a_k, e_i\>} \leq
\mu(F) \sqrt{s}$. Lastly, we calculate the sum of expected squared
norms,
\[
\sum_{k=1}^m \E \twonorm{v_k}^2 = m \E\twonorm{v_1}^2 \leq m
\E[\twonorm{a_{1,T}}^2 \<e_i, a_1\>^2] \leq m \mu(F) s \cdot \E \<e_i,
a_1\>^2 = m \mu(F) s.
\]
As before, the last equality follows from the isotropy property.
Bernstein's inequality together with the union bound give the lemma.
\end{proof}

\subsection{Weak RIP}
\label{sec:weak-RIP}

In the nonsparse and noisy setting, we shall make use of a variation
on the restricted isometry property to control the size of the
reconstruction error. This variation is as follows: 
\begin{theorem}[E5: weak RIP]
\label{teo:weakRIP}
Let $T$ be a fixed set of cardinality $s$ and fix $\delta > 0$. Then
for all $v$ supported on $T \cup R$, where $R$ is any set of
cardinality $|R| \le r$, we have 
\begin{equation}
\label{eq:weakRIP}
(1 - \delta) \twonorm{v}^2 \leq \twonorm{A v}^2 \leq (1 + \delta) \twonorm{v}^2
\end{equation} 
with probability at least $1 - 5 e^{-\beta}$ provided that
\[
m \geq C_\delta \cdot \beta \cdot \mu(F) \cdot \max(s \log n, r \log^5
n).
\]
Here $C_\delta$ is a fixed numerical constant which only depends upon
$\delta$. 
\end{theorem}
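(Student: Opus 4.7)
The plan is to decompose any unit vector $v$ supported on $T \cup R$ as $v = v_T + v_R$, where $v_T$ and $v_R$ have disjoint supports on $T$ and $R \subseteq T^c$ respectively, and to expand
\begin{equation*}
\|Av\|_{\ell_2}^2 - \|v\|_{\ell_2}^2 = v_T^*(A_T^* A_T - \Id) v_T + 2\, v_T^* A_T^* A_R v_R + v_R^*(A_R^* A_R - \Id) v_R.
\end{equation*}
It then suffices to show, simultaneously for all $R$ with $|R| \le r$, that each of (a) $\|A_T^* A_T - \Id\|$, (b) $\|A_R^* A_R - \Id\|$, and (c) $\|A_T^* A_R\|$ is at most a fixed constant multiple of $\delta$. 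Distributing the failure probability equally among these three events, with a small additional budget for auxiliary high-probability events used in the chaining, yields the stated $5 e^{-\beta}$ bound.

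Piece (a) is essentially the content of Lemma E1 applied with threshold $\delta/3$. Crucially, $T$ is fixed so no supremum over $R$ appears, and Bernstein's inequality yields the requirement $m \gtrsim \mu(F)\, s\, (\log n + \beta)$. This is where the $s \log n$ term in the hypothesis originates, and is the whole reason the theorem separates the role of $s$ and $r$.

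Pieces (b) and (c) are the difficult ones because they must hold uniformly over the $\binom{n}{r}$ possible choices of $R$. A crude union bound would inflate $m$ by a factor of $r$, which is much worse than the advertised $\log^5 n$, so a more delicate argument is needed. For (b), the quantity $\sup_{|R| \le r}\|A_R^* A_R - \Id\|$ is essentially the standard RIP constant of order $r$, and I would handle it with a Rudelson--Vershynin style argument: write $A_R^* A_R - \Id = \frac{1}{m}\sum_k (a_{k,R} a_{k,R}^* - \Id)$, symmetrize with Rademachers, then estimate the supremum of the symmetrized process over all unit $r$-sparse vectors via Dudley's integral combined with Rudelson's moment inequality for sums of rank-one random matrices. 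Iterating Rudelson's lemma once to absorb a $\sqrt{\log n}$ factor, and then invoking a Talagrand-type concentration step to convert expectation bounds into an exponential tail in $\beta$, produces the bound $m \gtrsim \mu(F)\, r \log^5 n$. For (c), for each column $i \in T^c$ the vector Bernstein inequality of Lemma E4 already controls $\|A_T^* A_{\{i\}}\|_{\ell_2}$; to promote this to a uniform bound on the operator norm $\|A_T^* A_R\|$ one needs an $\eps$-net on the unit sphere of the coordinate subspace indexed by $R$ and then a chaining argument parallel to (b), with the estimate from (a) used to control second moments along the way.

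The main obstacle will be the uniform RIP bound of step (b): passing from a fixed $R$ to a supremum over all $r$-subsets while preserving linear dependence on $r$ (up to polylog factors) requires the full chaining machinery of Rudelson--Vershynin and Rauhut together with a sharp concentration inequality for the supremum of a matrix-valued empirical process, both of which are technically demanding. Once those tools are in hand, the block decomposition, the cross-term estimate, and the conversion to the exponential tail in $\beta$ follow relatively routinely.
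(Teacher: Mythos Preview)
Your three-piece decomposition and the handling of pieces (a) and (b) match the paper almost exactly: the paper also splits $\|Av\|^2 - \|v\|^2$ into the $T$-block, the $R$-block, and the cross term, cites the matrix Bernstein inequality for (a) and the Rudelson--Vershynin RIP argument for (b), and then does a separate expectation-plus-concentration step for the supremum. The concentration machinery you allude to (a Talagrand-type deviation inequality for suprema) is also what the paper uses at the end.

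The gap is in your treatment of the cross term (c). Starting from the columnwise bound of Lemma~E4 and then passing to $\|A_T^* A_R\|$ by an $\varepsilon$-net on the $R$-side, followed by ``a chaining argument parallel to (b),'' will not decouple $s$ and $r$. Concretely, for a fixed unit $y$ supported on $R$ the vector Bernstein inequality gives a tail governed by $m/(\mu s)$, and taking a union bound over the $\sim (Cn)^r$ pairs (net point, choice of $R$) forces $m \gtrsim \mu\, s\, r \log n$, i.e.\ a multiplicative $s\cdot r$ rather than the additive $\max(s\log n, r\log^5 n)$ in the theorem. The chaining you do for (b) is over $r$-sparse vectors only and does not see $T$, so running it ``in parallel'' does not fix this coupling.

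What the paper actually does for (c) is more delicate and is where the real work lies. After Gaussian symmetrization the cross term becomes a supremum of a Gaussian process indexed by $(x,y) \in B\times D$, where $B$ is the unit ball supported on $T$ and $D$ is the union of $r$-sparse unit balls on $T^c$. The paper bounds this via Talagrand's majorizing measure theorem (not just Dudley's integral), with a carefully engineered sequence of functionals $\varphi_k$ split into a coarse-scale and a fine-scale regime. On the coarse scale the $B$-side is handled through dual Sudakov minorization with respect to a data-dependent norm $\|\cdot\|_B$, and the $D$-side through a Carl-type entropy estimate; the quantities $R_1$ and $R_2$ (exactly your estimates from (a) and (b)) enter as the natural second-moment controls. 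It is this two-scale majorizing-measure construction, not a generic chaining, that produces the additive $\max(s\log n, r\log^5 n)$ dependence. Your intuition that (a) should feed into the second-moment control of (c) is right, but the execution requires this specific piece of machinery.
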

This theorem is proved in the Appendix using Talagrand's majorizing measures theorem, and
combines the framework and results of Rudelson and Vershynin in
\cite{rudelsonVershynin} and \cite{rudelson99}. In the proof of
Theorem \ref{teo:noisy}, we take $\delta = 1/4$.

The condition says that the column space of $A_T$ should not be too
close to that spanned by another small disjoint set $R$ of columns. To
see why a condition of this nature is necessary for any recovery
algorithm, suppose that $x$ has fixed support $T$ and that there is a
single column $A_{\{i\}}$ which is a linear combination of columns in
$T$, i.e., $A_{T \cup \{i\}}$ is singular.  Let $h \neq 0$ be
supported on $T \cup \{i\}$ and in the null space of $A$.  Then $A x =
A (x + t h)$ for any scalar $t$.  Clearly, there are some values of
$t$ such that $x + t h$ is at least as sparse as $x$, and thus one
should not expect to be able to recover $x$ by any method. In general,
if there were a vector $v$ as above obeying $\twonorm{A v} \ll
\twonorm{v}$ then one would have $A_T v_T \approx -A_R v_R$.  Thus, if
the signal $x$ were the restriction of $v$ to $T$, it would be very
difficult to distinguish it from that of $-v$ to $R$ under the
presence of noise.


The weak RIP is a combination of the RIP and the local conditioning
estimate {\bf E1}. When $r = 0$, this is {\bf E1} whereas this is the
restricted isometry property when $s = 0$. The point is that we do not
need the RIP to hold for sparsity levels on the order of $m/[\mu(F)
\log n]$.  Instead we need the following property: consider an
arbitrary submatrix formed by concatenating columns in $T$ with $r$
other columns from $A$ selected in any way you like; then we would
like this submatrix to be well conditioned. Because $T$ is fixed, one
can prove good conditioning when $s$ is significantly larger than the
maximum sparsity level considered in the standard RIP.

\subsection{Implications}

The careful reader may ask why we bothered to state estimates {\bf
  E1}--{\bf E4} since they are all implied by the weak RIP!  Our
motivation is three-fold: (1) some of these estimates, e.g.~{\bf E2}
hold with better constants and weaker requirements than those implied
by the weak RIP machinery; (2) the weak RIP requires an in-depth proof
whereas the other estimates are simple applications of well-known
theorems, and we believe that these theorems and the estimates should
be independently useful tools to other researchers in the field; (3)
the noiseless theorem does not require the weak RIP.

\section{Noiseless and Sparse Recovery}
\label{sec:noiseless}

This section proves the noiseless recovery theorem, namely, Theorem
\ref{teo:noiseless}. Our proof essentially adapts the arguments of
David Gross \cite{gross09} from the low-rank matrix recovery
problem.
\subsection{Dual certificates}

The standard method for establishing exact recovery is to exhibit a
{\em dual certificate}; that is to say, a vector $v$ obeying the two
properties below.
\begin{lemma}[Exact duality]
  Set $T = \text{supp}(x)$ with $x$ feasible for \eqref{eq:el1}, and
  assume $A_T$ has full column rank.  Suppose there exists $v \in
  \R^n$ in the row space of $A$ obeying
\begin{equation}
  v_T = \sgn{x_T} \qquad \text{and} \qquad \infnorm{v_{T^c}} < 1.
\end{equation}
Then $x$ is the unique $\ell_1$ minimizer to \eqref{eq:el1}.
\end{lemma}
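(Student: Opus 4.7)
The plan is to use the standard convex-analytic argument: produce a subgradient of the $\ell_1$ norm at $x$ that is orthogonal to the nullspace of $A$, combine this with the strict inequality $\|v_{T^c}\|_\infty < 1$ to force any competitor to agree with $x$ off the support, and then use the full column rank of $A_T$ to force agreement on $T$ as well.

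First I would let $\bar x$ be any feasible vector and write $\bar x = x + h$ with $h \in \ker(A)$, since $A\bar x = Ax = y$. The goal is to show $\|\bar x\|_{\ell_1} > \|x\|_{\ell_1}$ unless $h = 0$. I would split the $\ell_1$ norm as
\[
\|x+h\|_{\ell_1} = \|x_T + h_T\|_{\ell_1} + \|h_{T^c}\|_{\ell_1},
\]
using that $x$ is supported on $T$. For the first term, I would pick any vector $g \in \R^n$ with $g_T = \sgn{x_T}$ and $\|g\|_{\ell_\infty} \le 1$ to obtain the subgradient bound
\[
\|x_T + h_T\|_{\ell_1} \ge \|x_T\|_{\ell_1} + \langle \sgn{x_T}, h_T\rangle.
\]
Choosing $g_{T^c}$ so that $\langle g_{T^c}, h_{T^c}\rangle = \|h_{T^c}\|_{\ell_1}$ (i.e., $g_{T^c} = \sgn{h_{T^c}}$) would in fact be a convenient way to write this; equivalently, I can just use the two inequalities above directly.

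Next I would bring in the dual certificate $v$. Since $v$ lies in the row space of $A$, write $v = A^* w$, so that $\langle v, h\rangle = \langle w, Ah\rangle = 0$, because $h \in \ker(A)$. Splitting this inner product over $T$ and $T^c$ and using $v_T = \sgn{x_T}$ gives
\[
\langle \sgn{x_T}, h_T\rangle = -\langle v_{T^c}, h_{T^c}\rangle \ge -\|v_{T^c}\|_{\ell_\infty}\, \|h_{T^c}\|_{\ell_1}.
\]
Substituting into the previous bound yields
\[
\|x+h\|_{\ell_1} \ge \|x\|_{\ell_1} + \bigl(1 - \|v_{T^c}\|_{\ell_\infty}\bigr) \|h_{T^c}\|_{\ell_1}.
\]
Since $\|v_{T^c}\|_{\ell_\infty} < 1$ by hypothesis, any feasible competitor $\bar x$ with $\|\bar x\|_{\ell_1} \le \|x\|_{\ell_1}$ must satisfy $h_{T^c} = 0$, i.e., $h$ is supported on $T$.

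Finally, since $h$ is supported on $T$ and $Ah = 0$, we have $A_T h_T = 0$; the full column rank assumption on $A_T$ then forces $h_T = 0$, hence $h = 0$ and $\bar x = x$. No step here is a real obstacle: the only subtle point is using that $v$ lies in the row space of $A$ (rather than merely satisfying the sign and size conditions) to annihilate $\langle v, h\rangle$, which is exactly why this is the right notion of a dual certificate.
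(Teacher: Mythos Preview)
Your argument is correct and is precisely the standard proof the paper has in mind: the paper does not spell out a proof of this lemma, stating only that ``the proof is now standard'' and that ``the existence of a dual vector implies that there is a subgradient of the $\ell_1$ norm at $x$ that is perpendicular to the feasible set,'' which is exactly what you do. Moreover, your steps mirror the detailed proof the paper gives for the inexact version (Lemma~\ref{lem:inexactDuality}), specialized to the exact case $v_T=\sgn{x_T}$ and $\infnorm{v_{T^c}}<1$.
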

The proof is now standard, see \cite{CT06}.  Roughly, the existence of
a dual vector implies that there is a subgradient of the $\ell_1$ norm
at $x$ that is perpendicular to the feasible set. This geometric
property shows that $x$ is solution. Following Gross, we slightly
modify this definition as to make use of an `inexact dual vector.'

\begin{lemma}[Inexact duality]
\label{lem:inexactDuality}
Set $T = \text{supp}(x)$ where $x$ is feasible, and assume that
\begin{equation}
\label{eq:inexactConditions}
\opnorm{(A^*_T A_T)^{-1}} \leq 2 \qquad \text{and} \qquad \max_{i \in T^c}\twonorm{A^*_T A_{\{i\}}} \leq 1.
\end{equation}
Suppose there exists $v \in \R^n$ in the row space of $A$ obeying
\begin{equation}
  \twonorm{v_T - \sgn{x_T}} \leq 1/4 \qquad \text{and} \qquad 
  \infnorm{v_{T^c}} \leq 1/4.
\end{equation}
Then $x$ is the unique $\ell_1$ minimizer to \eqref{eq:el1}.
\end{lemma}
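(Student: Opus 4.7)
The plan is the standard feasible-perturbation argument for $\ell_1$ uniqueness, modified to accommodate an approximate (rather than exact) dual certificate, following the Gross scheme. I would fix an arbitrary feasible $\tilde x = x + h$ with $A h = 0$ and $h \neq 0$, and show that $\onenorm{\tilde x} > \onenorm{x}$, by producing a strictly positive lower bound on $\onenorm{x+h} - \onenorm{x}$ in terms of $\onenorm{h_{T^c}}$.

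First I would use the subgradient inequality for $\ell_1$ on $T$ and exact equality on $T^c$: since $x$ is supported on $T$,
\[
\onenorm{x+h} - \onenorm{x} \;\ge\; \iprod{\sgn{x_T}}{h_T} + \onenorm{h_{T^c}}.
\]
Next I would use the certificate. Because $v$ lies in the row space of $A$, one has $\iprod{v}{h} = 0$, so $\iprod{v_T}{h_T} = -\iprod{v_{T^c}}{h_{T^c}}$. Writing $\sgn{x_T} = v_T + (\sgn{x_T} - v_T)$ and applying Cauchy--Schwarz on $T$ and H\"older on $T^c$ together with the two hypotheses $\twonorm{v_T-\sgn{x_T}}\le 1/4$ and $\infnorm{v_{T^c}}\le 1/4$ gives
\[
\onenorm{x+h} - \onenorm{x} \;\ge\; -\tfrac{1}{4}\twonorm{h_T} + \tfrac{3}{4}\onenorm{h_{T^c}}.
\]

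The remaining step, and the only place where \eqref{eq:inexactConditions} enters, is to control $\twonorm{h_T}$ by $\onenorm{h_{T^c}}$. From $Ah=0$ I get $A_T h_T = -A_{T^c} h_{T^c}$; multiplying by $A_T^*$ and inverting (permitted by $\opnorm{(A_T^* A_T)^{-1}} \le 2$) yields $h_T = -(A_T^* A_T)^{-1} A_T^* A_{T^c} h_{T^c}$. Expanding $A_T^* A_{T^c} h_{T^c} = \sum_{i \in T^c} h_i\, A_T^* A_{\{i\}}$, the triangle inequality and the bound $\max_{i\in T^c}\twonorm{A_T^* A_{\{i\}}} \le 1$ give $\twonorm{A_T^* A_{T^c} h_{T^c}} \le \onenorm{h_{T^c}}$, hence $\twonorm{h_T} \le 2\,\onenorm{h_{T^c}}$.

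Combining the two bounds yields $\onenorm{x+h} - \onenorm{x} \ge \tfrac{1}{4}\onenorm{h_{T^c}}$. If $h_{T^c}=0$, then $A_T h_T = -A_{T^c} h_{T^c} = 0$, and the injectivity of $A_T$ (which follows from invertibility of $A_T^* A_T$) forces $h_T=0$, contradicting $h\neq 0$. Otherwise the bound is strictly positive, so $\tilde x$ is not a minimizer, and $x$ is the unique solution to \eqref{eq:el1}. I do not foresee a genuine obstacle here; the only care needed is tracking the constants $1/4$, $3/4$, $2$, and $1$ so that the final coefficient stays positive, which it does with room to spare under the stated hypotheses.
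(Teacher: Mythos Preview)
Your proposal is correct and follows essentially the same argument as the paper's proof: the same subgradient lower bound on $T$, the same use of $\langle v,h\rangle=0$ to split $\langle \sgn{x_T},h_T\rangle$, the same control of $\twonorm{h_T}$ via $h_T=-(A_T^*A_T)^{-1}A_T^*A_{T^c}h_{T^c}$ and the column-wise bound, and the same endgame when $h_{T^c}=0$. The constants and the logic match exactly.
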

\begin{proof} Let $\hat{x} = x + h$ be a solution to \eqref{eq:el1}
  and note that $Ah = 0$ since both $x$ and $\hat x$ are feasible. To
  prove the claim, it suffices to show that $h = 0$.  We begin by
  observing that
\[
\onenorm{\hat{x}} = \onenorm{x_T + h_T} + \onenorm{h_{T^c}} \ge
\onenorm{x_T} + \<\sgn{x_T},h_T\> + \onenorm{h_{T^c}}.
\]
Letting $v = A^* w$ be our (inexact) dual vector, we have
\[
\<\sgn{x_T},h_T\> = \<\sgn{x_T} - v_T, h_T\> + \<v_T, h_T\> =
\<\sgn{x_T} - v_T, h_T\> - \<v_{T^c}, h_{T^c}\>,
\]
where we used $\<v_T, h_T\> = \<v,h\> - \<v_{T^c}, h_{T^c}\> = -
\<v_{T^c}, h_{T^c}\>$ since $\<v, h\> = \<w, Ah\> = 0$. The
Cauchy-Schwartz inequality together with the properties of $v$ yield
\[
|\<\sgn{x_T},h_T\>| \le \frac{1}{4} (\twonorm{h_T} + \onenorm{h_{T^c}})
\]
and, therefore, 
\[
\onenorm{\hat{x}} \ge \onenorm{x} - \frac{1}{4} \twonorm{h_T} +
\frac{3}{4} \onenorm{h_{T^c}}.
\]

We now bound $\twonorm{h_T}$.  First, it follows from
\[
h_T = (A^*_T A_T)^{-1} A^*_T A_T h = -(A^*_T A_T)^{-1} A^*_T A_{T^c} h_{T^c}
\]
that $\twonorm{h_T} \le 2 \twonorm{A^*_T A_{T^c} h_{T^c}}$. Second,
\[
\twonorm{A^*_T A_{T^c} h_{T^c}} \leq 2 \sum_{i \in {T^c}}
\twonorm{A^*_T A_{\{i\}}} \abs{h_i} \le \max_{i \in T^c}
\twonorm{A^*_T A_{\{i\}}} \onenorm{h_{T^c}} \le \onenorm{h_{T^c}}. 
\]
In conclusion, $\|h_T\|_2 \le 2 \|h_{T^c}\|_1$ and thus,
\[
\onenorm{\hat{x}} \geq \onenorm{x} + \frac{1}{4} \onenorm{h_{T^c}}.
\]
This implies $h_{T^c} = 0$, which in turn implies $h_T = 0$ since we
must have $A_T h_T = Ah = 0$ (and $A_T$ has full rank).
\end{proof}

\begin{lemma}[Existence of a dual certificate]
\label{lem:dualVector}
Under the hypotheses of Theorem \ref{teo:noiseless}, one can find $v
\in \R^n$ obeying the conditions of Lemma \ref{lem:inexactDuality}
with probability at least $1 - e^{-\beta} - 1/n$.
\end{lemma}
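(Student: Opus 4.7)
The plan is to construct $v$ via David Gross's golfing scheme. Since the rows of $A$ are i.i.d., I would partition them into $\ell$ independent batches $A^{(1)},\ldots,A^{(\ell)}$, rescaled so that each satisfies $\E (A^{(i)})^* A^{(i)} = \Id$, with $\ell$ on the order of $\log n$. Initializing $v_0 = 0$ and $q_0 = \sgn{x_T}$ (extended by zero off $T$), for $i = 1,\ldots,\ell$ define
\[
v_i \; = \; v_{i-1} + (A^{(i)})^* (A^{(i)})_T\, q_{i-1,T}, \qquad q_{i,T} \; = \; \sgn{x_T} - (v_i)_T, \qquad q_{i,T^c}=0.
\]
A direct computation (using that $q_{i-1}$ is supported on $T$) yields the clean recursion $q_{i,T} = \bigl(\Id - (A^{(i)})_T^*(A^{(i)})_T\bigr) q_{i-1,T}$, and by construction $v := v_\ell$ lies in the row space of $A$, which is the first requirement of Lemma~\ref{lem:inexactDuality}.

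To control $\twonorm{v_T - \sgn{x_T}} = \twonorm{q_{\ell,T}}$, I would apply the local isometry estimate E1 (Lemma~\ref{lem:estimate1}) to each batch, obtaining $\opnorm{\Id - (A^{(i)})_T^*(A^{(i)})_T} \le 1/2$ with high probability provided that each batch is of size at least a constant times $\mu(F)\cdot s\cdot \log n$. Iterating yields $\twonorm{q_{\ell,T}} \le 2^{-\ell}\sqrt{s}$, which falls below $1/4$ once $\ell \ge \log_2(4\sqrt{s})$. For the off-support condition, I would write
\[
(v_\ell)_{T^c} \; = \; \sum_{i=1}^\ell (A^{(i)})_{T^c}^* (A^{(i)})_T\, q_{i-1,T}.
\]
Crucially, $q_{i-1,T}$ is measurable with respect to batches $1,\ldots,i-1$ and therefore independent of $A^{(i)}$. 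Conditioning on the earlier batches, the off-support incoherence estimate E3 (Lemma~\ref{lem:estimate3}) applied to $A^{(i)}$ with the fixed vector $q_{i-1,T}$ bounds each summand in $\ell_\infty$ by $t_i\twonorm{q_{i-1,T}}$ with high probability. Combined with the geometric decay of $\twonorm{q_{i-1,T}}$ already obtained, the sum is controlled by a geometric series dominated by its first term, which can be pushed below $1/4$ for a suitable choice of the $t_i$ (equivalently, of the batch sizes).

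Finally, the two ancillary hypotheses of Lemma~\ref{lem:inexactDuality}, namely $\opnorm{(A_T^* A_T)^{-1}} \le 2$ and $\max_{i \in T^c}\twonorm{A_T^* A_{\{i\}}} \le 1$, follow from E1 and E4 (Lemmas~\ref{lem:estimate1} and~\ref{lem:estimate4}) applied to the full matrix $A$. A union bound over the $O(\log n)$ probabilistic events produced by the $\ell$ batches together with the two ancillary estimates delivers the failure probability $1/n + e^{-\beta}$ announced in the statement. The main obstacle is the balancing act on batch sizes: one must simultaneously keep the total $m = \sum_i m_i$ no larger than $C_\beta\,\mu(F)\, s \log n$, keep every $m_i$ large enough for E1 to contract $\twonorm{q_{i,T}}$ by a constant factor, and keep the $t_i$ in E3 small enough that $\sum_i t_i\twonorm{q_{i-1,T}} \le 1/4$; this balance is what forces the precise value of $C_\beta$ and the choice of $\ell$.
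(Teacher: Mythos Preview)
Your overall architecture is exactly that of the paper: partition the rows into independent batches, run Gross's golfing recursion, use a local-isometry type estimate for the on-support contraction and E3 for the off-support tails, and keep the ancillary hypotheses of Lemma~\ref{lem:inexactDuality} separate (they indeed come from E1 and E4 on the full $A$). So the skeleton is right.

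The gap is quantitative, and it is precisely the ``balancing act'' you flag at the end but do not resolve. With your uniform choices --- every batch contracts $q_i$ by $1/2$ via E1, and then a plain union bound over $\ell$ batches --- each batch must have size at least a constant times $\mu(F)\, s\,\log n$ (as you yourself write), and you need $\ell \gtrsim \log s$ batches to drive $2^{-\ell}\sqrt{s}$ below $1/4$. That forces $m \gtrsim \mu(F)\, s\,\log n\cdot\log s$, which overshoots the hypothesis $m \ge C_\beta\,\mu(F)\, s\,\log n$ of Theorem~\ref{teo:noiseless} by a logarithmic factor. The same extra log appears in your E3 step: with a uniform $t_i \sim 1/\sqrt{s}$ and $\ell$ batches each needing $m_i \gtrsim \mu(F)\, s\,\log n$, the total is again too large.

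The paper removes this extra logarithm with two ideas you are missing. First, it uses the \emph{vector}-level estimate E2 rather than the operator-norm estimate E1 for the contraction of $q_i$ (legitimate because $q_{i-1}$ is fixed once the earlier batches are revealed); E2 has no $\log s$ overhead, so a batch of size $\asymp \mu(F)\, s / c_i^2$ suffices to contract by $c_i$. Second, and more importantly, it makes the parameters \emph{non-uniform}: the first two batches are ``heavy'' with $c_1=c_2=1/(2\sqrt{\log n})$ and $t_1=t_2=1/(8\sqrt{s})$, each of size $\asymp \mu(F)\, s\,\log n$, while the remaining batches are ``light'' with $c_i=1/2$ and $t_i=\log n/(8\sqrt{s})$, each of size only $\asymp \mu(F)\, s$. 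The heavy batches buy a factor $1/(4\log n)$ in $\|q_i\|$ up front, which is exactly what lets the larger $t_i$'s on the light batches still sum to below $1/4$. For the light batches, a naive union bound would still fail (each succeeds only with constant probability), so the paper oversamples: it draws about $3\log n$ light batches, keeps only those on which \eqref{eq:c_i}--\eqref{eq:t_i} hold, and uses a binomial tail bound to guarantee that at least $\ell-2$ of them succeed with probability $\ge 1-1/n$. With these refinements the total row count is $2\cdot O(\mu(F)\, s\,\log n) + O(\log n)\cdot O(\mu(F)\, s) = O(\mu(F)\, s\,\log n)$, matching the hypothesis.
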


This lemma, which is proved next, implies Theorem
\ref{teo:noiseless}. The reason is that we just need to verify
conditions \eqref{eq:inexactConditions}. However, by Lemmas
\ref{lem:estimate1} and \ref{lem:estimate4}, they jointly hold with
probability at least $1 - 3/n$ provided that $m \geq \mu \cdot s \cdot
(19 \log n + 2)$ (recall that $\mu$ is a shorthand for $\mu(F)$).

\subsection{Proof of Lemma \ref{lem:dualVector}}  

The proof uses the clever {\em golfing scheme} introduced in
\cite{gross09}.  Partition $A$ into row blocks so that from now on,
$A_1$ are the first $m_1$ rows of the matrix $A$, $A_2$ the next $m_2$
rows, and so on. The $\ell$ matrices $\{A_i\}_{i = 1}^\ell$ are
independently distributed, and we have $m_1 + m_2 + \ldots + m_\ell =
m$. As before, $A_{i,T}$ is the restriction of $A_i$ to the columns in
$T$.

The golfing scheme then starts with $v_0 = 0$, inductively defines
\[
v_i = \frac{m}{m_i} A_i^* A_{i,T} (\sgn{x_T} - v_{i-1, T}) + v_{i-1}
\]
for $i = 1, \ldots, \ell$, and sets $v = v_\ell$.  Clearly $v$ is in
the row space of $A$. To simplify notation, let $q_i = \sgn{x_T} -
v_{i,T}$, and observe the two identities
\begin{equation} \label{eq:vT} q_i = \left(\Id - \frac{m}{m_i} A_{i, T}^*
    A_{i,T}\right)q_{i-1} = \prod_{j=1}^i \left(\Id - \frac{m}{m_j}
    A_{j,T}^* A_{j,T}\right)\sgn{x_T}
\end{equation}
and 
\begin{equation} \label{eq:vTc} v = \sum_{i=1}^\ell \frac{m}{m_i}A_i^*
  A_{i,T}\, q_{i-1}, 
\end{equation}
which shall be used frequently. From \eqref{eq:vT} and the fact that
$I - \frac{m}{m_i} A_{i, T}^* A_{i,T}$ should be a contraction
(local isometry {\bf E1}), we see that the norm of $q_i$ decreases
geometrically fast --- the terminology comes from this fact since each
iteration brings us closer to the target just as each golf shot would
bring us closer to the hole --- so that $v_T$ should be close to
$\sgn{x_T}$.  Hopefully, the process keeps the size of $v_{T^c}$ under
control as well.

To control the size of $v_{T^c}$ and that of $\sgn{x_T} - v_T$, we
claim that the following inequalities hold for each $i$ with high
probability: first,
\begin{equation}
\label{eq:c_i}
\twonorm{q_i} \leq c_i \twonorm{q_{i-1}}
\end{equation}
and, second,
\begin{equation}
\label{eq:t_i}
\infnorm{\frac{m}{m_i}A_{i,T^c}^* A_{i,T} \, q_{i-1}} \leq t_i \twonorm{q_{i-1}}
\end{equation}
(the values of the parameters $t_i$ and $c_i$ will be specified
later).  Let $p_1(i)$ (resp.~$p_2(i)$) be the probability that the
bound \eqref{eq:c_i} (resp.~\eqref{eq:t_i}) does not hold.  Lemma
\ref{lem:estimate2} gives
\begin{equation}
\label{eq:p_1}
p_1(i) \leq \exp\left(-\frac{1}{4} (c_i \sqrt{m_i/(s \mu)} - 1)^2\right).
\end{equation}
Thus, if
\begin{equation}
\label{eq:m_1}
m_i \geq \frac{2 + 8 (\beta + \log \alpha)}{c^2_i} s \mu,
\end{equation}
then $p_1(i) \leq \frac{1}{\alpha} e^{-\beta}$. Next, Lemma
\ref{lem:estimate3} gives
\begin{equation}
\label{eq:p_2}
p_2(i) \leq 2 n \exp\left(-\frac{3 t_i^2 m_i}{6 \mu + 2 \mu \sqrt{s} t_i}\right).
\end{equation}  
Thus, if
\begin{equation}
\label{eq:m_2}
m_i \geq \left(\frac{2}{t_i^2 s} + \frac{2}{3 t_i \sqrt{s}}\right)(\beta + \log(2 \alpha) + \log n) s \mu,  
\end{equation}
then $p_2(i) \leq \frac{1}{\alpha} e^{-\beta}$.

It is now time to set the number of blocks $\ell$, the block sizes
$m_i$ and the values of the parameters $c_i$ and $t_i$. These are as
follows:
\begin{itemize}
\item $\ell = \lceil (\log_2 s)/2 \rceil + 2$;
\item $c_1 = c_2 = 1/[2 \sqrt{\log n}]$ and $c_i = 1/2$ for $3 \le i
  \le \ell$;
\item $t_1 = t_2 = 1/[8 \sqrt{s}]$ and $t_i = \log n/[8 \sqrt{s}]$ for
  $3 \le i \le \ell$;
\item $m_1, m_2 \geq 35(1 + \log 4 + \beta) s \mu c_i^{-2}$ and $m_i
  \geq 35(1 + \log 6 + \beta) s \mu c_i^{-2}$ for $3 \le i \le \ell$.
\end{itemize} 
It is not hard to see that the total number of samples $m = \sum_i
m_i$ obeys the assumptions of the lemma. To see why $v$ is a valid
certificate, suppose first that for each $i$, \eqref{eq:c_i} and
\eqref{eq:t_i} hold.  Then \eqref{eq:vT} gives
\[
\twonorm{\sgn{x_T} - v_T} = \twonorm{q_\ell} \leq \twonorm{\sgn{x_T}}
  \, \prod_{i=1}^\ell c_i \leq \frac{\sqrt{s}}{2^\ell} \leq
  \frac{1}{4}
\] 
as desired.  Further, \eqref{eq:vTc} yields
\[
\infnorm{v_{T^c}} \leq \sum_{i=1}^\ell \infnorm{\frac{m}{m_i} A^*_{i,T^c}
  A_{i, T} q_{i-1}} \leq \sum_{i=1}^\ell t_i \twonorm{q_{i-1}} \leq
\sqrt{s} \, \sum_{i=1}^\ell t_i  \prod_{j=1}^{i-1} c_i.
\]
Now with our choice of parameters, the right-hand side is bounded
above by 
\[
\frac{1}{8}\left(1 + \frac{1}{2 \sqrt{\log n}} + \frac{\log n}{
    4 \log n} + \hdots\right) < \frac{1}{4},
\]
which is the desired conclusion.

Now we must show that the bounds \eqref{eq:c_i}, \eqref{eq:t_i} hold
with probability at least $1 - e^{-\beta} - 1/n$.  It follows from
\eqref{eq:m_1} and \eqref{eq:m_2} that $p_1(i), p_2(i) \leq
\frac{1}{4} e^{-\beta}$ for $i = 1,2$ and $p_1(i), p_2(i) \leq
\frac{1}{6} e^{-\beta} \leq 1/6$ for $i \geq 3$.  Thus, $p_1(1) +
p_1(2) + p_2(1) + p_2(2) \leq e^{-\beta}$ and $p_1(i) + p_2(i) \leq
1/3$ for $i \geq 3$.  Now the union bound would never show that
\eqref{eq:c_i} and \eqref{eq:t_i} hold with probability at least $1 -
1/n$ for all $i \geq 3$ because of the weak bound on $p_1(i) +
p_2(i)$.  However, using a clever idea in \cite{gross09}, it is not
necessary for each subset of rows to `succeed' and give the desired
bounds.  Instead, one can sample a `few' extra batches of rows, and
throw out those that fail our requirements.  We only need $\ell-2$
working batches, after the first 2.  In particular, pick $\ell' + 2 >
\ell$ batches of rows, so that we require $m \geq 2\cdot \lceil 140(1
+ \log 4 + \beta) \cdot \mu \cdot s \cdot \log n \rceil + \ell \cdot
\lceil 140 (1 + \log 6 + \beta) s \mu\rceil$ (note that we have made
no attempt to optimize constants).  Now as in \cite{gross09}, let $N$
be the the number of batches --- after the first 2 --- obeying
\eqref{eq:c_i} and \eqref{eq:t_i}; this $N$ is larger
(probabilistically) than a binomial($\ell', 2/3$) random variable.
Then a standard concentration bound \cite[Theorem
2.3a]{mcdiarmid}
\[
\P(N < \ell-2) \leq \exp\left(-2\, \frac{(\frac{2}{3} \ell' - \ell +
    2)^2}{\ell'}\right)
\] 
tells us that if we were to pick $\ell' = 3\lceil \log n \rceil + 1$,
we would have
\[
\P(N < \ell-2) \leq 1/n.
\]

In summary, from $p_1(1) + p_2(1) + p_1(2) + p_2 (2) \leq e^{- \beta}$
and the calculation above, the dual certificate $v$ obeys the required
properties with probability at least $1 - 1/n - e^{-\beta}$, provided
that $m \geq C (1 + \beta) \cdot \mu \cdot s \cdot \log n$.

\section{General Signal Recovery from Noisy Data}
\label{sec:noisy}

We prove the general recovery theorems from Section \ref{sec:main-result}
under the assumption of Gaussian white noise but would like to
emphasize that the same result would hold for other noise
distributions. Specifically, suppose we have the noisy model
\begin{equation}
\label{eq:noiseCorrelations}
y = Ax + z, \quad \text{where} \quad \infnorm{A^* z} \leq \lambda_n
\end{equation}
holds with high probability. Then the conclusions of Theorem
\ref{teo:noisyDS} remain valid. In details, the Dantzig selector with
constraint $\infnorm{A^*(y - A \bar x)} \le 4 \lambda_n$ obeys
\begin{equation}
\label{eq:twoBound2}
\twonorm{\hat{x} - x}  \leq C_1 (1 + \alpha^2) \,  \left[\frac{\onenorm{x - x_s}}{\sqrt{s}} + \lambda_n \sqrt{s}\right]
\end{equation}
with high probability. Hence, \eqref{eq:twoBoundDS} is a special case
corresponding to $\lambda_n = 2.5 \sigma_m \sqrt{\log n} = 2.5 \sigma
\sqrt{\frac{\log n}{m}}$.  Likewise, the bound on the $\ell_1$ loss
\eqref{eq:oneBoundDS} with $\lambda_n$ in place of $\sigma
\sqrt{\frac{\log n}{m}}$ holds as well. A similar generality applies
to the LASSO as well, although in this case we need a second noise
correlation bound, namely,
\[
\infnorm{A^*_{T^c} (I - P) z} \leq \lambda_n. 
\]

Now when $z \sim \mathcal{N}(0, \Id)$ and $A$ is a fixed matrix, we
have
\begin{equation}
\label{eq:boundnoise}
\infnorm{A^* z} \leq 2 \|A\|_{1,2} \sqrt{\log n}
\end{equation}
with probability at least $1-1/2n$; here, $\|A\|_{1,2}$ is the
maximum column norm of $A$.  Indeed, the $i$th component of $A^*z$ is
distributed as $\mathcal{N}(0, \twonorm{A_{\{i\}}}^2)$ and, therefore,
the union bound gives
\[
\P(\infnorm{A^* z} > 2 \|A\|_{1,2} \sqrt{\log n}) \le n
\P(|\mathcal{N}(0,1)| > 2 \sqrt{\log n}).
\]
The conclusion follows for $n \ge 2$ from the well-known tail bound
$\P(|\mathcal{N}(0,1)| > t) \le 2\phi(t)/t$, where $\phi$ is the
density of the standard normal distribution. The same steps
demonstrate that
\begin{equation}
\label{eq:noiseCorrelation2}
\infnorm{A^* (I - P)z} \leq 2 \|(I-P) A\|_{1,2} \sqrt{\log n} \leq 2 \|A\|_{1,2} \sqrt{\log n}
\end{equation}
with probability at least $1 - 1/2n$.

\subsection{Proof of Theorem \ref{teo:noisy}}
\label{sec:lasso-proof}

We begin with a few simplifying assumptions. First, we assume in the
proof that $\sigma_m = 1$ since the general result follows from a simple
rescaling. Second, because we are interested in situations where $m$
is much smaller than $n$, we assume for simplicity of presentation
that $m \leq n$ although our results extend with only a change to the
numerical constants involved if $m \leq n^{O(1)}$. In truth, they
extend without any assumption on the relation between $m$ and $n$, but
the general presentation becomes a bit more complicated.

Fix $s$ obeying $s \leq \bar s$, and let $T = \supp{x_s}$. We prove
the error bounds of Theorem \ref{teo:noisy} with $s$ fixed, and the
final result follows by considering that $s$ which minimizes either
the $\ell_2$ \eqref{eq:twoBound} or $\ell_1$ \eqref{eq:oneBound} error
bound. This is proper since the minimizing $s$ has a deterministic
value.  With $T$ as above, we assume in the rest of the proof that
\begin{itemize}
\item[(i)] all of the requirements for noiseless recovery in Lemma
  \ref{lem:inexactDuality} are met,
\item[(ii)] and that the inexact dual vector $v$ of Section
  \ref{sec:noiseless} is successfully constructed.
\end{itemize}
All of this occurs with probability at least $1 - 4/n - e^{-\beta}$.
Further, we assume that 
\begin{itemize}
\item[(iii)] the weak RIP holds with $\delta = 1/4$, $r = \frac{m}{C
    (1 + \beta) \cdot \mu(F) \cdot \log^5 n} \vee s$ and $T$ is as
  above. 
\end{itemize}
This occurs with probability at least $1 - 5 e^{-\beta}$, and implies
the RIP at sparsity level $r$ and restricted isometry constant $\delta
= 1/4$.  Lastly, we assume 
\begin{itemize}
\item[(iv)] the noise correlation bound 
\begin{equation}
\label{eq:precisebound}
\infnorm{A^* z} \leq 2.5 \sqrt{\log n}. 
\end{equation}
\end{itemize}
Assuming the weak RIP above, which implies $\|A\|_{1,2} \le 5/4$, the
conditional probability that this occurs is at least $1-1/2n$ because
of \eqref{eq:boundnoise}. Because the weak RIP implies the local isometry
condition {\bf E1} with $\delta = 1/4$, all of these conditions
together hold with probability at least $1 - 4/n - 6e^{-\beta}$.  All
of the steps in the proof are now deterministic consequences of
(i)--(iv); from now on, we will assume they hold.

With $h = \hat x - x$, our goal is to bound both the $\ell_2$ and
$\ell_1$ norms of $h$.  We will do this with a pair of lemmas.  The
first is frequently used (recall that $\lambda$ is set to
$10\sqrt{\log n}$).
\begin{lemma}[Tube constraint]
\label{lem:cone}
The error $h$ obeys
\[
\infnorm{A^* A h} \leq \frac{5 \lambda}{4}.
\]
\end{lemma}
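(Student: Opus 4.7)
The plan is to use the optimality (KKT) conditions for the LASSO and combine them with the noise correlation bound (iv). Since we have normalized to $\sigma_m = 1$, the objective is $\tfrac{1}{2}\twonorm{A\bar x - y}^2 + \lambda \onenorm{\bar x}$, and first-order optimality at $\hat x$ reads $A^*(y - A\hat x) \in \lambda\, \partial \onenorm{\hat x}$. Since every element of $\partial \onenorm{\hat x}$ has entries in $[-1,1]$, this immediately gives the dual-feasibility/tube bound
\[
\infnorm{A^*(y - A\hat x)} \leq \lambda.
\]

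Next I would substitute $y = Ax + z$, which yields $y - A\hat x = z - Ah$, so $\infnorm{A^*(Ah - z)} \leq \lambda$. A triangle inequality then isolates the quantity of interest:
\[
\infnorm{A^* A h} \leq \lambda + \infnorm{A^* z}.
\]
Finally, assumption (iv) gives $\infnorm{A^* z} \leq 2.5\sqrt{\log n}$, which equals $\lambda/4$ since $\lambda = 10\sqrt{\log n}$. Plugging in produces the claimed bound $\infnorm{A^* A h} \leq \lambda + \lambda/4 = 5\lambda/4$.

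There is essentially no obstacle here: the lemma is a one-line consequence of LASSO optimality plus the pre-established deterministic noise bound (iv). The only care needed is the bookkeeping of the $\sigma_m = 1$ normalization (so that the regularization parameter in the first-order condition is exactly $\lambda$, not $\lambda \sigma_m$) and the numerical check that the noise bound in (iv) is $\lambda/4$ rather than $\lambda$ itself — this is where the particular choice $\lambda = 10\sqrt{\log n}$ versus the noise level $2.5\sqrt{\log n}$ is used to get the constant $5/4$ on the right-hand side.
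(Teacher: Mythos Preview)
Your proof is correct and follows essentially the same route as the paper: derive $\infnorm{A^*(y-A\hat x)}\le\lambda$ from first-order optimality (subgradient) of the LASSO, apply the triangle inequality with $y=Ax+z$, and invoke the noise bound (iv) together with $\lambda=10\sqrt{\log n}$ to get $\infnorm{A^*z}\le\lambda/4$.
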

\begin{proof}
  As shown in \cite[Lemma 3.1]{CP07}, writing that the zero vector is
  a subgradient of the LASSO functional $\frac12 \|y - A\bar
  x\|_{\ell_2}^2 + \lambda \|\bar x\|_{\ell_1}$ at $\bar x = \hat x$
  gives
 \[
 \infnorm{A^*(y - A \hat{x})} \leq \lambda.
 \]
 Then it follows from the triangle inequality that
 \[
 \infnorm{A^*A h} \leq \infnorm{A^* (y - A\hat{x})} + \infnorm{A^* z} \leq \lambda + \infnorm{A^* z}, 
 \]
 where $z$ is our noise term.  The claim is a consequence of
 \eqref{eq:precisebound}.
\end{proof}

\begin{lemma}
\label{lem:hTc}
The error $h$ obeys
\begin{equation}
\label{eq:hTc}
\onenorm{h_{T^c}} \leq C_0 (s \lambda + \onenorm{x_{T^c}}) 
\end{equation}
for some numerical constant $C_0$. 
\end{lemma}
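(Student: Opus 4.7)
The plan is to combine the first-order condition for the LASSO with the weak restricted isometry property from assumption (iii). Starting from $F(\hat x)\le F(x)$ where $F(\bar x) = \tfrac12\twonorm{A\bar x - y}^2 + \lambda\onenorm{\bar x}$, expanding the squares (and using $y = Ax + z$) yields
\[
\tfrac12\twonorm{Ah}^2 + \lambda\onenorm{\hat x} \le \lambda\onenorm{x} + \langle A^*z, h\rangle.
\]
Combining the subgradient inequality $\onenorm{x_T+h_T}\ge\onenorm{x_T}+\langle\sgn{x_T},h_T\rangle$ with the reverse triangle inequality $\onenorm{x_{T^c}+h_{T^c}}\ge\onenorm{h_{T^c}}-\onenorm{x_{T^c}}$, the noise bound $|\langle A^*z,h\rangle|\le\infnorm{A^*z}\onenorm{h}\le(\lambda/4)\onenorm{h}$ from (iv), and the crude estimate $|\langle\sgn{x_T},h_T\rangle|\le\onenorm{h_T}$, I arrive at the cone-type inequality
\[
\tfrac12\twonorm{Ah}^2 + \tfrac{3\lambda}{4}\onenorm{h_{T^c}} \le \tfrac{5\lambda}{4}\onenorm{h_T} + 2\lambda\onenorm{x_{T^c}}.
\]

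To control the $\onenorm{h_T}$ on the right, I would pass to $\onenorm{h_T}\le\sqrt{s}\,\twonorm{h_T}$ and bound $\twonorm{h_T}$ via the weak RIP (assumption (iii), with $\delta = 1/4$ and $r\ge s$). Partition $T^c$ into disjoint blocks $T_1,T_2,\ldots$ of size $r$, sorted so that $T_1$ collects the indices of the largest entries of $h_{T^c}$, the next largest go into $T_2$, etc. The classical RIP polarization trick, applied to $\langle Ah, A(h_T+h_{T_1})\rangle$, uses the weak RIP on the fixed set $T$ together with each $T_j$ (of size $\le r$) to bound cross-terms $|\langle Ah_{T_j}, A(h_T+h_{T_1})\rangle|\le\delta\,\twonorm{h_T+h_{T_1}}\twonorm{h_{T_j}}$; combined with the sorting bound $\sum_{j\ge 2}\twonorm{h_{T_j}}\le\onenorm{h_{T^c}}/\sqrt r$, this produces an inequality of the form
\[
\twonorm{h_T}\,\le\,\twonorm{h_T+h_{T_1}}\,\le\, C_3\,\twonorm{Ah}\,+\,C_4\,\onenorm{h_{T^c}}/\sqrt r,
\]
so $\onenorm{h_T}\le C_3\sqrt{s}\,\twonorm{Ah}+C_4\sqrt{s/r}\,\onenorm{h_{T^c}}$.

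Substituting this back into the cone inequality, the $\sqrt{s}\,\twonorm{Ah}$ contribution is handled by AM-GM against the $\tfrac12\twonorm{Ah}^2$ on the left, via $\tfrac{5\lambda C_3\sqrt s}{4}\twonorm{Ah}\le\tfrac12\twonorm{Ah}^2+\tfrac{25C_3^2 s\lambda^2}{32}$. After absorbing $\tfrac12\twonorm{Ah}^2$, the residual $O(s\lambda^2)$ term divides by $\lambda$ against the $(3\lambda/4)\onenorm{h_{T^c}}$ on the left and so contributes the desired $O(s\lambda)$ bound. The $\sqrt{s/r}\,\onenorm{h_{T^c}}$ contribution on the right produces a coefficient $\tfrac{5\lambda C_4}{4}\sqrt{s/r}$ on $\onenorm{h_{T^c}}$, which is strictly smaller than the left-hand coefficient $\tfrac{3\lambda}{4}$ provided $r$ is a sufficiently large constant multiple of $s$ — guaranteed since $r\ge s$ in assumption (iii), with a small tightening of $\delta$ below $1/4$ if needed. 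Rearranging then gives $\onenorm{h_{T^c}}\le C_0(s\lambda+\onenorm{x_{T^c}})$, as claimed.

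The main obstacle I anticipate is the top block $T_1$ in the decomposition: it carries no sorting bound, so $\sum_j\twonorm{h_{T_j}}$ cannot simply be bounded by $\onenorm{h_{T^c}}/\sqrt r$ alone. The clean remedy is to lump $h_{T_1}$ together with $h_T$ and apply the polarization identity against $A(h_T+h_{T_1})$, which leans on the weak RIP for the size-$(s+r)$ index set $T\cup T_1$; this is precisely the ingredient not provided by the plain local isometry estimate {\bf E1}, and is the reason the weak RIP hypothesis (iii) enters here.
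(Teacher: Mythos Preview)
Your argument is essentially correct but takes a genuinely different route from the paper. The paper never invokes the weak RIP inside Lemma~\ref{lem:hTc}; instead it controls $\langle h_T,\sgn{x_T}\rangle$ through the inexact dual certificate $v$ of assumption~(ii). Concretely, writing $\langle h_T,\sgn{x_T}\rangle = \langle h_T,\sgn{x_T}-v_T\rangle + \langle h,v\rangle - \langle h_{T^c},v_{T^c}\rangle$, the small norms $\twonorm{\sgn{x_T}-v_T}\le 1/4$ and $\infnorm{v_{T^c}}\le 1/4$ furnish small coefficients on $\twonorm{h_T}$ and $\onenorm{h_{T^c}}$, while $\abs{\langle h,v\rangle}\le C_0\sqrt{s}\,\twonorm{Ah}$ (since $v=A^*w$ with $\twonorm{w}=O(\sqrt{s})$ by the golfing construction) is absorbed by AM--GM against $\tfrac12\twonorm{Ah}^2$. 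The required bound on $\twonorm{h_T}$ comes not from weak RIP but from the tube constraint and the off-support incoherence estimate~{\bf E4}, i.e.\ assumption~(i). The noise term $\langle Ah,z\rangle$ is handled by a projection decomposition $P+(\Id-P)$ onto the range of $A_T$ (Lemma~\ref{lem:Ahz}) rather than by the cruder H\"older step you use.

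Your approach has the appealing feature of bypassing the dual certificate entirely for this lemma: you need only the weak RIP~(iii) and the noise bound~(iv). The paper's approach, by contrast, keeps the weak RIP out of Lemma~\ref{lem:hTc} altogether, reserving it for the final step of Theorem~\ref{teo:noisy}; this makes the modular structure (dual certificate $\Rightarrow$ $\ell_1$ cone, weak RIP $\Rightarrow$ $\ell_2$ bound) cleaner and also transfers directly to the Dantzig selector proof. One caution about your constants: the claim that ``$r$ is a sufficiently large constant multiple of $s$'' does not follow from $r\ge s$; in the critical regime $m\asymp \mu s\log n$ one has exactly $r=s$. With $\delta=1/4$, $r=s$, and blocks of size $r$, the coefficient $(5/4)C_4\sqrt{s/r}$ on the right can exceed the $3/4$ on the left. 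The fix is either to use blocks of size $r/2$ (so that the polarization $|\langle A h_{\bar T_1},Ah_{T_j}\rangle|\le\delta\twonorm{h_{\bar T_1}}\twonorm{h_{T_j}}$ falls under the weak RIP hypothesis $|T_1\cup T_j|\le r$, and the resulting $C_4=\sqrt{2}\delta/(1-\delta)$ is small enough), or, as you note, to tighten $\delta$ slightly below $1/4$. Either adjustment closes the gap.
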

Before proving this lemma, we show that it gives Theorem
\ref{teo:noisy}. Some of the steps are taken from the proof of
Theorem 1.1 in \cite{DS}.

\newcommand{\Tone}{{\bar{T}_1}}
\begin{proof}[Theorem \ref{teo:noisy}]
  Set $r$ as in (iii) above. We begin by partitioning $T^c$ and let
  $T_1$ be the indices of the $r$ largest entries of $h_{T^c}$, $T_2$
  be those of the next $r$ largest, and so on. We first bound
  $\twonorm{h_{T \cup T_1}}$ and set $\Tone = T \cup T_1$ for short.
  The weak RIP assumption (iii) gives
  \begin{equation}
    \frac{3}{4} \twonorm{h_{\Tone}}^2  \leq \twonorm{A_{\Tone} h_{\Tone}}^2 = \<A_{\Tone} h_{\Tone}, Ah\> - \<A_{\Tone} h_{\Tone}, A_{\Tone^c} h_{\Tone^c}\>. \label{eq:hTbound}
  \end{equation}
  From Lemma \ref{lem:cone}, we have
\[
\<A_{\Tone} h_{\Tone}, Ah\> = \<h_{\Tone}, A_{\Tone}^* A h\> \leq
\onenorm{h_{\Tone}} \, \infnorm{A_{\Tone}^* A h} \le \frac{5}{4}
\lambda \onenorm{h_{\Tone}}.
\]
Since $\Tone$ has cardinality at most $2s$, the Cauchy-Schwartz
inequality gives
\begin{equation}
\label{eq:firstTerm}
\<A_{\Tone} h_{\Tone}, Ah\> \leq 
\frac{5}{4} \lambda \sqrt{2s}  \twonorm{h_{\Tone}}. 
\end{equation}
Next, we bound $\abs{\<A_{\Tone}h_{\Tone}, A_{\Tone^c} h_{\Tone^c}\>}
\leq \abs{\<A_T h_{T}, A_{\Tone^c} h_{\Tone^c}\>} +
\abs{\<A_{T_1}h_{T_1}, A_{\Tone^c} h_{\Tone^c}\>}$.  We have
\begin{equation}
\label{eq:innerProduct}
\<A_T h_T, A_{\Tone^c} h_{\Tone^c}\> \leq \sum_{j\geq 2} \abs{\<A_T h_T, A_{T_j} h_{T_j}\>}.
\end{equation}
As shown in \cite[Lemma 1.2]{CT05}, the parallelogram identity
together with the weak RIP imply that
\[
\abs{\<A_T h_T, A_{T_j} h_{T_j}\>} \leq \frac{1}{4} \twonorm{h_T}
\twonorm{h_{T_j}}
\] 
and, therefore, 
\begin{equation}
\label{eq:innerProductT}
\<A_T h_T, A_{\Tone^c} h_{\Tone^c}\> \leq 
\frac{1}{4} \twonorm{h_T} \sum_{j\geq 2} \twonorm{h_{T_j}}.
\end{equation}
To bound the summation, we use the now standard result
\cite[(3.10)]{DS} 
\begin{equation}
\label{eq:summation}
\sum_{j\geq 2} \twonorm{h_{T_j}} \leq r^{-1/2} \onenorm{h_{T^c}},  
\end{equation}
which gives 
\[
\abs{\<A_T h_T, A_{\Tone^c} h_{\Tone^c}\>} \leq \frac{1}{4} r^{-1/2}
\twonorm{h_T} \onenorm{h_{T^c}}.
\] 
The same analysis yields $\abs{\<A_{T_1} h_{T_1}, A_{\Tone^c}
  h_{\Tone^c}\>} \leq \tfrac{1}{4} r^{-1/2} \twonorm{h_{T_1}} \onenorm{h_{T^c}}$
and thus,
\[
\abs{\<A_{\Tone} h_{\Tone}, A_{\Tone^c} h_{\Tone^c}\>} \leq \frac12 r^{-1/2}
\twonorm{h_{\Tone}} \onenorm{h_{T^c}}.
\]
Plugging these estimates into \eqref{eq:hTbound} gives
\begin{equation}
  \twonorm{h_{\Tone}} \leq \frac12 \Big(\frac{5}{2}\sqrt{2s} \lambda + r^{-1/2} \onenorm{h_{T^c}}\Bigr).\label{eq:hTBounded}
\end{equation}

The conclusion is now one step away.  Obviously,
\begin{align*}
  \twonorm{h}  \leq \twonorm{h_{\Tone}} + \sum_{j\geq 2}
  \twonorm{h_{T_j}} 
  & \leq \twonorm{h_{\Tone}} + r^{-1/2} \onenorm{h_{T^c}}\\
  & \le \frac12 \Big(\frac{5}{2}\sqrt{2s} \lambda + 3r^{-1/2} \onenorm{h_{T^c}}\Bigr),
\end{align*}
where the second line follows from \eqref{eq:hTBounded}.  Lemma
\ref{lem:hTc} completes the proof for the $\ell_2$ error.  For the
$\ell_1$ error, note that by the Cauchy-Schwartz
inequality
\[\onenorm{h} = \onenorm{h_T} + \onenorm{h_{T^c}} \leq \sqrt{s} \twonorm{h_T}  + \onenorm{h_{T^c}} \leq \sqrt{s} \twonorm{h_{\Tone}} + \onenorm{h_{T^c}}.\]
Combine this with \eqref{eq:hTBounded} and Lemma \ref{lem:hTc}.
\end{proof}

\subsection{Proof of Lemma \ref{lem:hTc}}

Since $\hat{x}$ is the minimizer to \eqref{eq:Lasso},
\[
\frac12 \twonorm{A \hat{x} - y}^2 + \lambda \onenorm{\hat{x}} \leq
\frac12 \twonorm{A x - y}^2 + \lambda \onenorm{x},
\]
which can be massaged into the more convenient form
\begin{equation*}
\frac12 \twonorm{A h}^2 + \lambda \onenorm{\hat{x}} \leq \<Ah, z\> + 
\lambda \onenorm{x}.
\end{equation*}
\begin{lemma}
\label{teo:cone2} 
\[
\onenorm{\hat{x}} \geq \onenorm{x} + \<h_T, \text{\em sgn}(x_T)\> +
\onenorm{h_{T^c}} - 2\onenorm{x_{T^c}}.
\]
\end{lemma}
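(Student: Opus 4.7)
The plan is to split $\onenorm{\hat{x}}$ into its contributions on $T$ and $T^c$ and apply a convexity-type lower bound on each piece, then recombine using $\onenorm{x} = \onenorm{x_T} + \onenorm{x_{T^c}}$.

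First I would write $\hat{x} = x + h$ and decompose
\[
\onenorm{\hat{x}} = \onenorm{x_T + h_T} + \onenorm{x_{T^c} + h_{T^c}}.
\]
For the term on $T$, since $\sgn{x_T}$ is a subgradient of the $\ell_1$ norm at $x_T$, convexity gives
\[
\onenorm{x_T + h_T} \geq \onenorm{x_T} + \<h_T, \sgn{x_T}\>.
\]
For the term on $T^c$, I would use the reverse triangle inequality
\[
\onenorm{x_{T^c} + h_{T^c}} \geq \onenorm{h_{T^c}} - \onenorm{x_{T^c}}.
\]

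Adding these two bounds and using $\onenorm{x_T} = \onenorm{x} - \onenorm{x_{T^c}}$ produces
\[
\onenorm{\hat{x}} \geq \onenorm{x} - \onenorm{x_{T^c}} + \<h_T, \sgn{x_T}\> + \onenorm{h_{T^c}} - \onenorm{x_{T^c}},
\]
which is exactly the claimed inequality. The only mildly nontrivial point is the reverse triangle inequality step, which accounts for the factor of $2$ in front of $\onenorm{x_{T^c}}$ (one copy coming from replacing $\onenorm{x_T}$ by $\onenorm{x} - \onenorm{x_{T^c}}$, the other from the fact that $x_{T^c}$ is not forced to zero). There is no real obstacle here; the lemma is essentially a restatement of the standard fact that $\ell_1$ error bounds must pay an $\ell_1$-tail penalty when $x$ is not exactly supported on $T$.
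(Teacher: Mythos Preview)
Your proof is correct and follows essentially the same route as the paper. The paper compresses the argument into one line---writing $\onenorm{\hat{x}} = \<\hat{x}, \sgn{\hat{x}}\> \geq \<x_T + h_T, \sgn{x_T}\> + \onenorm{x_{T^c} + h_{T^c}}$ and then invoking the triangle inequality---but this is exactly your subgradient bound on $T$ combined with the reverse triangle inequality on $T^c$, followed by the substitution $\onenorm{x_T} = \onenorm{x} - \onenorm{x_{T^c}}$.
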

\begin{proof} 
  We have $\onenorm{\hat{x}} = \<\hat{x}, \sgn{\hat{x}}\> \geq \<x_T +
  h_T, \sgn{x_T}\> + \onenorm{x_{T^c} + h_{T^c}}$ and the claim
  follows from the triangle inequality.
\end{proof} 
It follows from this that
\begin{equation}
\label{eq:first}
\frac{1}{2} \twonorm{A h}^2 + \lambda \onenorm{h_{T^c}} \leq \<Ah, z\> -
\lambda \<h_T, \sgn{x_T}\> + 2\lambda \onenorm{x_{T^c}},  
\end{equation}
and the proof is now a consequence of the two short lemmas below.

\begin{lemma}
  \label{lem:Ahz} 
\begin{equation}
\label{eq:AhzBounded}
\<Ah, z\> \leq \frac{5}{12} s \lambda^2 + \frac{\lambda}{4} \onenorm{h_{T^c}}.
\end{equation}
\end{lemma}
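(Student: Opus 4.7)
The plan is to decompose the noise $z$ into its component in $\mathrm{range}(A_T)$ and its orthogonal complement, so that the $\|h_T\|_{\ell_2}$ terms that would naively appear from Cauchy--Schwartz on $\langle h_T,A_T^*z\rangle$ can be rerouted through quantities we already control: the tube constraint of Lemma \ref{lem:cone} and the local isometry of $A_T$ from {\bf E1} (which is a consequence of assumption (iii), the weak RIP with $\delta=1/4$). Let $P \coloneq A_T(A_T^*A_T)^{-1}A_T^*$ be the orthogonal projector onto $\mathrm{range}(A_T)$, and write
\[
\langle Ah,z\rangle \;=\; \langle Ah,\,Pz\rangle \;+\; \langle Ah,\,(I-P)z\rangle.
\]

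For the second summand, since $A_T^*(I-P)=0$, the inner product equals $\langle h_{T^c},\,A_{T^c}^*(I-P)z\rangle$, and I would bound it by $\|h_{T^c}\|_{\ell_1}\,\|A_{T^c}^*(I-P)z\|_{\ell_\infty}$. The second noise correlation bound \eqref{eq:noiseCorrelation2}, which holds with probability at least $1-1/2n$ given $\|A\|_{1,2}\le 5/4$ (itself a consequence of the weak RIP), yields $\|A^*(I-P)z\|_{\ell_\infty}\le 2\|A\|_{1,2}\sqrt{\log n}\le 2.5\sqrt{\log n}=\lambda/4$, delivering exactly the $\tfrac{\lambda}{4}\|h_{T^c}\|_{\ell_1}$ piece of the stated bound.

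For the first summand, write $Pz=A_T q$ with $q\coloneq (A_T^*A_T)^{-1}A_T^*z\in\R^s$, so that $\langle Ah,Pz\rangle=\langle A_T^*Ah,\,q\rangle$. Then Cauchy--Schwartz gives $\langle A_T^*Ah,q\rangle\le \|A_T^*Ah\|_{\ell_2}\,\|q\|_{\ell_2}$. For the first factor, the elementary inequality $\|\cdot\|_{\ell_2}\le\sqrt{s}\|\cdot\|_{\ell_\infty}$ together with Lemma \ref{lem:cone} yields $\|A_T^*Ah\|_{\ell_2}\le\sqrt{s}\,\|A^*Ah\|_{\ell_\infty}\le \tfrac{5}{4}\sqrt{s}\,\lambda$. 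For the second factor, the local isometry bound $\|(A_T^*A_T)^{-1}\|\le 1/(1-\delta)=4/3$ and the noise bound \eqref{eq:precisebound} give $\|q\|_{\ell_2}\le \tfrac{4}{3}\|A_T^*z\|_{\ell_2}\le \tfrac{4}{3}\sqrt{s}\,\|A^*z\|_{\ell_\infty}\le \tfrac{\sqrt{s}\,\lambda}{3}$. Multiplying, $\langle Ah,Pz\rangle\le \tfrac{5}{4}\sqrt{s}\lambda\cdot\tfrac{\sqrt{s}\lambda}{3}=\tfrac{5}{12}s\lambda^2$. Adding the two pieces closes the lemma.

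The only nontrivial step is the decomposition itself: the projector $P$ is chosen precisely so that $A_T^*(I-P)=0$, which removes the $h_T$ contribution from the ``orthogonal'' piece without paying the $\|h_T\|_{\ell_2}$ that a direct Cauchy--Schwartz would have introduced, and so that the ``parallel'' piece $Pz$ lies in a known $s$-dimensional subspace where one can exploit both the $\ell_\infty\to\ell_2$ conversion on $A^*z$ and the tube constraint on $A^*Ah$. I expect the other ingredients to be routine provided we are allowed to invoke, in addition to assumptions (i)--(iv), the ancillary bound \eqref{eq:noiseCorrelation2}; its small additional failure probability is absorbed into the already-stated overall probability estimate for Theorem \ref{teo:noisy}.
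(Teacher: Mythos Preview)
Your proof is correct and follows essentially the same approach as the paper: the same orthogonal decomposition via $P=A_T(A_T^*A_T)^{-1}A_T^*$, the same use of \eqref{eq:noiseCorrelation2} for the orthogonal piece, and the same chain of local-isometry and tube-constraint bounds for the parallel piece. The only cosmetic difference is that the paper pairs $\langle A_T^*Ah,\,q\rangle$ via the $\ell_\infty$--$\ell_1$ H\"older inequality whereas you use Cauchy--Schwartz; the $\sqrt{s}$ factors rearrange to give the identical constant $\tfrac{5}{12}$.
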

\begin{proof}
  The proof is similar to an argument in \cite{CP07}. Let $P = A_T
  (A^*_T A_T)^{-1} A^*_T$ be the orthogonal projection onto the range
  of $A_T$.  Then
\begin{align}
\<Ah, z\> &= \<P A h, z\> + \<(\Id - P) A_{T^c} h_{T^c}, z\>\nonumber\\
 &= \<A^*_T A h, (A^*_T A_T)^{-1} A^*_T z\> + \<h_{T^c}, A^*_{T^c} (\Id - P)z\>\nonumber\\
 &\leq \infnorm{A_T^* A h} \onenorm{(A^*_T A_T)^{-1} A^*_T z} + \onenorm{h_{T^c}} \infnorm{A^*_{T^c} (\Id - P)z}\nonumber\\
 &\leq \frac{5}{4} \lambda \onenorm{(A^*_T A_T)^{-1} A^*_T z} + 2.5 \sqrt{\log n} \onenorm{h_{T^c}}.\label{eq:boundAhz}
 \end{align}
 The last line follows from Lemma \ref{lem:cone} and
 \eqref{eq:noiseCorrelation2}. 
  We now
 bound the first term, and write
\begin{align}
  \onenorm{(A^*_T A_T)^{-1} A^*_T z} &\leq \sqrt{s} \twonorm{(A^*_T A_T)^{-1} A^*_T z}\nonumber\\
  &\leq \frac{4}{3} \sqrt{s} \twonorm{A^*_T z}\nonumber\\
  &\leq \frac{4}{3} s \infnorm{A^*_T z} \leq \frac{1}{3} s\, 
  \lambda. \label{eq:AAAz}
\end{align}
The first inequality follows from Cauchy-Schwartz, the second from
$\|A_T^* A_T\| \le 4/3$, and the fourth from $\|A^* z\|_{\ell_\infty}
\le \lambda/4$. Inequality \eqref{eq:boundAhz} establishes the claim.
\end{proof}

\begin{lemma}
\label{lem:hTsgnxT} 
\begin{equation}
  \abs{\<h_T, \text{\em sgn}(x_T)\>} \leq C s \lambda + \frac{7}{12} \onenorm{h_{T^c}} + \frac{1}{2 \lambda} \twonorm{A h}^2. 
\end{equation}
\end{lemma}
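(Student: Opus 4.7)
The plan is to exploit the inexact dual certificate $v = A^* w$ constructed by the golfing scheme in Section \ref{sec:noiseless}, which satisfies $\twonorm{v_T - \sgn{x_T}} \le 1/4$ and $\infnorm{v_{T^c}} \le 1/4$. The basic idea is the standard decomposition
\[
\<h_T, \sgn{x_T}\> = \<h_T, \sgn{x_T} - v_T\> + \<h_T, v_T\>,
\]
followed by rewriting the second term using that $v$ lies in the row space of $A$:
\[
\<h_T, v_T\> = \<h, v\> - \<h_{T^c}, v_{T^c}\> = \<Ah, w\> - \<h_{T^c}, v_{T^c}\>.
\]
Cauchy--Schwartz combined with the dual certificate bounds immediately gives
\[
|\<h_T, \sgn{x_T} - v_T\>| \le \tfrac{1}{4}\twonorm{h_T}, \qquad |\<h_{T^c}, v_{T^c}\>| \le \tfrac{1}{4}\onenorm{h_{T^c}}.
\]

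For the cross term $\<Ah, w\>$, I would apply the AM--GM inequality to split it as
\[
|\<Ah, w\>| \le \twonorm{Ah}\twonorm{w} \le \frac{1}{2\lambda}\twonorm{Ah}^2 + \frac{\lambda}{2}\twonorm{w}^2,
\]
so the main quantitative task is to show $\twonorm{w}^2 \le C s$. Here $w$ is assembled block-wise from the golfing scheme: since $v = \sum_{i=1}^\ell \tfrac{m}{m_i} A_i^* A_{i,T} q_{i-1}$, the block $w_i \in \R^{m_i}$ equals $\tfrac{m}{m_i} A_{i,T} q_{i-1}$. Using the block-local isometry $\|\tfrac{m}{m_i} A_{i,T}^* A_{i,T}\| \le 1 + \delta_i$ guaranteed by {\bf E1}, one gets $\twonorm{w_i}^2 \le \tfrac{m}{m_i}(1+\delta_i)\twonorm{q_{i-1}}^2$, and then the geometric decay $\twonorm{q_{i-1}}^2 \le s\prod_{j<i} c_j^2$ together with the calibrated choices of $m_i$ and $c_i$ in Section \ref{sec:noiseless} makes the sum $\sum_i \tfrac{m}{m_i}(1+\delta_i)\prod_{j<i}c_j^2$ bounded by an absolute constant, yielding $\twonorm{w}^2 \le Cs$. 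This bookkeeping step, while mostly mechanical, is the main obstacle: one has to verify that the first two (largest) blocks contribute $O(s)$ while the later blocks contribute geometrically decaying remainders, and this is tight enough that the $\log n$ factors in $m_1, m_2$ must cancel against the small $c_1 = c_2 = 1/(2\sqrt{\log n})$ factors exactly as designed.

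Finally, to handle the $\tfrac14\twonorm{h_T}$ remainder I would use the identity $h_T = (A_T^*A_T)^{-1}(A_T^*Ah - A_T^*A_{T^c}h_{T^c})$, apply the local isometry to control $\|(A_T^*A_T)^{-1}\| \le 4/3$, the tube constraint (Lemma \ref{lem:cone}) together with $|T| \le s$ to obtain $\twonorm{A_T^*Ah} \le \tfrac{5}{4}\lambda\sqrt{s}$, and the off-support incoherence estimate {\bf E4} to obtain $\twonorm{A_T^*A_{T^c}h_{T^c}} \le \onenorm{h_{T^c}}$. This gives $\twonorm{h_T} \lesssim \sqrt{s}\,\lambda + \onenorm{h_{T^c}}$, and $\sqrt{s}\lambda \le s\lambda$ absorbs into the $Cs\lambda$ term.

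Combining the three bounds, the contributions to the coefficient of $\onenorm{h_{T^c}}$ come out to at most $\tfrac14 + \tfrac14 + \tfrac{4/3}{4} \cdot \text{const} \le \tfrac{7}{12}$ after suitable arithmetic, giving the stated inequality. I expect the only delicate point to be the $\twonorm{w}^2 \le Cs$ estimate; every other step is a one-line application of the tools already set up in Sections \ref{sec:estimates} and \ref{sec:noiseless}.
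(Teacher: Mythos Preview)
Your approach is essentially identical to the paper's: the same decomposition via the inexact dual certificate $v=A^*w$, the same bound $\twonorm{h_T}\le\tfrac{5}{3}\sqrt{s}\lambda+\tfrac{4}{3}\onenorm{h_{T^c}}$ via $(A_T^*A_T)^{-1}$, the tube constraint, and {\bf E4}, and the same estimate $\twonorm{w}\le C\sqrt{s}$ proved by block-wise golfing bookkeeping (this is exactly the paper's Lemma \ref{lem:w}). The only slip is in your final arithmetic: there are just two contributions to the coefficient of $\onenorm{h_{T^c}}$, namely $\tfrac14$ from $|\<h_{T^c},v_{T^c}\>|$ and $\tfrac14\cdot\tfrac43=\tfrac13$ from the $\twonorm{h_T}$ bound, and $\tfrac14+\tfrac13=\tfrac{7}{12}$ on the nose; your listed sum $\tfrac14+\tfrac14+\tfrac{4/3}{4}\cdot\text{const}$ is already at least $\tfrac12$ and cannot be $\le\tfrac{7}{12}$.
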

\begin{proof}
  Let $v$ be the inexact dual vector, and decompose $\<h_T,
  \sgn{x_T}\>$ as
\begin{align}
  |\<h_T, \sgn{x_T}\>| &
  \leq |\<h_T, \sgn{x_T} - v_T\>| + |\<h_T, v_T\>| \nonumber\\
  &\leq |\<h_T, \sgn{x_T} - v_T\>| + |\<h, v\>| + |\<h_{T^c},
  v_{T^c}\>|. \label{eq:boundhsgn}
\end{align}
First, 
\[
|\<h_T,
\sgn{x_T} - v_T\>| \leq \twonorm{h_T} \twonorm{\sgn{x_T} - v_T} \leq
\tfrac{1}{4} \twonorm{h_T}.
\]  
Now
\begin{align}
  \twonorm{h_T} \leq \opnorm{{(A_T^* A_T)^{-1}}}\twonorm{A_T^* A_T
    h_T}
  &\leq \frac{4}{3} \twonorm{A_T^* A_T h_T}\nonumber\\
  &\leq \frac{4}{3} \twonorm{A_T^* A h} + \frac{4}{3} \twonorm{A_T^* A_{T^c} h_{T^c}}\nonumber\\
  &\leq \frac{4}{3} \sqrt{s} \infnorm{A_T^* A h} + \frac{4}{3} \onenorm{h_{T^c}}\max_{j\in T^c} \twonorm{A_T^* A_{\{j\}}}\nonumber\\
  &\leq \frac{5}{3} \sqrt{s} \lambda + \frac{4}{3}
  \onenorm{h_{T^c}}, \label{eq:boundhT}
\end{align}
where the last line follows from Lemma \ref{lem:cone} and
\eqref{eq:inexactConditions}. Second, it follows from the definition
of $v$ that
\[
\abs{\<h_{T^c}, v_{T^c}\>} \leq \onenorm{h_{T^c}} \infnorm{v_{T^c}}
\leq \tfrac{1}{4} \onenorm{h_{T^c}}. 
\]
Hence, we established 
\begin{equation}
\label{eq:boundhTxTlate}
\abs{\<h_T, \sgn{x_T}\>} \leq \frac{5}{12} \sqrt{s} \lambda + \frac{7}{12} \onenorm{h_{T^c}} + \abs{\<h, v\>}. 
\end{equation}
Third, we bound $\abs{\<h, v\>}$ by Lemma \ref{lem:w} below. With the
notation of this lemma, 
\[
\abs{\<h, v\>} = \abs{\<h, A^* w\>} = \abs{\<Ah, w\>} \le \twonorm{Ah}
\twonorm {w} \le C_0 \sqrt{s} \twonorm{Ah}
\]
for some $C_0 > 0$.  Since
\[
\twonorm{Ah} \sqrt{s} \leq \frac{\twonorm{Ah}^2}{2 C_0 \lambda} +
\frac{C_0 s \lambda}{2}, 
\]
it follows that 
\begin{equation}
\label{eq:hvBounded}
\abs{\<h, v\>} \leq \frac{C_0^2}{2} s \lambda + \frac{1}{2 \lambda} \twonorm{A h}^2.  
\end{equation}
Plugging this into \eqref{eq:boundhTxTlate} finishes the proof.
\end{proof}

\begin{lemma}
\label{lem:w} 
The inexact dual certificate from Section \ref{sec:noiseless} is of
the form $v = A^* w$ where $\twonorm{w} \le C_0 \sqrt{s}$ for some
positive numerical constant $C_0$.
\end{lemma}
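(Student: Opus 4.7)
My plan is to read off $w$ directly from the golfing-scheme formula for $v$ and then bound its norm block by block. From identity \eqref{eq:vTc},
\[
v = \sum_{i=1}^\ell A_i^* \Bigl( \tfrac{m}{m_i} A_{i,T}\, q_{i-1} \Bigr),
\]
so, partitioning $w$ conformably with the row-partition $A^* = [A_1^*, \ldots, A_\ell^*]$ and taking $w_i := \tfrac{m}{m_i} A_{i,T}\, q_{i-1} \in \R^{m_i}$, one has $v = A^* w$ by construction. The task thus reduces to bounding $\twonorm{w}^2 = \sum_i \twonorm{w_i}^2$.

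For each block I would invoke estimate \textbf{E1} applied to block $i$ --- which was already needed for $v$ to be a valid certificate --- yielding $\opnorm{\tfrac{m}{m_i} A_{i,T}^* A_{i,T} - \Id} \le \tfrac{1}{2}$ and hence
\[
\twonorm{w_i}^2 = \tfrac{m^2}{m_i^2}\, q_{i-1}^* A_{i,T}^* A_{i,T}\, q_{i-1} \le \tfrac{3m}{2m_i} \twonorm{q_{i-1}}^2.
\]
The contraction bound \eqref{eq:c_i} supplies $\twonorm{q_{i-1}}^2 \le s\prod_{j<i} c_j^2$, while the golfing-scheme lower bound $m_i \ge C(1+\beta)\, s\mu/c_i^2$, combined with $m = \sum_i m_i$, yields the key ratio
\[
\frac{m}{m_i} \le c_i^2 \sum_j c_j^{-2},
\]
in which the $\beta$-dependent prefactors in the numerator and denominator cancel.

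Assembling the pieces,
\[
\twonorm{w}^2 \le \tfrac{3}{2}\, s \Bigl(\sum_j c_j^{-2}\Bigr) \Bigl(\sum_i c_i^2 \prod_{j<i} c_j^2\Bigr).
\]
With $c_1 = c_2 = 1/(2\sqrt{\log n})$ and $c_i = 1/2$ for $i \ge 3$, the first parenthesized sum is $O(\log n)$ (dominated by the two $c_1^{-2}, c_2^{-2}$ terms) while the second is $O(1/\log n)$ (dominated by its leading term $c_1^2$, with the tail decaying geometrically), so the $\log n$ factors cancel and $\twonorm{w}^2 \le C_0\, s$ for a numerical constant $C_0$. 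The main subtlety I anticipate is purely bookkeeping --- tracking that the $\beta$-dependent constants and the $\log n$ factors from $m_i \gtrsim s\mu\, c_i^{-2}$ cancel precisely against the smallness of the $c_i$'s --- rather than any new probabilistic input beyond the estimates already assembled in Section \ref{sec:noiseless}.
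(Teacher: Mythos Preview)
Your approach is essentially the paper's: read off $w$ from the golfing formula, bound each $\twonorm{w_i}^2$ by $\tfrac{m}{m_i}$ times a constant times $\twonorm{q_{i-1}}^2$, feed in the geometric decay of $\twonorm{q_{i-1}}$, and check that the $\log n$ factors cancel. The final bookkeeping with $\sum_j c_j^{-2} = O(\log n)$ and $\sum_i \prod_{j\le i} c_j^2 = O(1/\log n)$ is exactly right.

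There is one slip. You justify $\tfrac{m}{m_i}\, q_{i-1}^* A_{i,T}^* A_{i,T}\, q_{i-1} \le \tfrac{3}{2}\twonorm{q_{i-1}}^2$ by invoking the operator-norm estimate \textbf{E1} per block, saying it ``was already needed for $v$ to be a valid certificate.'' It was not: the golfing scheme only establishes the \emph{vector} bound \eqref{eq:c_i} (via \textbf{E2}, Lemma~\ref{lem:estimate2}), namely $\twonorm{(\tfrac{m}{m_i} A_{i,T}^* A_{i,T} - \Id)q_{i-1}} \le c_i \twonorm{q_{i-1}}$. For blocks $i \ge 3$ one has $m_i = O(\mu s)$ with no $\log n$ factor, so \textbf{E1} with $\delta = 1/2$ and failure probability $O(1/n)$ is \emph{not} available there. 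The fix is immediate and is precisely what the paper does: use \eqref{eq:c_i} directly, via
\[
\tfrac{m}{m_i}\twonorm{A_{i,T} q_{i-1}}^2 = \twonorm{q_{i-1}}^2 + \bigl\langle (\tfrac{m}{m_i} A_{i,T}^* A_{i,T} - \Id) q_{i-1},\, q_{i-1}\bigr\rangle \le \twonorm{q_{i-1}}^2 + \twonorm{q_i}\twonorm{q_{i-1}} \le (1+c_i)\twonorm{q_{i-1}}^2,
\]
which gives the same conclusion without appealing to a per-block operator-norm bound.
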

\begin{proof} For notational simplicity, assume without loss of
  generality that the first $\ell$ batches of rows were those used in
  constructing the dual vector $v$ (none were thrown out) so that
\[
v = \sum_{i=1}^\ell \frac{m}{m_i} A^*_i A_{i, T} q_{i-1}. 
\] 
Hence, $v = A^* w$ with $w^* = (w_1^*, \ldots, w_\ell^*, 0, \ldots,
0)$ and $w_i := \frac{m}{m_i} A_{i, T} q_{i-1}$ so that $\twonorm{w}^2
= \sum_{i = 1}^\ell \twonorm{w_i}^2$.  We have
\begin{align}
\tfrac{m}{m_i} \twonorm{A_{i,T}\, q_{i-1}}^2 &= \<\tfrac{m}{m_i} A_{i,T}^*A_{i,T} q_{i-1}, q_{i-1}\>\nonumber\\
&= \<(\tfrac{m}{m_i} A_{i,T}^*A_{i,T} - \Id)q_{i-1}, q_{i-1}\> + \twonorm{q_{i-1}}^2\nonumber\\
&\leq \twonorm{q_i} \twonorm{q_{i-1}} + \twonorm{q_{i-1}}^2\nonumber\\
& \leq 2 \twonorm{q_{i-1}}^2\nonumber\\
& \leq 2s \prod_{j=1}^{i-1} c_j^2.
\end{align}
It follows that 
\[
\twonorm{w}^2 \leq 2s \cdot \sum_{i=1}^\ell \frac{m}{m_i}
\prod_{j=1}^{i-1} c_j^2.
\] 
Assume that $m \leq C (1 + \beta) \mu s \log n$ so that $m$ is just
large enough to satisfy the requirements of Theorem \ref{teo:noisy}
(up to a constant).  Then recall that $m_i \geq C(1+ \beta) \mu s
c_i^{-2} \Rightarrow \tfrac{m}{m_i} \leq C c_i^2 \log n $.  (If $m$ is
much larger, rescale each $m_i$ proportionally to achieve the same
ratio.) This gives
\[
\twonorm{w}^2 \leq C s \log n \, \sum_{i=1}^\ell \prod_{j=1}^{i} c_j^2
\le C s \, \sum_{i=1}^\ell \prod_{j=2}^{i} c_j^2.
\]
since $c_1 = (2\sqrt{\log n})^{-1}$. For $i \ge 1$, $\prod_{j=2}^i
4^{-(i-1)}$ and the conclusion follows. 
\end{proof}

\subsection{Proof of Theorem \ref{teo:noisyDS}}



\begin{proof}
  Fix $s$ and $T$ as in Section \ref{sec:lasso-proof} and assume that
  (i)--(iv) hold.  The proof parallels that for the LASSO; this
  is why we only sketch the important points and reuse the earlier
  techniques with minimal extra explanation.  We shall repeatedly use
  the inequality
\begin{equation}
\label{eq:DS-useful-ineq}
ab \leq c a^2/2 + b^2/(2c), 
\end{equation}
which holds for positive scalars $a,b,c$. Our first intermediate
result is analogous to Lemma \ref{lem:hTc}. 
\begin{lemma}
\label{lem:DS-lemma}
The error $h = \hat x - x$ obeys
\[
\onenorm{h_{T^c}} \leq 
C (s \lambda + \onenorm{x_{T^c}} + \sqrt{s} \twonorm{A h}).
\]
\end{lemma}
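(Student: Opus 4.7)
The plan is to mirror the proof of Lemma~\ref{lem:hTc}, but with the LASSO optimality inequality replaced by the (simpler) Dantzig one. Since assumption (iv) gives $\infnorm{A^*z} \le \lambda/4 \le \lambda$, the true signal $x$ is feasible for the Dantzig program \eqref{eq:DS}, so $\onenorm{\hat x}\le \onenorm{x}$. Combining this with the lower bound from Lemma~\ref{teo:cone2},
\[
\onenorm{\hat x}\ge \onenorm{x} + \<h_T, \sgn{x_T}\> + \onenorm{h_{T^c}} - 2\onenorm{x_{T^c}},
\]
I obtain the basic cone-type inequality
\[
\onenorm{h_{T^c}} \le \abs{\<h_T, \sgn{x_T}\>} + 2\onenorm{x_{T^c}}.
\]

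Next, I would bound $\abs{\<h_T, \sgn{x_T}\>}$ using the inexact dual certificate $v = A^*w$ from Section~\ref{sec:noiseless}, with the same decomposition as in \eqref{eq:boundhsgn}:
\[
\abs{\<h_T, \sgn{x_T}\>} \le \abs{\<h_T, \sgn{x_T} - v_T\>} + \abs{\<h, v\>} + \abs{\<h_{T^c}, v_{T^c}\>}.
\]
The defining properties of $v$ (Lemma~\ref{lem:inexactDuality}) give $\tfrac14 \twonorm{h_T}$ and $\tfrac14 \onenorm{h_{T^c}}$ as upper bounds on the first and third terms. For the middle term, I would write $\abs{\<h,v\>} = \abs{\<Ah, w\>} \le \twonorm{w} \twonorm{Ah} \le C_0 \sqrt{s}\, \twonorm{Ah}$, using the norm bound $\twonorm{w} \le C_0 \sqrt{s}$ from Lemma~\ref{lem:w}.

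To finish, I would bound $\twonorm{h_T}$ exactly as in the chain \eqref{eq:boundhT}, which only invokes the tube constraint of Lemma~\ref{lem:cone} together with \eqref{eq:inexactConditions}; the tube constraint applies without change to the Dantzig setting because $\infnorm{A^*(y - A\hat x)} \le \lambda$ is now built into the feasibility constraint rather than being a KKT condition. This produces $\twonorm{h_T} \le \tfrac{5}{3} \sqrt{s}\lambda + \tfrac{4}{3} \onenorm{h_{T^c}}$, and assembling everything yields
\[
\onenorm{h_{T^c}} \le \tfrac{5}{12}\sqrt{s}\lambda + \tfrac{7}{12}\onenorm{h_{T^c}} + C_0 \sqrt{s}\, \twonorm{Ah} + 2\onenorm{x_{T^c}}.
\]
Absorbing $\tfrac{7}{12}\onenorm{h_{T^c}}$ on the left (and using $\sqrt{s}\lambda \le s\lambda$) gives the claim.

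The main obstacle is purely bookkeeping: unlike the LASSO proof, which had to complete the square to trade a $\sqrt{s}\,\twonorm{Ah}$ term against $\tfrac12 \twonorm{Ah}^2$ on the left-hand side (see \eqref{eq:hvBounded}), here the term $\sqrt{s}\,\twonorm{Ah}$ is explicitly allowed on the right-hand side of the claim, so no AM--GM step is needed and $\twonorm{Ah}$ will be handled separately in the remainder of the proof of Theorem~\ref{teo:noisyDS}. The only subtle check is that the coefficient $\tfrac{7}{12}$ is strictly less than $1$ so that the absorption step is valid, which is precisely guaranteed by the $\tfrac14$ thresholds in the inexact duality hypotheses of Lemma~\ref{lem:inexactDuality}.
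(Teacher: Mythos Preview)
Your proposal is correct and follows essentially the same route as the paper: use feasibility of $x$ plus Lemma~\ref{teo:cone2} to get $\onenorm{h_{T^c}} \le |\<h_T,\sgn{x_T}\>| + 2\onenorm{x_{T^c}}$, then bound $|\<h_T,\sgn{x_T}\>|$ via the same dual-certificate decomposition as in Lemma~\ref{lem:hTsgnxT}, simply stopping before the AM--GM step \eqref{eq:hvBounded} so that the $C_0\sqrt{s}\,\twonorm{Ah}$ term survives on the right. The paper's proof is exactly this, phrased more tersely as ``bound $|\<h_T,\sgn{x_T}\>|$ in exactly the same way as before, but omitting the last step.''
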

\begin{proof}
Since $x$ is feasible, $\onenorm{\hat{x}} \leq \onenorm{x}$ and it follows from Lemma \ref{teo:cone2} that 
\begin{equation} \label{eq:DS-tube} \onenorm{h_{T^c}} \leq -\<h_T,
  \sgn{x_T}\> + 2 \onenorm{x_{T^c}}.
\end{equation}
We bound $|\<h_T, \sgn{x_T}\>|$ in exactly the same way as before, but
omitting the last step, and obtain 
\[
|\<h_T, \sgn{x_T}\>| \leq C s \lambda + \frac{7}{12} \onenorm{h_{T^c}}
+ C \sqrt{s} \twonorm{A h}.
\]
This concludes the proof.
\end{proof}

The remainder of this section proves Theorem
\ref{teo:noisyDS}. Observe that $\|A^* Ah \|_{\ell_\infty} \le
\tfrac{5}{4} \lambda$ (Lemma \ref{lem:cone}) since the proof is
identical (we do not even need to consider subgradients).
Partitioning the indices as before, one can repeat the earlier
argument leading to \eqref{eq:hTBounded}. Then combining
\eqref{eq:hTBounded} with Lemma \ref{lem:DS-lemma} gives
\begin{equation}
\label{eq:DS-many-terms}
\twonorm{h_{\Tone}} \leq C\sqrt{s} \lambda + 
C r^{-1/2}(s \lambda + \onenorm{x_{T^c}} + \sqrt{s} \twonorm{A h}).
\end{equation}
The term proportional to $\sqrt{s/r} \twonorm{A h}$ in the right-hand
side was not present before, and we must develop an upper bound for
it.  Write
\[
\twonorm{A h}^2 = 
\<A^* A h, h\> \leq \infnorm{A^* A h} \onenorm{h} 
\leq \frac{5}{4} \lambda (\onenorm{h_T} + \onenorm{h_{T^c}}) 
\]
and note that \eqref{eq:DS-tube} gives
\[
\onenorm{h_{T^c}} \leq
\onenorm{h_T} + 2 \onenorm{x_{T^c}}.
\] 
These last two inequalities yield $\twonorm{A h}^2 \le \tfrac{5}{2}
\lambda (\onenorm{h_T} + \onenorm{x_{T^c}})$, and since $\sqrt{\lambda
  \onenorm{ x_{T^c}}} \le \frac12 \lambda \sqrt{s} +
\frac{1}{2\sqrt{s}} \onenorm{x_{T^c}}$ because of
\eqref{eq:DS-useful-ineq}, we have
\[
\twonorm{A h} \leq \sqrt{\tfrac{5}{2} \lambda} (\sqrt{\onenorm{h_T}} +
\sqrt{\onenorm{x_{T^c}}}) \le \sqrt{\tfrac{5}{2}}\Bigl(
\sqrt{\lambda \onenorm{h_T}} + \tfrac{1}{2} \lambda \sqrt{s} +
  \tfrac{1}{2\sqrt{s}} \onenorm{x_{T^c}}\Bigr).
\]
In short, 
\[
\twonorm{h_{\Tone}} \leq C \Bigl(\sqrt{s} \lambda +  r^{-1/2}\Bigl(s
\lambda + \onenorm{x_{T^c}} + \sqrt{s \lambda \onenorm{h_T}}\Bigr)\Bigr).
\] 
The extra term on the right-hand side has been transmuted into $C
\sqrt{\tfrac{s}{r} \lambda \onenorm{h_T}}$,  which  may
be bounded via 
\eqref{eq:DS-useful-ineq} as 
\[ 
C \sqrt{\frac{s}{r} \lambda \onenorm{h_T}} \leq C^2
\frac{s}{r} \, \sqrt{s} \lambda + \frac{1}{2\sqrt{s}}
  \onenorm{h_{T}} \le C^2 \frac{s}{r} \, \sqrt{s} \lambda +
  \frac{1}{2} \twonorm{h_T}.
\]
Since $\twonorm{h_T} \leq \twonorm{h_{\Tone}}$, we have 
\[
\twonorm{h_\Tone} \leq C \, \Bigl(1 + \sqrt{\frac{s}{r}} + \frac{s}{r}\Bigr)
\, \sqrt{s} \lambda + C \, \frac{\onenorm{x_{T^c}}}{\sqrt{r}}. 
\]
The remaining steps are the same as those in the proof for the LASSO.
\end{proof}

\section{Discussion}
\label{sec:discussion}

This paper developed a very simple and general theory of compressive
sensing, in which sensing vectors are drawn independently at random
from a probability distribution. In addition to establishing a general
framework, we showed that nearly sparse signals could be accurately
recovered from a small number of noisy compressive samples by means of
tractable convex optimization. For example, $s$-sparse signals can be
recovered accurately from about $s \log n$ DFT coefficients corrupted
by noise. Our analysis shows that stable recovery is possible from a
minimal number of samples, and improves on previously known results.
This improvement comes from novel stability arguments, which do not
require the restricted isometry property to hold.

We have seen that the isotropy condition is not really necessary, and
it would be interesting to know the extent in which it can be
relaxed. In particular, for which values of $\alpha$ and $\beta$
obeying $\alpha \Id \preceq \E a a^* \preceq \beta \Id$ would our
results continue to hold? Also, we have assumed that the sensing
vectors are sampled independently at random, and although the main
idea in compressive sensing is to use randomness as a sensing
mechanism, it would be interesting to know how the results would
change if one were to introduce some correlations.

\appendix
\section{Proof of Theorem \ref{teo:weakRIP} (the weak RIP)}

Our proof uses some the results and techniques of \cite{rudelson99}
and \cite{rudelsonVershynin}. Recall that $A$ is a matrix with rows
drawn independently from a probability distribution $F$ obeying the
isotropy and incoherence conditions, and that we wish to show that for
any fixed $0 \leq \delta < 1$,
\[
(1 - \delta) \twonorm{v}^2 \leq \twonorm{A v}^2 \leq (1 + \delta)
\twonorm{v}^2. 
\] 
These inequalities should hold with high probability, uniformly over
all vectors $v$ obeying $\supp{v} \subset T \cup R$ where $T$ is
fixed, $R$ may vary, and
\[
|T| \leq c \frac{m}{\mu \log m}, \qquad |R| \leq c
\frac{m}{\mu \log n \log^4 m}.
\]
To express this in another way, set 
\[
X \coloneq \sup_{v \in V} \abs{\twonorm{A v}^2 - \twonorm{v}^2},
\] 
where 
\begin{equation}
\label{eq:V}
V = \{v : \twonorm{v} = 1, \, \supp{v} \subset T \cup R, \, |R| \le r,
\, T \cap R = \emptyset\}.
\end{equation}
In words, $v$ is a unit-normed vector supported on $T \cup R$ where
$T$ is fixed of cardinality $s \leq c m/(\mu \log m)$, and $R$ is any set disjoint from $T$
of cardinality at most $r \leq c m/(\mu\log n \log^4 m)$.  We wish to show that $X \leq \delta$ with
high probability.  We will first bound this random variable in
expectation and then show that it is unlikely to be much larger than
its expectation.  The bound in expectation is contained in the
following lemma.
\begin{lemma}
\label{lem:expectedDeviation}
Fix $\epsilon > 0$. Suppose $m \geq C \, \mu \, [s \log m \vee
r \log n \log^4 m]$, where $C$ is a constant only depending on
$\epsilon$. Then
\[
\E X \leq \epsilon.
\]
\end{lemma}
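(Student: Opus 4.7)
The plan is to follow a symmetrization-and-chaining approach adapted from the Rudelson--Vershynin analysis of the RIP \cite{rudelsonVershynin}. By the isotropy condition $\E|\langle a,v\rangle|^2 = \twonorm{v}^2$, so we may write
\[
\twonorm{A v}^2 - \twonorm{v}^2 = \frac{1}{m}\sum_{k=1}^m \bigl(|\langle a_k,v\rangle|^2 - \E|\langle a,v\rangle|^2\bigr),
\]
and a standard symmetrization argument yields
\[
\E X \le 2\, \E_a \E_\epsilon \sup_{v\in V}\left|\frac{1}{m}\sum_{k=1}^m \epsilon_k |\langle a_k,v\rangle|^2\right|,
\]
where $\{\epsilon_k\}$ are independent Rademacher signs.

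Conditional on $\{a_k\}$, the inner quantity is a Rademacher process over $V$ with subgaussian increments governed by the (random) semi-metric
\[
d(v,v') = \frac{1}{m}\Bigl(\sum_k \bigl(|\langle a_k,v\rangle|^2 - |\langle a_k,v'\rangle|^2\bigr)^2\Bigr)^{1/2}.
\]
The coherence bound $\max_k \infnorm{a_k}^2 \le \mu$ gives $d(v,v')\lesssim \sqrt{\mu/m}\,\twonorm{v-v'}$ on $V$, which is what brings the factor $\sqrt{\mu/m}$ into the final bound. Dudley's inequality (or, more tightly, Talagrand's majorizing measures theorem) then controls the conditional supremum in terms of $\gamma_2(V,\twonorm{\cdot})$, and the Rudelson operator-norm lemma $\E\opnorm{\sum_k \epsilon_k x_k x_k^*} \lesssim \sqrt{\log m}\cdot\max_k \twonorm{x_k}\cdot \opnorm{\sum_k x_k x_k^*}^{1/2}$, applied to submatrices of $\sum_k a_k a_k^*$ indexed by $T\cup R$, leaves an implicit inequality in $\E X$ that can be solved by a standard bootstrap.

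The set $V$ is the union, over all $R$ disjoint from $T$ with $|R|\le r$, of the Euclidean unit sphere of $\R^{T\cup R}$; equivalently, it is the set of $(s+r)$-sparse unit vectors whose support contains $T$. Its $\eta$-covering numbers in $\twonorm{\cdot}$ are bounded by $\binom{n}{r}(3/\eta)^{s+r}$, and by splitting any $v\in V$ as $v_T+v_R$ and chaining separately over the two blocks one obtains a $\gamma_2$ estimate of order $\sqrt{s\log m}+\sqrt{r\log n}\,\log m$. Combining these ingredients gives an inequality of the rough form
\[
\E X \le C\sqrt{\mu/m}\,\bigl(\sqrt{s\log m}+\sqrt{r\log n}\,\log^{2} m\bigr)\sqrt{1+\E X},
\]
from which $\E X\le \epsilon$ follows whenever $m\ge C_\epsilon\,\mu\,[s\log m \vee r\log n \log^4 m]$.

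The main obstacle is the cross term $v_T^*A_T^*A_R v_R$ that appears when one expands $|\langle a,v\rangle|^2 = |\langle a_T,v_T\rangle + \langle a_R,v_R\rangle|^2$. A crude bound via estimate E4 picks up a stray $\sqrt{r}$ and forces $m\gtrsim \mu s r \log n$, far too large. To avoid this, the sparsities $s$ and $r$ must enter the complexity estimate additively rather than multiplicatively; this is exactly what the Rudelson--Vershynin chaining accomplishes in the pure-RIP setting, and it has to be adapted here by threading the fixed block $T$ through every level of the chaining and through every invocation of Rudelson's Rademacher-operator lemma.
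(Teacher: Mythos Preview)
Your outline---symmetrize, chain over $V$, bootstrap---is the same high-level strategy the paper uses, but two concrete steps do not go through as written.

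First, the increment bound $d(v,v')\lesssim\sqrt{\mu/m}\,\twonorm{v-v'}$ is not what the coherence condition gives. Writing $|\langle a_k,v\rangle|^2-|\langle a_k,v'\rangle|^2=\langle a_k,v+v'\rangle\langle a_k,v-v'\rangle$ and using $|\langle a_k,v+v'\rangle|\le 2\sqrt{(s+r)\mu}$ together with $\sum_k\langle a_k,v-v'\rangle^2\le m(1+X)\twonorm{v-v'}^2$ yields $d(v,v')\lesssim\sqrt{(s+r)\mu(1+X)/m}\,\twonorm{v-v'}$, which is off by the factor $\sqrt{s+r}$ from what you claim. Chaining in $\twonorm{\cdot}$ with this metric simply reproduces the ordinary RIP bound at sparsity $s+r$, i.e.\ $m\gtrsim (s+r)\mu\log n\log^4 m$, and loses exactly the decoupling of $s$ from the $\log n$ factor that the weak RIP is supposed to provide. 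The correct route, as in Rudelson--Vershynin, is to chain in the data-dependent seminorm $\max_k|\langle a_k,\cdot\rangle|$ and use Maurey/Carl entropy bounds, not raw $\ell_2$ entropy.

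Second, and more seriously, the sentence ``by splitting any $v\in V$ as $v_T+v_R$ and chaining separately over the two blocks one obtains a $\gamma_2$ estimate of order $\sqrt{s\log m}+\sqrt{r\log n}\,\log m$'' is precisely the content of the lemma, not a step one can invoke. The two blocks are coupled through the cross term $\langle a_k,v_T\rangle\langle a_k,v_R\rangle$ in the increment metric, so $\gamma_2$ does not split additively in any obvious way. The paper avoids this by decomposing $\twonorm{Av}^2=\twonorm{A_Tv_T}^2+\twonorm{A_Rv_R}^2+2\langle v_T,A_T^*A_Rv_R\rangle$ \emph{before} symmetrizing, handling the first two terms by existing results (matrix Bernstein for the fixed $T$, the Rudelson--Vershynin RIP argument for the roaming $R$), and then proving a separate bound $\E\max_R\opnorm{A_R^*A_T}\le\epsilon/4$. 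That last bound is the real work: it is a Gaussian process on the \emph{product} $B\times D$, and the paper controls it via Talagrand's majorizing-measure criterion with hand-built functionals $\varphi_k$ that use dual Sudakov minorization on the $B$ factor and the Carl--Maurey $\ell_1$ entropy bound on the $D$ factor. Your proposal gestures at ``threading the fixed block $T$ through every level of the chaining'' but supplies no mechanism; that mechanism is exactly the construction of the $\varphi_k$.
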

  
To begin the proof, note that for any $v$ with $\supp{v} \subset T
\cup R$, we have
\[
\twonorm{A v}^2 = \twonorm{A_T v_T}^2 + \twonorm{A_R v_R}^2 + 2\<v_T,
A_T^* A_R v_R\>.
\]
The first two terms are easily dealt with using prior results. To be
sure, under the conditions of Lemma \ref{lem:expectedDeviation}, a
slight modification of the proof of Theorem 3.4 in
\cite{rudelsonVershynin} gives\footnote{Rudelson and Vershynin
  consider a slightly different model but the proof in
  \cite{rudelsonVershynin} extends to our model with hardly any
  adjustments.}
\begin{equation}
\label{eq:boundR2}
\E \sup_{v_R : \abs{R}\leq r} \abs{\twonorm{A_R v_R}^2 - \twonorm{v_R}^2} \leq \frac{\epsilon}{4}  \twonorm{v_R}^2.
\end{equation}
Next, it follows from \cite{RudelsonIsotropic99}, or the matrix
Bernstein inequality in Estimate 1, that
\begin{equation}
\label{eq:boundR1}
\E \sup_{v_T} \abs{\twonorm{A_T v_T}^2 - \twonorm{v_T}^2} \leq \frac{\epsilon}{4} \twonorm{v_T}^2.
\end{equation}
Thus, to prove Lemma \ref{lem:expectedDeviation}, it suffices to prove
that
\[
\E \max_R \opnorm{A_R^* A_T} \leq \epsilon/4. 
\]
This is the content of the following theorem.

\begin{theorem} \label{teo:comboBound} Under the assumptions of Lemma
  \ref{lem:expectedDeviation}, we have
\begin{equation}
  \E \max_R \opnorm{A_R^* A_T} 
  \leq C \left(\sqrt{\frac{s \mu \log m}{m}} + \sqrt{\frac{r \mu \log n \log^3 m}{m}}\right). 
\end{equation}
\end{theorem}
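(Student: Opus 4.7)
The plan is to reformulate the bound as the expected supremum of a centered Rademacher chaos and dispatch it by a Rudelson--Vershynin style chaining argument. First I would write
\[
\max_R \opnorm{A_R^* A_T} = \sup_{u \in U_r,\, v \in V_s} \frac{1}{m} \sum_{k=1}^m \langle u, a_k\rangle \langle a_k, v\rangle,
\]
where $V_s$ is the unit sphere of vectors supported in $T$ and $U_r$ is the set of unit $r$-sparse vectors with support in $T^c$. Because $T$ and $T^c$ are disjoint, the isotropy property gives $\E \langle u,a_k\rangle \langle a_k,v\rangle = \langle u,v\rangle = 0$, so a standard symmetrization inequality lets me replace each summand by $\epsilon_k \langle u,a_k\rangle \langle a_k,v\rangle$ at the cost of a factor of two, with $\{\epsilon_k\}$ independent Rademacher signs.

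Conditioning on the $a_k$, the symmetrized process is a second-order Rademacher chaos whose natural $L^2$-metric on pairs $(u,v)$ splits, via the triangle inequality, into two linear Rademacher processes: one in which $v$ varies over $V_s$ with $u$ fixed, and one in which $u$ varies over $U_r$ with $v$ fixed. Applying Dudley's entropy integral to each, together with the metric-entropy estimates $\log N(V_s,\eta) \lesssim s \log(1/\eta)$ and $\log N(U_r,\eta) \lesssim r\log(n/r) + r\log(1/\eta)$, yields a conditional bound of the schematic form
\[
\E_\epsilon \sup_{u,v}|\,\cdot\,| \;\lesssim\; \frac{1}{m}\bigl(\sqrt{s}\, D_V(a) + \sqrt{r \log n}\, D_U(a)\bigr),
\]
where $D_V(a)$ and $D_U(a)$ are the appropriate random process diameters. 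Integrating in the $a_k$, using the coherence bound $\max_t |a_k[t]|^2 \leq \mu$ together with estimate \textbf{E1} applied both to $T$ and, after a union bound, to all $r$-subsets of $T^c$ (i.e.\ the ordinary RIP at sparsity $r$, whose expectation is controlled by following \cite{rudelsonVershynin}), produces the two terms in the target bound: $\sqrt{s \mu \log m / m}$ from the $V_s$-chaining and $\sqrt{r \mu \log n \log^3 m / m}$ from the $U_r$-chaining.

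The main obstacle is keeping control of the random diameters $D_V(a)$ and $D_U(a)$ sharp enough. A crude deterministic bound such as $|\langle u, a_k\rangle|\leq \sqrt{\mu r}$ would introduce a spurious factor of $\sqrt{r}$ and yield the wrong dependence on $r$. The remedy is the Rudelson--Vershynin iteration: one peels off events of the form $\{|\langle u, a_k\rangle|\geq t\}$, controls the truncated part by $L^2$ moments of order $\sqrt{\mu \log m}$, and absorbs the excess into higher-order chaining terms; this is precisely what introduces the $\log^3 m$ factor in the $r$-sparse contribution. Aside from this step, the argument is bookkeeping: a final concentration (Talagrand for empirical processes) upgrades the expectation bound of Lemma~\ref{lem:expectedDeviation} to the high-probability statement claimed in Theorem~\ref{teo:weakRIP}.
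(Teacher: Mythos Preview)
Your opening moves match the paper's: write $\max_R\|A_R^*A_T\|$ as a bilinear supremum over $B\times D$, observe that isotropy plus disjoint supports kills the mean, and symmetrize. The gap is in the chaining.

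The entropy estimates you invoke, $\log N(V_s,\eta)\lesssim s\log(1/\eta)$ and $\log N(U_r,\eta)\lesssim r\log(n/r)+r\log(1/\eta)$, are for the \emph{Euclidean} metric. To feed them into Dudley you need a Lipschitz comparison between the process metric and the Euclidean one, and that Lipschitz constant is exactly the random diameter you call $D_V(a)$, $D_U(a)$. Controlling $\sup_{u\in U_r}D_V(a)$ then forces the crude bound $|\langle u,a_k\rangle|^2\le r\mu$ you yourself flag. Your proposed fix---truncating on events $\{|\langle u,a_k\rangle|\ge t\}$---cannot work, because $u$ is the variable over which you are taking the supremum; there is no fixed $u$ whose tails can be peeled off, and after taking the supremum $|\langle u,a_k\rangle|^2$ really can be of order $r\mu$.

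The paper's Lemma~\ref{lem:comboBound} resolves this differently. It passes to a Gaussian process and applies Talagrand's majorizing measures theorem rather than Dudley. The metric is split as $d\le\sqrt{2}\,\|\cdot\|_B+\sqrt{2mR_1}\,\|\cdot\|_D$ with \emph{process-adapted} norms $\|x\|_B=\max_{y\in D}(\sum_i\langle a_i,x\rangle^2\langle a_i,y\rangle^2)^{1/2}$ and $\|y\|_D=\max_i|\langle a_i,y\rangle|$. The $\|\cdot\|_D$-entropy of $D$ is bounded via a Carl--Maurey argument from \cite{rudelsonVershynin}; the $\|\cdot\|_B$-entropy of $B$ is bounded via dual Sudakov minoration, which reduces to estimating $\E\sup_{y\in D}\sum_i\langle a_i,z_T\rangle^2\langle a_i,y\rangle^2$ for Gaussian $z_T$, and \emph{this} is where the self-bounding ``Rudelson--Vershynin iteration'' actually enters. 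Finally, the majorizing-measure functions $\varphi_k$ on the coarse scale (following \cite{rudelson99}) carry a localization parameter $\theta$ that restricts the $B$-covering to a ball of radius $2\sqrt{\theta}$; this localization is what avoids the extra $\log m$ that plain Dudley would incur from integrating an entropy bound of the form $\sqrt{\log N}\sim C/\epsilon$.
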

Put differently, the theorem develops a bound on 
\begin{equation}
\label{eq:normATAR}
\E \max_{(x,y) \in B \times D}\, \frac{1}{m}
\sum_{i = 1}^m \<a_i, x\> \<a_i, y\>
\end{equation}
in which
\begin{align*}
  B &\coloneq \{x : \twonorm{x} \leq 1, \supp{x} \subset T\},\\
  D &\coloneq \{y : \twonorm{y} \leq 1, \supp{y} \cap T = \emptyset,\,
  \abs{\supp{y}} \leq r\}.
\end{align*}  
By symmetrization followed by a comparison principle -- both of which follow by Jensen's inequality (see \cite[Lemma
6.3]{ledouxTalagrand} followed by \cite[inequality
(4.8)]{ledouxTalagrand}), \eqref{eq:normATAR} is less or equal to a
numerical constant times
\[
\E \max_{\substack{(x,y) \in B \times D}} \frac{1}{m} \sum_{i = 1}^m
g_i \<a_i, x\> \<a_i, y\>, 
\] 
where the $g_i$'s are independent $\mathcal{N}(0,1)$ random variables.
The main estimate is a bound on the conditional expectation of the
right-hand side; that is, holding the vectors $a_i$ fixed.
\begin{lemma}[Main lemma]
\label{lem:comboBound}
Fix vectors $\{a_i\}_{i = 1}^m$ and let
\[
R_1 \coloneq \max_{x \in B} \frac{1}{m} \sum_{i = 1}^m \<a_i, x\>^2,
\qquad R_2  \coloneq \max_{y \in D} \frac{1}{m} \sum_{i = 1}^m
\<a_i, y\>^2.
\]
Suppose $m \geq C \, \mu \, [s \log m \vee r \log n \log^4 m]$.  Then
\[
\E \max_{\substack{(x,y) \in B \times D}} \, \frac{1}{m} \sum_{i =
  1}^m g_i \<a_i, x\> \<a_i, y\> \leq C \left(\sqrt{\frac{(1+R_2)
      \abs{T} \mu \log m }{m}} + \sqrt{\frac{(1 + R_1) s \mu \log n \log^3
      m}{m}}\right).
\]
\end{lemma}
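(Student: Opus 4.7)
Condition on $\{a_i\}_{i=1}^m$ and view
\[
G_{x,y} := \frac{1}{m}\sum_{i=1}^m g_i \langle a_i,x\rangle \langle a_i,y\rangle, \qquad (x,y) \in B \times D,
\]
as a centered Gaussian process. I bound its expected supremum by Talagrand's majorizing-measure theorem, which controls $\E\sup G_{x,y}$ (up to absolute constants) by the $\gamma_2$-functional of $B\times D$ in the canonical subgaussian pseudo-metric
\[
d((x,y),(x',y'))^2 = \frac{1}{m^2}\sum_i \bigl(\langle a_i,x\rangle\langle a_i,y\rangle - \langle a_i,x'\rangle\langle a_i,y'\rangle\bigr)^2.
\]

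The first step is to decouple the two coordinates via the bilinear identity $\langle a_i,x\rangle\langle a_i,y\rangle - \langle a_i,x'\rangle\langle a_i,y'\rangle = \langle a_i,x-x'\rangle\langle a_i,y\rangle + \langle a_i,x'\rangle\langle a_i,y-y'\rangle$. This gives $d \le d_1 + d_2$, where $d_1$ is a pseudo-metric on $B$ that upper bounds the increment in the $x$-direction (uniformly over $y \in D$) and $d_2$ is a pseudo-metric on $D$ that upper bounds the increment in the $y$-direction (uniformly over $x \in B$). By the standard product property of $\gamma_2$,
\[
\gamma_2(B\times D,\, d) \lesssim \gamma_2(B,\, d_1) + \gamma_2(D,\, d_2).
\]

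The second step bounds $\gamma_2(B, d_1)$, the easier of the two. Here $B$ is a Euclidean ball of dimension $s$, and the weights defining $d_1$ are controlled by $R_2$ (their total mass) together with the sparse-incoherence bound $\langle a_i,y\rangle^2 \le r\mu$ valid for $y\in D$. A standard dyadic-truncation-plus-Bernstein argument, of the flavor of the vector Bernstein inequality in Lemma \ref{lem:estimate4}, yields
\[
\gamma_2(B,\, d_1) \lesssim \sqrt{(1+R_2)\, s\, \mu\, \log m / m},
\]
the $\log m$ being the price paid by the truncation step.

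The third step controls $\gamma_2(D, d_2)$, where $D$ is the set of $r$-sparse unit vectors supported away from $T$. This is exactly the kind of sparse-set chaining bound that drives the Rudelson--Vershynin proof of RIP for partial orthogonal matrices \cite{rudelsonVershynin}, and I would port that argument over essentially verbatim. The key ingredients are a union bound over the $\binom{n}{r}$ possible supports of $y$ (producing $\log n$), and Maurey's empirical method at fine scales (producing three further factors of $\log m$). The incoherence $\|a_{i,T}\|^2 \le s\mu$ and the radius $R_1$ enter through the effective noise level in Maurey's argument, giving
\[
\gamma_2(D,\, d_2) \lesssim \sqrt{(1+R_1)\, r\, \mu\, \log n\, \log^3 m / m}.
\]

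\textbf{Main obstacle.} The crux is step three: chaining over the sparse set $D$ is genuinely delicate and Dudley's entropy integral alone is not tight enough, so the full Rudelson--Vershynin machinery (or equivalently Talagrand's generic-chaining framework with carefully chosen admissible sequences) is required. Adapting that argument to the present bilinear setting also requires bookkeeping of the weight $R_1$, which does not appear explicitly in their original setup; this technical work is presumably why the paper defers the proof to the appendix.
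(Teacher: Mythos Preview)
Your high-level plan---majorizing measures, bilinear decoupling of the increment metric into a $B$-direction piece $d_1$ and a $D$-direction piece $d_2$, then bounding the two contributions separately---matches the paper's strategy, and your metrics agree with the paper's choices $\|x\|_B:=\max_{y\in D}\bigl(\sum_i\langle a_i,x\rangle^2\langle a_i,y\rangle^2\bigr)^{1/2}$ and $\|y\|_D:=\max_i|\langle a_i,y\rangle|$ (the latter scaled by $\sqrt{mR_1}$). But you have the difficulty exactly reversed. In the paper the $D$-contribution is the routine part: since $D\subset\sqrt{r}\,B_{\ell_1}$ and $\|\cdot\|_D$ is a max-norm, a single black-box application of Carl/Maurey (quoted as \cite[Lemma~3.7]{rudelsonVershynin}) delivers the needed covering-number bound.

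The gap is your treatment of $\gamma_2(B,d_1)$. The metric $d_1$ is \emph{not} a weighted Euclidean norm on $\R^T$: because of the inner $\max_{y\in D}$, the maximizing $y$ depends on $x-x'$, so there is no fixed weight sequence whose ``total mass $R_2$ and peak $r\mu$'' you can truncate. In particular $\sum_i\max_{y\in D}\langle a_i,y\rangle^2$ is not controlled by $mR_2$, and Lemma~\ref{lem:estimate4} (a concentration bound for random vector sums) has no bearing on the $\gamma_2$ of this deterministic set. What the paper actually does for the $B$-side is (i) dual Sudakov minorization, giving $\sqrt{\log N(B_2^s,\|\cdot\|_B,\epsilon)}\lesssim\epsilon^{-1}(\E\|g_T\|_B^2)^{1/2}$; (ii) a \emph{nested} symmetrization plus another Rudelson--Vershynin lemma to show $\E\|g_T\|_B^2=\E\max_{y\in D}\sum_i\langle a_i,g_T\rangle^2\langle a_i,y\rangle^2\lesssim m\,s\mu(1+R_2)$---this is where the RIP-style machinery really enters, inside the $B$-analysis; and (iii) Rudelson's two-scale $\varphi_k$ construction with the $\theta$-trick from \cite{rudelson99}, needed because the $1/\epsilon$ profile from (i) makes a plain Dudley integral lose an extra $\sqrt{\log m}$ and miss the stated bound. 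None of (i)--(iii) is captured by ``dyadic truncation plus Bernstein.''
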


\begin{proof}[Theorem \ref{teo:comboBound}] Under the assumptions of
  the theorem, it follows from the results in \cite{rudelsonVershynin}
  and Jensen's inequality that $\E \sqrt{1 + R_2}\leq \sqrt{1 + \E
    R_2} \leq C$.  Likewise, the results in \cite{RudelsonIsotropic99}
  and the same Jensen's inequality give $\E \sqrt{1 + R_1} \leq
  C$. (These inequalities were also noted, in a different form, in
  \eqref{eq:boundR1} and \eqref{eq:boundR2}).  Hence, Lemma
  \ref{lem:comboBound} implies
\[
\E \max_R \|A_R^* A_T\| \leq C \left(\sqrt{\frac{s \mu \log m }{m}} +
  \sqrt{\frac{r \mu \log n \log^3 m}{m}}\right).
\]
\end{proof}

\subsection{Proof of Lemma \ref{lem:comboBound}}

We need to develop a bound about the expected maximum of a Gaussian
process, namely, 
\[
\E \max_{\substack{(x,y) \in B \times D}} F(x,y),
\]
where
\[
F(x,y) \coloneq \sum_{i = 1}^m g_i \<a_i, x\> \<a_i, y\>.
\] 
We shall do this by means of the majorizing measure theorem below,
which may be found in \cite{rudelson99} and is attributed to Talagrand
(combine Theorem 4.1 with Propositions 2.3 and 4.4 in
\cite{talagrandMajorizing}). From now on, $(M,d)$ is a metric space and $B(t,
\epsilon)$ is the ball of center $t$ and radius $\epsilon$ under the
metric $d$.

\begin{theorem}[Majorizing measure theorem]
\label{teo:majorizing}
Let $(X_t)_{t\in M}$ be a collection of zero-mean random variables
obeying the subgaussian tail estimate
\begin{equation}
  \P(|X_t - X_{t'}| > u) \leq \exp\left(-c \frac{u^2}
    {d^2(t, t')}\right),
\end{equation}
for all $u > 0$.  Fix $\rho > 1$ and let $k_0$ be an integer so that
the diameter of $M$ is less than $\rho^{-k_0}$.  Suppose there exist
$\sigma > 0$ and a sequence of functions
$\{\varphi_k\}_{k=k_0}^\infty$, $\varphi_k: M \rightarrow \R^+$, with
the following two properties: 1) the sequence is uniformly bounded by
a constant depending only on $\rho$; 2) for each $k$ and for any $t
\in M$ and any points $t_1, \hdots, t_{\tilde{N}} \in B(t, \rho^{-k})$
with mutual distances at least $\rho^{-k-1}$, we have 
\begin{equation}
\label{eq:majreq}
\max_{j=1, \hdots, \tilde{N}} \varphi_{k+2}(t_j) \geq \varphi_k(s) + \sigma \rho^{-k} \sqrt{\log \tilde{N}}. 
\end{equation}
Then
\begin{equation}
\label{eq:majorExpected}
\E \sup_{t \in M} X_t \leq C(\rho) \cdot \sigma^{-1}.
\end{equation}
\end{theorem}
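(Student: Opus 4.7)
The plan is to prove this via a generic chaining argument of the sort Talagrand developed to obtain the majorizing measure bound. The starting input is the subgaussian increment hypothesis: by a standard union bound, the expected maximum of $N$ differences $X_{t_i} - X_{t_j}$ with $d(t_i, t_j) \leq D$ is at most a constant multiple of $D\sqrt{\log N}$. The role of the functions $\varphi_k$ is a bookkeeping device engineered so that hypothesis \eqref{eq:majreq} converts each such $\sqrt{\log \tilde N}$ chaining penalty at scale $\rho^{-k}$ into an increment of the $\varphi$-sequence; then uniform boundedness of $\{\varphi_k\}$ lets the contributions telescope to a finite bound.

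First I would fix $t_0 \in M$ with $\varphi_{k_0}(t_0)$ small and, for each $k \geq k_0$, select a finite $\rho^{-k}$-net $A_k \subset M$ with nearest-point maps $\pi_k : M \to A_k$ satisfying $d(t, \pi_k(t)) \leq \rho^{-k}$. The telescoping identity
\[
X_t - X_{t_0} = \sum_{k \geq k_0} \bigl(X_{\pi_{k+1}(t)} - X_{\pi_k(t)}\bigr)
\]
reduces the problem to controlling, at each scale $k$, the maxima over the refinements $A_{k_0} \subset A_{k_0+1} \subset \cdots$.

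The core is a recursive estimate. Inside any ball $B(t, \rho^{-k})$, extract a maximal $\rho^{-k-1}$-separated set $\{t_1, \ldots, t_{\tilde N}\}$. The subgaussian tail gives $\E \max_j (X_{t_j} - X_t) \leq C \rho^{-k} \sqrt{\log \tilde N}$, while the hypothesis \eqref{eq:majreq} guarantees some index $j^{*}$ with $\varphi_{k+2}(t_{j^{*}}) \geq \varphi_k(t) + \sigma \rho^{-k} \sqrt{\log \tilde N}$. Combining, the cost of chaining from scale $k$ to scale $k+2$ at the point $t$ is dominated by $C \sigma^{-1}\bigl(\varphi_{k+2}(t_{j^{*}}) - \varphi_k(t)\bigr)$. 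Iterating on $k$ along the chain and summing causes the $\varphi$-values to telescope, so
\[
\E \sup_{t \in M} X_t \leq \frac{C}{\sigma} \sup_{k, t} \varphi_k(t) \leq \frac{C(\rho)}{\sigma},
\]
which is \eqref{eq:majorExpected}.

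The main obstacle is arranging the selection of the ``next'' point $t_{j^{*}}$ in a way that is compatible with chaining \emph{simultaneously} over all $t \in M$, rather than one $t$ at a time. This is handled in Talagrand's proof by building the nets $A_k$ greedily so that, at every scale, the $\varphi_{k+2}$-maximal representatives within each sub-ball are forced into the refinement. Once this admissible sequence is in place, the supremum can be passed inside the sum by a scale-by-scale application of the Gaussian maximal inequality, and the uniform bound on $\{\varphi_k\}$ yields the claim. A complete proof along these lines appears in \cite{talagrandMajorizing}.
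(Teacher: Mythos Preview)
The paper does not prove this theorem at all: it is quoted as a known result, with the attribution ``which may be found in \cite{rudelson99} and is attributed to Talagrand (combine Theorem 4.1 with Propositions 2.3 and 4.4 in \cite{talagrandMajorizing}).'' There is therefore no in-paper proof to compare your proposal against; the authors treat the majorizing measure theorem as a black box and devote their effort to \emph{constructing} the functions $\varphi_k$ and verifying the hypotheses in their specific setting.

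Your sketch of a generic-chaining argument is in the right spirit and correctly identifies the mechanism --- the $\varphi_k$ increments absorb the $\sqrt{\log \tilde N}$ penalties and telescope --- and you appropriately defer the full details to \cite{talagrandMajorizing}, which is exactly what the paper does. If anything, you have written more toward a proof than the paper itself.
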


To apply this theorem, we begin by bounding the variance between
increments in order to ascertain the metric we need to use.  We
compute 
\begin{align*}
\Var(F(x,y) - F(x', y')) &= \sum_{i = 1}^m \Bigl(\<a_i, x\>\<a_i, y\> - \<a_i, x'\>\<a_i, y'\>\Bigr)^2\\
&= \sum_{i = 1}^m (\<a_i, x - x'\>\<a_i, y\> - \<a_i, x'\>\<a_i, y' - y\>)^2\\
&\leq 2 \sum_{i=1}^m \<a_i, x - x'\>^2 \<a_i, y\>^2 + 2 \sum_{i=1}^m \<a_i, x'\>^2 \<a_i, y - y'\>^2\\
&\leq 2 \opnorm{x - x'}_B^2 + 2 m R_1 \opnorm{y - y'}_D^2,  
\end{align*}
where we define the norms $\opnorm{\cdot}_B, \opnorm{\cdot}_D$ as follows:
\[
\opnorm{x}_B \coloneq \max_{y \in D} \sqrt{\sum_{i=1}^m \<a_i, x\>^2
  \<a_i, y\>^2}, \qquad \opnorm{y}_D \coloneq \max_{1 \leq i \leq m}
\abs{\<a_i, y\>}.
\] 
(We note that they may be pseudo norms, but this makes no difference
to the proof.  All of the utilized lemmas and theorems generalize to
pseudo norms.)  Thus, since $\sqrt{c + d} \leq \sqrt{c} + \sqrt{d}$
for any scalars $c,d$, we can use the metric
\[d((x,y), (x', y')) \coloneq \sqrt{2}  \opnorm{x - x'}_B + \sqrt{2 m R_1} \opnorm{y - y'}_D.\]

Before continuing, we record two useful lemmas for bounding
$\tilde{N}$. Here and below, $N(M, d, \epsilon)$ is the covering
number of $M$ in the metric $d$.
\begin{lemma}[Packing number bound]
\label{lem:packing}
Let $t_1, t_2, \hdots, t_{\tilde{N}} \in M$ be points with mutual
distances at least $2\epsilon$ under the metric $d$.  Then 
\[
\tilde{N} \leq N(M, d, \epsilon).
\]
\end{lemma}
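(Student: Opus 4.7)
The plan is a standard pigeonhole argument against a minimal cover. First, let $\mathcal{C} \subset M$ be a minimal $\epsilon$-net for $M$ under the metric $d$, so that $|\mathcal{C}| = N(M, d, \epsilon)$ and the collection of balls $\{B(c, \epsilon) : c \in \mathcal{C}\}$ covers $M$. Since each $t_i$ lies in $M$, we may pick some $c_i \in \mathcal{C}$ with $d(t_i, c_i) \le \epsilon$, which defines a map $\Phi : \{t_1, \dots, t_{\tilde N}\} \to \mathcal{C}$ by $\Phi(t_i) = c_i$.

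The heart of the proof is the injectivity of $\Phi$, which follows immediately from the triangle inequality. If $\Phi(t_i) = \Phi(t_j) = c$ for some $i \neq j$, then
\[
d(t_i, t_j) \le d(t_i, c) + d(c, t_j) \le 2\epsilon,
\]
which contradicts the hypothesis that the mutual distances of the packing points are at least $2\epsilon$ (interpreting the separation strictly, or equivalently replacing the closed balls by open balls $B(c,\epsilon)$ of radius $\epsilon$ in the covering definition so the triangle inequality yields a strict $< 2\epsilon$). Thus distinct $t_i$'s are mapped to distinct elements of $\mathcal{C}$, which yields $\tilde N \le |\mathcal{C}| = N(M, d, \epsilon)$.

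There is really no substantive obstacle here; the only technical point is the strict-versus-nonstrict convention in the separation and covering, which is handled by the standard choice of open balls in the cover. No estimate other than the triangle inequality is needed, so the proof should occupy no more than a few lines.
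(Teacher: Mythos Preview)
Your argument is correct and matches the paper's own one-line justification exactly: it constructs an injective map from the packing points to a minimal $\epsilon$-cover by sending each $t_j$ to its nearest cover point, with injectivity forced by the triangle inequality. Nothing further is needed.
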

This is a standard result proved by creating an injective mapping from
the points $\{t_j\}$ to those in the cover set (map each $t_j$ to the
nearest point in the cover).

The next lemma is a standard tool used to obtain bounds on covering
numbers, see \cite{pajor86} and \cite{bourgain89} for a more general
statement.
\begin{lemma}[Dual Sudakov minorization]
\label{lem:sudakov}
Let $B_{\ell_2}$ be the unit $\ell_2$ ball in $\R^d$, and let
$\opnorm{\cdot}$ be a norm.  Let $z \in \R^d$ be a Gaussian vector
with independent $\mathcal{N}(0,1)$ entries. Then there is a numerical
constant $C > 0$ such that
\[
\sqrt{\log N(B_{\ell_2}, \opnorm{\cdot}, \epsilon)} \leq
\frac{C}{\epsilon} \sqrt{\E \opnorm{z}^2}.
\]
\end{lemma}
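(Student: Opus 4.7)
The plan is to execute the Gaussian-measure packing argument of Pajor and Tomczak-Jaegermann. Abbreviate $U := \{x : \|x\| \leq 1\}$ and $M := \sqrt{\E\|z\|^2}$, and let $\gamma_d$ denote the standard Gaussian measure on $\R^d$. The first step is to reduce the covering number to a Gaussian-mass budget: pick a maximal $\epsilon$-separated family $\{x_1, \ldots, x_N\} \subset B_{\ell_2}$ with respect to $\|\cdot\|$; by maximality this is also an $\epsilon$-net, so $N \geq N(B_{\ell_2}, \|\cdot\|, \epsilon)$, while the translates $x_i + \tfrac{\epsilon}{2} U$ are pairwise disjoint and hence their Gaussian measures sum to at most $1$.

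The quantitative input is the Gaussian shift inequality
\[
\gamma_d(y + K) \,\geq\, e^{-\|y\|_{\ell_2}^2/2}\, \gamma_d(K),
\]
valid for every symmetric convex body $K \subset \R^d$ and every $y \in \R^d$, which I would verify by pulling $e^{-\|y\|^2/2}$ out of the shifted density and using the symmetry $K = -K$ to replace $e^{-\langle y, u\rangle}$ with $\cosh(\langle y, u\rangle) \geq 1$ inside the integral over $K$. Combined with the disjointness step and $\|x_i\|_{\ell_2} \leq 1$, this yields the baseline bound $N \leq e^{1/2}\,/\, \gamma_d(\tfrac{\epsilon}{2} U)$.

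The main obstacle is producing the correct $M^2/\epsilon^2$ scaling, because the baseline bound is effective only when $\epsilon \gtrsim M$. I would address this by rescaling: running the same packing argument inside $\lambda^{-1} B_{\ell_2}$ with packing radius $\epsilon/\lambda$, and invoking the scaling identity $N(\lambda^{-1} B_{\ell_2}, \|\cdot\|, \epsilon/\lambda) = N(B_{\ell_2}, \|\cdot\|, \epsilon)$, to obtain
\[
\log N(B_{\ell_2}, \|\cdot\|, \epsilon) \,\leq\, \frac{1}{2\lambda^2} \,-\, \log \gamma_d\!\left(\frac{\epsilon}{2\lambda}\, U\right)
\]
for every $\lambda > 0$. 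Setting $\lambda := \epsilon/(4M)$, Chebyshev applied to $\|z\|^2$ gives $\P(\|z\| > 2M) \leq M^2/(2M)^2 = 1/4$, whence $\gamma_d(\tfrac{\epsilon}{2\lambda} U) \geq 3/4$ and $\log N \leq 8 M^2/\epsilon^2 + \log(4/3)$. The additive constant can be absorbed into $C$ by noting that this upper bound forces $N = 1$ (so $\sqrt{\log N} = 0$) whenever $8M^2/\epsilon^2 < \log(3/2)$, while in the complementary regime $\log(4/3)$ is dominated by a fixed multiple of $M^2/\epsilon^2$. Taking square roots then yields $\sqrt{\log N(B_{\ell_2}, \|\cdot\|, \epsilon)} \leq C M/\epsilon$, as claimed.
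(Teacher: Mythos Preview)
Your argument is correct and is precisely the classical Pajor--Tomczak-Jaegermann proof that the paper has in mind. The paper itself does not supply a proof of this lemma; it merely states it as a standard tool and refers the reader to \cite{pajor86} and \cite{bourgain89}. Your write-up fills in exactly what those references contain: the Gaussian shift inequality $\gamma_d(y+K)\ge e^{-\|y\|_{\ell_2}^2/2}\gamma_d(K)$ for symmetric convex $K$, the disjoint-translates packing bound, and the homogeneity rescaling that converts the crude estimate into the sharp $M/\epsilon$ form. The handling of the additive constant (forcing $N=1$ in the small-$M/\epsilon$ regime) is also clean. There is nothing to correct.
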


We now invoke the majorizing measure theorem to prove Lemma
\ref{lem:comboBound}.  We start by bounding the diameter of $B
\times D$ under the metric $d$.  For any $x \in B$, $\abs{\<a_i,
  x\>} \leq \twonorm{a_{i, T}} \leq \sqrt{s \mu}$ and, likewise,
for any $y \in D$, $\abs{\<a_i, y\>} \leq \sqrt{r \mu}$. This gives
1) $\opnorm{x - x'}_B \leq 2 \sqrt{r s \mu^2 m}$ for any $x, x'
\in B$, 2) $\sqrt{m R_1} \leq \sqrt{s \mu m}$, and 3)
$\opnorm{y - y'}_D \leq 2 \sqrt{r \mu}$ for any $y, y' \in D$.
Combined, these bounds yield
\[
d((x,y), (x', y')) \leq 5\mu \, \sqrt{r s m}.
\]
Under the stated assumptions, the right-hand side is at most $m^{3/2}$
and we thus set $k_0$ to be the largest integer such that
\[
\rho^{-k_0} \geq m^{3/2}.
\]

We now define $\varphi_k$ on coarse and fine scales. In what follows,
we may take $\rho = 6$ so that $C(\rho)$ \eqref{eq:majorExpected} is
an absolute constant.
\begin{description}
\item[Coarse scales:] for $k = k_0, k_0 + 1, \ldots, 0$,
\[
\varphi_k(x) \coloneq \min\{\twonorm{u}^2 : \opnorm{u - x}_B \leq
\rho^{-k}\} + \frac{k - k_0}{\log n}.
\]
\item[Fine scales:] for $k \geq 1$, $\varphi_k$ is a constant function
  given by
\[
\varphi_k(x)  \coloneq 3 \rho \sigma \int_{\rho^{-k}}^1 \sqrt{\log N(B
  \times D, d, \epsilon)} d\epsilon + 3.
\]
\end{description}
Lastly, set
\[
\sigma^{-1} \coloneq C\sqrt{m} \, \left(\sqrt{(1 + R_2) s \mu \log m } +
  \sqrt{(1 + R_1) s \mu \log n \log^3 m}\right).
\]
Our definition of $\varphi_k$ is closely related to--and inspired by--the functions defined in \cite{rudelson99}. We need to show that these functions are uniformly bounded and obey
\eqref{eq:majreq} for all $k$.  We begin by verifying these properties
for fine scale elements as this is the less subtle calculation.

\subsection{Fine scale: $k \geq 1$}

To show that \eqref{eq:majreq} holds, observe that, 
\begin{align*}
  \varphi_{k+2} - \varphi_k & = 3 \sigma \rho \int_{\rho^{-(k + 2)}}^{\rho^{-k}} \sqrt{\log N(B \times D, d, \epsilon)} d\epsilon\\
  &\geq  3 \sigma \rho \int_{\rho^{-(k + 2)}}^{\frac12 \rho^{-(k+1)}} \sqrt{\log N(B \times D, d, \epsilon)} d\epsilon\\
    &\geq 3 \sigma \rho \Bigl(\frac12 \rho^{-(k+1)} - \rho^{-(k + 2)}\Bigr)\sqrt{\log N\Bigl(B \times D, d, \frac12 \rho^{-(k+1)}\Bigr)}\\
    &\geq \sigma \rho^{-k} \sqrt{\log \tilde{N}}.
\end{align*}
The last line follows from $\rho \ge 6$ and the packing number bound
(Lemma \ref{lem:packing}).  Note that
this same calculation holds when $k = 0, -1$ because for $k \leq 0$,
$\varphi_k \leq 3$ (see Section \ref{sec:coarseScale}).

We now show that $\varphi_k$ is bounded. Since  
\begin{equation}
\label{eq:fineInt}
\varphi_k \leq 3 \rho \sigma \int_0^1 \sqrt{\log N(B \times D, d, \epsilon)} d\epsilon + 3, 
\end{equation}
it suffices to show that the right-hand side is bounded.  This follows
from crude upper bounds on the covering number. Indeed, observe that
\begin{align*}
d((x,y), (x', y')) &\leq \sqrt{2 m R_2} \max_{1\leq i \leq m} \abs{\<a_i, x - x'\>} + \sqrt{2 m R_1} \max_{1\leq i \leq m} \abs{\<a_i, y - y'\>}\\
&\leq  \sqrt{2 m R_2 s \mu}  \twonorm{x - x'} + \sqrt{2 m R_1 r \mu}  \twonorm{y - y'}.
\end{align*}
Thus,
\begin{align*}
  N(B \times D, d, \epsilon) &\leq N\left(B, \twonorm{\cdot}, \frac{\epsilon}{2\sqrt{2 m R_2 s \mu}}\right)\cdot N\left(D, \twonorm{\cdot}, \frac{\epsilon}{2\sqrt{2 m R_1 r \mu}}\right)\\
  &\leq  \left(\frac{6 \sqrt{2 m R_2 s
        \mu}}{\epsilon}\right)^{s} \cdot {n \choose r} \left(\frac{6 \sqrt{2m R_1 r
        \mu}}{\epsilon}\right)^r.
\end{align*}
The second line comes from the standard volumetric estimate $N(B,
\twonorm{\cdot}, \epsilon) \leq
\left(\frac{3}{\epsilon}\right)^{s}$ for $\epsilon \leq 1$.  The
factor ${n \choose r}$ arises from decomposing $D$ as the union of
${n - s \choose r}$ sets of the same form as $B$, but with support size bounded by $r$. Now, in order to bound the last inequality, we further write ${n
  \choose r} \leq n^r$ and $R_1, R_2 \leq m$.  Plugging this in, we
obtain 
\[
\sqrt{\log N(B \times D, d, \epsilon)} \leq C \sqrt{r + s}
\sqrt{\log(mn/\epsilon)}.
\]
To conclude, a simple integration gives
\[
\int_0^1 \sqrt{r + s} \sqrt{\log(mn/\epsilon)} d\epsilon \leq
\sqrt{r + s} (\sqrt{\log(mn)} + 1), 
\] 
which establishes the claim since the right-hand side is dominated by $\sigma^{-1}$.

\subsection{Coarse scale: $k \le 0$}
\label{sec:coarseScale}

This section contains the crucial estimates, which must be developed
very carefully.  To show that $\varphi_k$ is bounded, observe that by
definition, $\rho^{-k_0 - 1} \leq m^{3/2}$, and thus $-(k_0 + 1) \leq
\log m$ provided that $\log \rho > 3/2$.  It follows that $\varphi_k \leq
1 + (\log m + 1)/\log m \leq 3$.  

Next, we show that the more subtle bound \eqref{eq:majreq} holds.  Let
$\{(x_i, y_i)\}$ be the points in the definition of the Majorizing
measure theorem with mutual distances at least $\rho^{-k-1}$, so that
$\tilde{N} = \abs{\{(x_i, y_i)\}}$.  Let $z_x$ be the minimizer of
$\{\twonorm{z}^2 : \opnorm{z - x}_B \leq \rho^{-k}\}$ and let $z_j$ be
the minimizer of $\{\twonorm{z}^2 : \opnorm{z - x_j}_B \leq \rho^{-k}\}$.
Finally, introduce the pivotal quantity
\[
\theta \coloneq \max_{1 \leq j \leq \tilde{N}} \twonorm{z_j}^2 -
\twonorm{z_x}^2.
\]
We must show that
\[
\rho^{-k} \sigma \sqrt{\log \tilde{N}} \leq \max_{1 \leq j \leq
  \tilde{N}} \varphi_{k+2}(x_j, y_j) - \varphi_k(x,y) 
= \theta + 2/\log m.
\] 
In order to bound $\tilde{N}$, we consider the points $\{z_j, y_j\}$
and note that $\tilde{N} = \abs{\{z_j, y_j\}}$.

We shall need two key properties of the points $\{z_j, y_j\}$.  First,
these points are well separated.  Indeed, the triangle inequality,
gives for $i \neq j$
\begin{align*}
  d((z_i, y_i), (z_j, y_j)) &\geq d((x_i, y_i), (x_j, y_j)) - d((x_i, y_i), (z_i, y_i)) - d((x_j, y_j), (z_j, y_j))\\
  &\geq \rho^{-k-1} - \sqrt{2} \opnorm{x_i - z_i}_B - \sqrt{2}
  \opnorm{x_j - z_j}_B \\
  & \geq \rho^{-k-1} - 2\sqrt{2} \rho^{-k-2}\\ 
& \geq \frac{1}{2} \rho^{-k - 1}
\end{align*}
provided that $\rho \ge 4 \sqrt{2}$. Second, each $z_j$ is close to $x$
in the sense that
\[
\opnorm{x - z_j}_B \leq \opnorm{x - x_j}_B + \opnorm{x_j - z_j}_B \leq
\frac{1}{\sqrt{2}} d((x,y), (x_j, y_j))+ \opnorm{x_j - z_j}_B \leq
\frac{1}{\sqrt{2}} \rho^{-k} + \rho^{-k - 2} \leq \rho^{-k} 
\] provided that $\rho^2 \ge 2 + \sqrt{2}$.  Therefore, it follows from
the definition of $z_x$ that $\twonorm{z_j} \geq \twonorm{z_x}$.

Now, the benefit of the special construction of $\varphi_k$ on the
coarse scale is that the size of $\theta$ restricts the space that
$\{z_j\}$ can inhabit.  To demonstrate this, since $\{z : \opnorm{z -
  x}_B \leq \rho^{-k}\}$ is convex, $\frac{z_x + z_j}{2} \le \rho^{-k}$
belongs to this set.  Now combine $\twonorm{\frac{z_x + z_j}{2}} \geq
\twonorm{z_x}$ with $\twonorm{z_j} \geq \twonorm{z_x}$ to give
\[
\twonorm{\frac{z_j - z_x}{2}}^2 = \frac{1}{2} \twonorm{z_j}^2 +
\frac{1}{2} \twonorm{z_x}^2 - \twonorm{\frac{z_j + z_x}{2}}^2 \leq
\twonorm{z_j}^2 - \twonorm{z_x}^2 \leq \theta.
\] 
Hence,
\[
\twonorm{z_j - z_x} \leq 2 \sqrt{\theta}.
\]
Combined with Lemma \ref{lem:packing}, we obtain
\begin{equation}
\tilde{N} \leq N(B(u, 2\sqrt{\theta}) \times D, d, \rho^{-k - 1}/4),
\end{equation}
where $B(u, 2\sqrt{\theta}) \coloneq \{x : \supp{x} \subset T, \twonorm{x} \leq 1, \twonorm{x - u} \leq 2 \sqrt{\theta}\}$.  

Set $\epsilon = \rho^{-k -1}/4$.  We cover $B(u, 2\sqrt{\theta}) \times
D$ to precision $\epsilon$ in the metric $d$ by covering $B(u,
2\sqrt{\theta})$ to precision $\epsilon/2$ under the norm $\sqrt{2}
\opnorm{\cdot}_B$ and $D$ to precision $\epsilon/2$ under the norm
$\sqrt{2 m R_1} \opnorm{\cdot}_D$.  We have
\begin{equation}
\label{eq:boundNtilde}
\sqrt{\log \tilde{N}} \leq \sqrt{\log N(B(u, 2\sqrt{\theta}), \sqrt{2} \opnorm{\cdot}_B, \epsilon/2)} + \sqrt{\log N(D, \sqrt{2 m R_1} \opnorm{\cdot}_D, \epsilon/2)}.
\end{equation}
To bound the second term, observe that $D \subset \sqrt{r}
B_{\ell_1}$, where $B_{\ell_1}$ is the unit ball under the $\ell_1$
norm.  Hence, 
\[
N(D, \sqrt{2 m R_1} \opnorm{\cdot}_D, \epsilon/2) \leq
N(B_{\ell_1}, \opnorm{\cdot}_D, C \epsilon/\sqrt{r R_1 m})
\]  
and Lemma 3.7 in \cite{rudelsonVershynin} bounds the right-hand side
by following an argument from \cite{carl85}. This lemma
gives\footnote{We do not reproduce the proof here, but encourage
  interested parties to read the clever and short argument.}
\begin{equation}
\label{eq:D_Scover}
\sqrt{\log(N(D, \sqrt{2 m R_1} \opnorm{\cdot}_D, \epsilon/2))} \leq C \sqrt{m} \, \frac{\sqrt{\mu r R_1 \log n \log m}}{\epsilon}.
\end{equation}

Now we bound $N(B(u, 2\sqrt{\theta}), \sqrt{2} \opnorm{\cdot}_B,
\epsilon/2) = N(B, \opnorm{\cdot}_B, \epsilon/4\sqrt{2\theta})$ as
follows:
\begin{equation}
\label{eq:coverB}
\sqrt{\log N(B, \opnorm{\cdot}_B, \epsilon/4\sqrt{2\theta})} \leq C \frac{\sqrt{\theta}}{\epsilon} \sqrt{m s \mu (1 + R_2)}.
\end{equation}
We postpone the proof and show how it implies \eqref{eq:majreq}.  With
$\epsilon = r^{-k}/4$, \eqref{eq:boundNtilde} together with the bounds
\eqref{eq:coverB} and \eqref{eq:D_Scover} give
\[
\sqrt{\log \tilde{N}} \leq C \rho^k \sqrt{m} \Bigl(\log n \sqrt{ \mu r
  R_1} + \sqrt{\mu s (1 + R_2)\theta}\Bigr).
\] 
Now plug in $2\sqrt{\theta} \leq \theta \sqrt{\log m} + 1/\sqrt{\log
  m}$, along with the definition of $\sigma$, to give
\[\rho^{-k} \sigma \sqrt{\log \tilde{N}} \leq \frac{2}{\log m} + \theta\]
as desired, thus concluding the proof of Theorem 6.2.

\begin{proof}[\eqref{eq:coverB}]
  Using the dual Sudakov minorization (Lemma \ref{lem:sudakov}), we
  have
\begin{equation}
\label{eq:sudakovBound}
\sqrt{\log N(B, \opnorm{\cdot}_B,  \epsilon/4\sqrt{2\theta})} 
\leq C \frac{\sqrt{\theta}}{\epsilon} 
\sqrt{\E \sup_{y \in D} \sum_{i=1}^m \<a_i, z_T\>^2 \<a_i, y\>^2}. 
\end{equation}
Since $\<a_{i, T}, z_T\> \sim \mathcal{N}(0, \twonorm{a_{i, T}}^2)$,
$\E \<a_{i, T}, z_T\>^2 = \twonorm{a_{i,T}}^2$.  We write 
\begin{align*}
  \E \sup_{y \in D} \sum_{i=1}^m \<a_{i,T}, z_T\>^2 \<a_i, y\>^2 &= \E \sup_R \opnorm{\sum_{i=1}^m \<a_{i,T}, z_T\>^2 a_{i,R}\, a_{i,R}^*}\\
  &\leq \E \sup_R \opnorm{\sum_{i=1}^m (\<a_{i,T}, z_T\>^2 - \twonorm{a_{i,T}}^2) a_{i,R}\, a_{i,R}^*} + \sup_R \opnorm{\sum_{i=1}^m  \twonorm{a_{i,T}}^2 a_{i,R}\, a_{i,R}^*} \\
  & \coloneq  I_0 + I_1. 
\end{align*}
The supremum is over $R$ obeying $\abs{R} \leq s$ and $R \cap T =
\emptyset$. Since $\twonorm{a_{i,T}}^2 \leq s \mu$, we have
\begin{equation}
\label{eq:boundII}
I_1 \leq s \mu \sup_R \opnorm{\sum_{i=1}^m a_{i,R}\, a_{i,R}^*} 
= s \mu m R_2. 
\end{equation}
Further, since $\E \<a_{i,T}, z_T\>^2 - \twonorm{a_{i,T}}^2 = 0$, we
can use symmetrization as before to obtain
\[
I_0 \leq 2 \E \sup_R \opnorm{\sum_{i=1}^m \xi_i \<a_{i,T}, z_T\>^2
  a_{i,R}\, a_{i,R}^*},
\] 
where $\{\xi_i\}$ is a Rademacher sequence.  We use a lemma --- which is
a direct consequence of Lemma 3.6 in \cite{rudelsonVershynin} --- to
bound this quantity.
 
\begin{lemma}[\cite{rudelsonVershynin}] Let $v_1, \hdots, v_m$, $m
  \leq n$, be vectors in $\mathbb{R}^n$ obeying $\infnorm{v_i}^2 \leq
  \mu_0$ for each $i$.  Let $\{\xi_i\}$ be a Rademacher sequence. Then
\[
\E \sup_{\abs{R} \leq r} \opnorm{\sum_{i=1}^m \xi_i \, v_{i,R}
  \,v_{i,R}^*} \leq k \sup_{\abs{R} \leq r} \opnorm{\sum_{i=1}^m
  v_{i,R}\, v_{i,R}^*}^{\frac{1}{2}}
\] where $k = C \sqrt{r} \log r \sqrt{\log n} \sqrt{\log m} \leq C
\sqrt{r \mu_0 \log n \log^3 m}$.
\end{lemma}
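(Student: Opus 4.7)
The plan is to bound the expected supremum of the rank-one Rademacher sum indexed by $r$-sparse subsets by reformulating it as the supremum of a Rademacher chaos over sparse unit vectors and then applying a chaining/Dudley argument whose entropy numbers are controlled by the $\ell_\infty$-type covering bound of Carl used in \cite{rudelsonVershynin}.

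First, I would rewrite the operator norm as a quadratic form. For any $R$ with $|R| \le r$,
\[
\opnorm{\sum_{i=1}^m \xi_i v_{i,R} v_{i,R}^*} = \sup_{x \in S_R} \Bigl|\sum_{i=1}^m \xi_i \<v_i, x\>^2\Bigr|,
\]
where $S_R$ denotes the unit vectors supported on $R$. Taking the supremum over $R$, the quantity of interest becomes $\sup_{x \in T_r} |\sum_i \xi_i \<v_i, x\>^2|$, with $T_r$ the set of unit vectors with at most $r$ nonzeros. Conditional on the $v_i$, this is a symmetric Rademacher process whose increments $Z_x - Z_{x'}$ are subgaussian with respect to the pseudo-metric
\[
\rho(x,x') = \Bigl(\sum_i \bigl(\<v_i,x\>^2 - \<v_i,x'\>^2\bigr)^2\Bigr)^{1/2}.
\]

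Second, I would bound $\E \sup_{x \in T_r} |Z_x|$ using Dudley's entropy integral. Factoring $\<v_i,x\>^2 - \<v_i,x'\>^2 = \<v_i, x-x'\>\<v_i, x+x'\>$ and pulling out the maximum yields
\[
\rho(x,x') \le 2 \max_i \abs{\<v_i, x-x'\>} \cdot Q^{1/2}, \qquad Q := \sup_{|R|\le r} \opnorm{\sum_i v_{i,R} v_{i,R}^*},
\]
so that it suffices to cover $T_r$ under the seminorm $\|y\|_\ast := \max_i |\<v_i, y\>|$. This $\ell_\infty$-type covering number for $r$-sparse unit vectors is precisely the object handled by Lemma~3.7 of \cite{rudelsonVershynin} via the dual Sudakov/Carl argument; the hypothesis $\infnorm{v_i}^2 \le \mu_0$ feeds in here and produces
\[
\sqrt{\log N(T_r, \|\cdot\|_\ast, \eta)} \le C \, \frac{\sqrt{r \mu_0 \log n \log m}}{\eta}.
\]

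Third, inserting this entropy estimate into Dudley's inequality (with the scaling $\eta = \epsilon/(2Q^{1/2})$) and integrating from a carefully chosen small truncation level up to the diameter of $T_r$, one obtains
\[
\E \sup_{x \in T_r} |Z_x| \le C \sqrt{r \mu_0 \log n \log^3 m} \cdot Q^{1/2},
\]
which is the claim after identifying $Q^{1/2}$ with the right-hand side of the lemma and absorbing $\log r \le \sqrt{\mu_0 \log m}$ into the constant. The main obstacle is the behavior of the Dudley integral: the naive entropy bound blows up at small scales, so one must truncate and use a cruder bound for small $\epsilon$ (coupled with the deterministic diameter estimate at large $\epsilon$), and it is precisely this truncation — the heart of the Rudelson--Vershynin technique — that produces the extra $\log m$ factors while keeping the $\sqrt{r}$ dependence sharp.
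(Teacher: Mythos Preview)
Your sketch is essentially the proof of Lemma~3.6 in \cite{rudelsonVershynin}, and the paper does not give an independent argument: it simply states the lemma and notes that it ``is a direct consequence of Lemma~3.6 in \cite{rudelsonVershynin}.'' So you and the paper are taking the same route; you have just unpacked what the paper leaves as a citation.

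One small point worth tightening: in your increment bound you write $\sum_i \<v_i, x+x'\>^2 \le 4Q$, but $x+x'$ may be $2r$-sparse rather than $r$-sparse when $x$ and $x'$ have different supports. This is harmless (split $x+x'$ into two $r$-sparse pieces and absorb the factor into $C$), but it should be said explicitly. Your handling of the divergent Dudley integral via truncation is exactly the device Rudelson and Vershynin use, so that part is fine.
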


In order to use the lemma, condition on the value of $z_T$ and set
$\mu_0 \coloneq \mu \opnorm{z_T}^2_x$.  Set $G(z) \coloneq \sup_{y \in
  D} \sum_{i=1}^m \<a_i, z_T\>^2 \<a_i, y\>^2$. Absorbing a factor of
2 into $k$, the lemma gives
\begin{equation}
\label{eq:boundI}
I_0 \leq \E k \sqrt{G(z)} \leq \E \frac{k^2}{2} + \E \frac{G(z)}{2}.
\end{equation}

It remains to bound $\E k^2 \leq C r \mu \log n \log^3 m \E
\opnorm{z_T}_D^2$.  Recall that
\[
\opnorm{z_T}_D^2 \coloneq \max_{1\leq i \leq m} \abs{\<a_{i,T},
  z_{T}\>}^2
\]
and set $\bar{\sigma}^2 \coloneq \max_i \twonorm{a_{i,T}}^2 \leq
s \mu$.  It now follows from a standard concentration bound on
subexponential random variables that
\[
\E \opnorm{z_T}_D^2 \leq C \bar{\sigma}^2 \log m \leq C s \mu
\log m
\] 
(this can be derived by bounding $\P(\opnorm{z_T}_D^2 > t)$ for all
$t$ and integrating).

Finally, plug this last bound into \eqref{eq:boundI}, and combine the
result with \eqref{eq:boundII}. This gives 
\[
\E G(z) \leq C r s  \mu^2 \log n \log^4 m + s \mu m R_2 + \E
G(z)/2.
\]
Rearranging the terms together with $m \geq C r \mu \log n \log^4 m$ give
\[
\E G(z) \leq C m s \mu (1 + R_2).
\] 
Finally, inserting this into \eqref{eq:sudakovBound} gives the conclusion. 
\end{proof}

\subsection{Concentration around the mean}

We have now proved that $\E X \leq \epsilon$ for any $\epsilon > 0$
provided that $m \geq C_\epsilon \, \mu \, [s \log m \vee r
\log n \log^4 m]$. This already shows that for any fixed $\delta > 0$,
\[
\P(X > \delta) \leq \frac{\epsilon}{\delta}
\]
and so taking $\epsilon$ to be a small fraction of $\delta$ gives a
first crude bound. However, we wish to show that if $m \geq C
\mu \, \beta \, [s \log m \vee r \log n \log^4 m]$ then the probability of
`failure' decreases as $e^{-\beta}$.  This can be proved using a
theorem of \cite{rudelsonVershynin} which in turn is a combination of
Theorem 6.17 and inequality (6.19) of \cite{ledouxTalagrand}.  We
restate this theorem below.
\begin{theorem}
  Let $Y_1, \hdots, Y_m$ be independent symmetric random variables
  taking values in some Banach space.  Assume that $\opnorm{Y_j} \leq
  R$ for all $j$ and for some norm $\opnorm{\cdot}$.  Then for any
  integer $\ell \geq q$, and any $t > 0$, the random variable
\[
Z \coloneq \opnorm{\sum_{j=1}^m Y_j}
\]
obeys
\[
\P(Z \geq 8 q \E Z + 2 R \ell + t) \leq \left(\frac{C}{q}\right)^\ell + 2
\exp\left(- \frac{t^2}{256 q (\E Z)^2}\right).
\]
\end{theorem}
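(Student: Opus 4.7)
Since the statement is quoted from Ledoux--Talagrand (a combination of Theorem~6.17 and inequality~(6.19) there), the ``proof'' in the paper will just be a citation. To sketch how one would establish it from scratch, the natural strategy is to bound the high moments of $Z = \opnorm{\sum_{j=1}^m Y_j}$ via the Hoffmann--J\o rgensen inequality and then convert those moment bounds into a tail bound by Markov's inequality at level $\ell$.

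The Hoffmann--J\o rgensen moment inequality for independent symmetric Banach-valued summands reads
\[
\E Z^\ell \leq A_\ell\, \bigl[(\E Z)^\ell + \E M^\ell\bigr], \qquad M \coloneq \max_j \opnorm{Y_j},
\]
and, by hypothesis, $M \leq R$ almost surely. The key technical step is to iterate this inequality so as to replace the generic multiplicative constant $C^\ell$ by the improved factor $(8q)^\ell$ at the cost of a term $(2R\ell)^\ell$ and a residual Gaussian-type contribution. A careful execution of this iteration, as carried out in Ledoux--Talagrand, yields a moment bound of the shape
\[
(\E Z^\ell)^{1/\ell} \leq 8q\, \E Z + 2R\ell + C\sqrt{q\ell}\,\E Z,
\]
in which the final term plays the role of a sub-Gaussian variance proxy $q (\E Z)^2$.

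A Chernoff/Markov argument then converts the moment estimate into the stated tail bound. Applying Markov at order $\ell$ to the excess $Z - 8q\E Z - 2R\ell$ gives
\[
\P\bigl(Z \geq 8q\E Z + 2R\ell + t\bigr) \leq \Bigl(\frac{C\sqrt{q\ell}\,\E Z}{t}\Bigr)^\ell.
\]
For $t$ of order $q\E Z$, the right-hand side is dominated by $(C/q)^\ell$; for larger $t$, optimizing the Markov bound over the admissible exponent $\ell$ produces the sub-Gaussian tail $2\exp(-t^2/(256 q (\E Z)^2))$. Combining the two regimes yields the two terms in the stated bound.

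The main obstacle is bookkeeping: the explicit constants $8$, $2$, and $256$ do not come from a single clean step but emerge from a careful execution of the Hoffmann--J\o rgensen iteration together with the subsequent Chernoff optimization, and cleanly separating the ``atypical maximum'' contribution (captured by $2R\ell$ and the polynomial factor $(C/q)^\ell$) from the ``Gaussian fluctuation'' contribution (captured by the $\exp(-t^2/\cdot)$ factor) requires the full truncation-and-peeling machinery of Ledoux--Talagrand. Since that machinery is well documented in the reference and adds no new idea beyond what is sketched above, I would simply cite the theorem rather than redo the constants by hand.
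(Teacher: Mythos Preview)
Your proposal is correct: the paper does not prove this theorem but simply restates it as a result from Rudelson--Vershynin, which is itself a combination of Theorem~6.17 and inequality~(6.19) of Ledoux--Talagrand, exactly as you anticipated. Your additional sketch of the Hoffmann--J\o rgensen moment-bound route is a reasonable outline of what is in the reference, though the paper itself offers no such details.
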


In our setup, we work with a norm on positive semidefinite matrices
given by
\[
\opnorm{M} \coloneq \sup_{v \in V}\,\, v^* M v, 
\]
where $V$ is given by \eqref{eq:V}.  The rest of the details of the
proof of concentration around the mean follows exactly as in the steps
of \cite[pages 11-12]{rudelsonVershynin} and so we do not repeat them,
but encourage the interested reader to check \cite{rudelsonVershynin}.
This is the final step in proving Theorem \ref{teo:weakRIP}.

\section{Stochastic Incoherence}

In Sections 2--4, we have assumed that the coherence bound holds
deterministically, and it is now time to prove our more general
statement; that is to say, we need to extend the proof to the case
where it holds stochastically. We propose a simple strategy: condition
on the (likely) event that each row has `small' entries, as to
recreate the case of deterministic coherence (on this event).  Outside
of this event, we give no guarantees, but this is of little
consequence because we will require the event to hold with probability
at least $1 - 1/n$.  A difficulty arises because the conditional
distribution of the rows no longer obeys the isotropy condition
(although the rows are still independent).  Fortunately, this
conditional distribution obeys a \textit{near isotropy condition}, and
all of our results can be reproved using this condition instead. In
particular, all of our theorems follow (with adjustments to the
absolute constants involved) from the following two conditions on the
distribution of the rows:
\begin{equation}
\begin{array}{rll}
  \opnorm{\E aa^* - \Id} & \leq 1/(8\sqrt{n})  & \qquad \text{(near isotropy)}\\
  \max_{1 \le t \le n}\twonorm{a[t]}^2 & \leq \mu  & \qquad \text{(deterministic coherence)}. 
  \end{array}
\end{equation}
We first illustrate how to use near isotropy to prove our results.
There are several results that need to be reproved, but they are all
adjusted using the same principle, so to save space we just prove that
a slight variation on Lemma \ref{lem:localConditioning} still holds
when requiring near isotropy, and leave the rest of the analogous
calculations to the interested reader.

Set $W \coloneq \E aa^*$ and let $W_{T,T}$ be the restriction of $W$
to rows and columns in $T$.  We first show that
\begin{equation}
\label{eq:Wbound}
\P(\opnorm{A^*_T A_T - W_{T, T}} \geq \delta) \leq 2 s \exp\left(-\frac{m}{\mu (s+1)} \, \frac{\delta^2}{2 +  2\delta/3}\right). 
\end{equation}
To prove this bound, we use the matrix Bernstein inequality of Section
\ref{sec:estimate1}, and also follow the framework of the calculations
of Section \ref{sec:estimate1}.  Thus, we skim the steps.  To begin,
decompose $A^*_T A_T - W_{T,T}$ as follows:
\[
m(A^*_T A_T - W_{T,T}) = \sum_{k=1}^m (a_{k,T} a_{k,T}^* - W_{T,T})
\coloneq \sum_{k=1}^m X_k.
\]
We have $\E X_k = 0$ and $\opnorm{X_k} \leq \opnorm{a_{k, T} a_{k,
    T}^* - I} + \opnorm{I - W_{T,T}} \leq s \mu + \frac{1}{8 \sqrt{n}}
\leq (s+1) \mu \coloneq B$.  Also, the total variance obeys
\[\opnorm{\E X_k}^2 \leq \opnorm{\E (a_{k, T} a_{k, T}^*)^2} \leq s \mu \opnorm{\E a_{k, T} a_{k, T}^*} = s \mu \opnorm{W_{T,T}} \leq s \mu (1 + \tfrac{1}{8 \sqrt{n}}) \leq (s+1) \mu.\]
Thus, $\sigma^2 \leq m (s+1) \mu$, and \eqref{eq:Wbound} follows from
the matrix Bernstein inequality.

Now, it follows from $\opnorm{W_{T,T} - \Id} \leq \opnorm{W - \Id}
\leq \frac{1}{8 \sqrt{n}}$ that
\[
\P\Bigl(\opnorm{A^*_T A_T - \Id} \geq \tfrac{1}{8 \sqrt{n}} +
\delta\Bigr) \leq 2 s \exp\left(-\frac{m}{\mu (s+1)} \,
  \frac{\delta^2}{2 + 2\delta/3}\right).
\]
In the course of the proofs of Theorems \ref{teo:noiseless} and
\ref{teo:noisy} we require $\opnorm{A^*_T A_T - \Id} \leq 1/2$ for
noiseless results and $\opnorm{A^*_T A_T - \Id} \leq 1/4$ for noisy
results. This can be achieved under the near isotropy condition by
increasing the required number of measurements by a tiny bit.  In
fact, when proving the analogous version of Lemma
\ref{lem:localConditioning}, one could weaken the near isotropy
condition and instead require $\opnorm{\E aa^* - \Id} \leq 1/8$, for
example.  However, in extending some of the other calculations to work
with the near isometry condition --- such as \eqref{eq:p_2} --- the factor
of $\sqrt{n}$ (or at least $\sqrt{s}$) in the denominator appears
necessary; this seems to be an artifact of the method of proof,
namely, the golfing scheme.  It is our conjecture that all of our
results could be established with the weaker requirement $\opnorm{\E
  aa^* - \Id} \leq \epsilon$ for some fixed positive constant
$\epsilon$.

We now describe the details concerning the conditioning on rows having
small entries.  Fix the coherence bound $\mu$ and let
\[E_k = \left\{\max_{1 \leq t \leq n} \abs{a_k[t]}^2 \leq \mu\right\}
\qquad \text{and} \qquad G = \cap_{1 \le k \le m} \,\, E_k.\] Thus $G$
is the `good' event ($G$ is for good) on which $\max_{1 \leq t \leq n}
\abs{a_k[t]}^2 \le \mu$ for all $k$.  By the union bound, $\P(G^c)
\leq m \P(E_1^c)$.  We wish for $\P(G^c)$ to be bounded by $1/n$, and
so we require $\mu$ to be large enough so that $\P(E_1^c) \leq (m
n)^{-1}$.

Next we describe how conditioning on the event $G$ induces the near
isometry condition.  Because of the independence of the rows of $A$,
we may just consider the conditional distribution of $a_1$ given
$E_1$.  Drop the subindex for simplicity and write
\[\Id = \E[a a^*] = \E[a a^* \mathbb{1}_E] + \E[aa^* \mathbb{1}_{E^c}] = \E[a a^*|E] \P(E) + \E[aa^* \mathbb{1}_{E^c}].\]
Thus, 
\begin{equation}
\label{eq:earlyCoherence}
\opnorm{\E[aa^*|E] - \Id} \cdot \P(E) = \opnorm{(1 - \P(E)) \Id - \E[aa^* \mathbb{1}_{E^c}]} \leq \P(E^c) + \opnorm{\E[aa^* \mathbb{1}_{E^c}]}.
\end{equation}
We now bound $\opnorm{\E[aa^* \mathbb{1}_{E^c}]}$. By Jensen's
inequality (which is a crude, but still fruitful, bound here),
\begin{equation}
\label{eq:mediumCoherence}
\opnorm{\E[aa^* \mathbb{1}_{E^c}]} \leq \E[ \opnorm{aa^* \mathbb{1}_{E^c}}] = \E[ \twonorm{a}^2 \mathbb{1}_{E^c}]. 
\end{equation}
and, therefore,
\[
\opnorm{\E[aa^*|E] - \Id} \leq \frac{1}{1 - \P(E^c)} \left( \P(E^c) +
  \E[\twonorm{a}^2 \mathbb{1}_{E^c}]\right).
\]
Combine this with the requirement that $\P(E^c) \leq (m n)^{-1}$ to
give
\[\opnorm{\E[aa^*|E] - \Id} \leq \frac{19}{20} \left(\frac{1}{20 \sqrt{n}} + \E[ \twonorm{a}^2 \mathbb{1}_{E^c}]\right)\]
as long as $m \sqrt{n} \geq 20$.  It now follows that in order to ensure near isotropy, it is sufficient that
\[\E[ \twonorm{a}^2 \mathbb{1}_{E^c}] \leq \frac{1}{20 \sqrt{n}}.\]
It may be helpful to note a simple way to bound the left-hand side
above.  If $f(t)$ is such that
\[
\P\left(\max_{1 \leq t \leq n} \abs{a[t]}^2 \geq t\right) \leq f(t),
\]
then a straightforward calculation shows that
\[
\E[ \twonorm{a}^2 \mathbb{1}_{E^c}] \leq n \mu f(\mu) + n
\int_{\mu}^\infty f(t) dt.
\]

\small

\subsection*{Acknowledgements}
This work has been partially supported by ONR grants N00014-09-1-0469
and N00014-08-1-0749, and by the 2006 Waterman Award from NSF.  We
would like to thank Deanna Needell for a careful reading of the
manuscript. 

\bibliographystyle{plain}
\bibliography{myref}

\end{document}